\documentclass[12pt]{iopart}

\usepackage{iopams}
\expandafter\let\csname equation*\endcsname\relax
\expandafter\let\csname endequation*\endcsname\relax

\pdfminorversion=4

\usepackage{amsfonts,amsmath,amssymb}
\usepackage[english]{babel}
\usepackage[normalem]{ulem}

\usepackage{amsthm}
\usepackage{graphicx}
\usepackage{dsfont}
\usepackage[caption=false]{subfig}
\usepackage{epsfig}
\usepackage{ulem}
\usepackage{tikz}
\usepackage{mathrsfs}
\usepackage{url}

\usepackage[utf8]{inputenc}
\usepackage{placeins}

\usepackage{pgfplots}
\usepackage{pgffor}
\usetikzlibrary{plotmarks}
\pgfplotsset{compat=1.8}

\usepackage{bbold}             
\usepackage{bm}
\usetikzlibrary{shapes,arrows}
\usetikzlibrary{matrix}
\usetikzlibrary{calc,patterns,angles,quotes,arrows,automata}
\graphicspath{ {./imgs/}}

\usepackage[linesnumbered,noend]{algorithm2e}
\usepackage{algorithmic}
\RestyleAlgo{ruled}
\SetKwInOut{Input}{Input}
\SetKwInOut{Output}{Output}
\SetKwInOut{Parameters}{Parameters}
\SetKwIF{If}{ElseIf}{Else}{if}{:}{else if}{else}{endif}

\theoremstyle{definition}

\newtheorem{lemma}{Lemma}
\newtheorem{proposition}{Proposition}
\newtheorem*{proposition*}{Proposition}

\newtheorem{theorem}{Theorem}
\newtheorem{corollary}{Corollary}
\newtheorem{definition}{Definition}

\newcommand{\one}{\mathds{1}}
\newcommand{\ket}[1]{\lvert #1 \rangle}
\newcommand{\bra}[1]{\langle #1 \rvert}
\newcommand{\braket}[2]{\langle #1 \lvert #2 \rangle}
\newcommand{\ketbra}[2]{\lvert #1 \rangle \langle #2 \rvert}
\newcommand{\iu}{\mathrm{i}\mkern1mu}

\definecolor{color1}{rgb}{0.0, 0.6056031704619725, 0.9786801190138923}
\definecolor{color2}{rgb}{0.8888735440600661, 0.435649148506399, 0.2781230452972764}
\definecolor{color3}{rgb}{0.24222393333911896, 0.6432750821113586, 0.304448664188385}
\definecolor{color4}{rgb}{0.7644400000572205, 0.4441118538379669, 0.8242975473403931}
\definecolor{color5}{rgb}{0.6755439043045044, 0.5556622743606567, 0.09423444420099258}

\begin{document}

\title{Quantum Wall States for Noise Mitigation and Eternal Purity Bounds}

\author{Miguel Casanova$^{1,*}$, Francesco Ticozzi$^{1,2}$}

\address{$^1$ Department of Information Engineering, University of Padova, Italy}
\address{$^2$ QTech center, University of Padova, Italy}
\address{$^*$ Corresponding author}
\eads{\mailto{casanovame@dei.unipd.it}$\dagger$ and \mailto{ticozzi@dei.unipd.it}}

\begin{abstract}
The present work analyzes state-stabilization techniques for decoupling a subsystem from environmental interactions.
The proposed framework uses analytical and numerical tools to find an approximate decoherence-free subspace (DFS)
with enhanced passive noise isolation.
Active state-stabilizing control on a subsystem mediating dominant environmental interactions, which we call wall subsystem,
creates an effective quantum wall state.
The proposed method controls only the wall subsystem, leaving the logical subsystem untouched.
This simplifies logic operations in the protected subsystem, and makes it suitable for integration
with other quantum information protection techniques, such as dynamical decoupling (DD).
We demonstrated its effectiveness in enhancing the performance of selective or complete DD.
Under suitable conditions, our method maintains system purity above a threshold for all times,
achieving eternal purity preservation. Theoretical analysis links this behavior to the asymptotic
spectrum of the Hamiltonian when the control gain grows unbounded.
\end{abstract}

\noindent{\it Keywords\/}: Quantum Information Protection, Decoherence-Free Subspaces, Dynamical Decoupling

\submitto{Quantum Science and Technology}

\maketitle

\section{Introduction}

In the effort of creating quantum information processing devices able to move the technology past the NISQ era \cite{preskillQuantumComputingNISQ2018}, for many potential physical implementations the detrimental effect of noise still represents a crucial barrier. To avoid or mitigate this issue, several noise avoidance and noise suppression techniques have been proposed, supplementing efforts to engineer suitable physical layers.
The first class, which focuses on passive protection from errors by encoding information in decoherence-free subspaces and noiseless subsystems \cite{lidarDecoherenceFreeSubspacesQuantum1998, zanardiNoiselessQuantumCodes1997, benattiIrreversibleQuantumDynamics2003}, offers the advantage of having the manipulation of the encoded information unhindered by the presence of control sequences, or correction protocols. However, the presence of a viable noiseless code is granted only in presence of symmetries in the (dominant) coupling between the system and the environment, and the information might need to be stored in some states that are highly entangled with respect to the physical subsystems decomposition, which might thus be hard to prepare.
The noise suppression methods, such as dynamical decoupling \cite{violaDynamicalDecouplingOpen1999} and active error correction \cite{knillTheoryQuantumErrorcorrecting1997, knillTheoryQuantumError2000}, (see \cite{brunQuantumErrorCorrection2020} and references therein for a more comprehensive review) target the effect of noise effects via symmetrization or undoing the information leakage from a given code by employing typically fast and strong control actions, which have to be intertwined with the control actions needed to implement the desired information processing \cite{KodjastehDynamically2009}.
 
In this work, we offer a pathway toward the construction approximately noiseless codes by the use of active methods, where the controls limiting the effect of noise do not act directly on the encoded information. In this sense, the proposed method can be seen as a {\em hybrid} noise protection approach.
We proceed by first separating the degrees of freedom of the system of interest in two factor (virtual) subsystems: the {\em logical subsystem}, corresponding to the support of the encoded information, and the {\em wall subsystem}, mediating the leading interaction terms with the environment. The state and dynamics of the latter are then engineered to block such interactions by preparing and stabilizing a state that minimizes purity loss in the code subsystem, which we call the {\em wall state}.

The interaction between the system and its environment is assumed to be Hamiltonian, while the dynamics on the latter is allowed to include dissipative, time-homogenous dynamics in Lindblad form. That is, we consider a {\em structured} environment, which combines non-Markovian coupling with the system and is allowed to further interact with a memory-less reservoir, leading to Markovian dissipation terms.
The toolbox we develop includes: 
\begin{enumerate}
\item A numerical method, based on gradient-based Rimeannian optimization, for the optimal selection of the code and wall subsystem;

\item  A prescription for the selection of a special state of the interface subsystem, the {\em wall state}, to minimize purity loss on the code. This can be done either numerically or following an heuristic strategy, which is proved to be optimal in the qubit case;

\item  Multiple ways to actively stabilize the wall state and improve the lifetime of encoded information. We consider engineered Markovian dynamics in Lindblad form, zeno-inducing repeated measurements, and strong Hamiltonian driving.
\end{enumerate}

In building our methodologies, we also show that points 1. and 2. above can be interpreted as a way to find and stabilize approximate decoherence-free subspace, see \cite{ticozziQuantumInformationEncoding2010,wangNumericalMethodFinding2013, hamannApproximateDecoherenceFree2022} and references therein.
From another viewpoint, the method can be seen as an alternative way to enforce a selective quantum Dynamical Decoupling (DD) \cite{violaDynamicalDecouplingOpen1999,ticozziDynamicalDecouplingQuantum2006} of the code subsystem. Two differences with typical dynamical decoupling scenarios are to be remarked: first, we do not aim to decouple the whole physical system of interest, but only a portion of it, whose dimension is fixed by the QIP task at hand; second, we are only aiming to suppress the leading term of the interaction, similar to what is done e.g. in \cite{ticozziDynamicalDecouplingQuantum2006,maGeneralMethodologyDecoupling2011}.
Given this limitation, in the simulation section of the paper we investigate the interplay between our method and standard DD, revealing how the performance of the latter can be improved by combining with our wall-state techniques if some resonance conditions are avoided. A third connection emerges with the approach developed in \cite{khodjastehPointerStatesEngineered2011}, where pulsed sequences are used in non-Markovian open systems to effectively generate approximate fixed points, or {\em pointer states}. Our problem of interest can be seen as a generalization of the one considered there, where the target to be stabilized is not a single state but a whole subspace, or subsystem; the methods we use are also different, as we couple passive encodings with strong coherent or dissipative driving. 

In addition, we show how stabilizing the wall state by strong Hamiltonian driving can lead to a remarkable feature, not exhibited by the other incoherent control methods: the purity of the code system can be preserved above a certain threshold for arbitrarily long times,  giving rise to an effective {\em eternal purity bounds}. An explanation of the phenomenon in terms of the evolution of the coherences is provided, together with sufficient conditions for its emergence based on spectral gaps in the underlying Hamiltonian dynamics.

Some seminal ideas used in this work have been presented in \cite{casanovaStabilizingWallStates2024}, in a simplified setting where the logical and interface (wall) subsystems were assumed to be known and many technical details and proofs were missing. We here generalize, advance, and provide formal proofs for the results of that work, developing a complete theory and a {\em systematic toolobox} to wall-state engineering. In particular (each item highlights a key improvement with respect to the previous work), we: 
\begin{itemize}
    \item[(i)] Pose the problem for a general finite-dimensional controlled system coupled to an environment, possibly structured and/or with Markovian dissipation, via a general interaction Hamiltonian (Section \ref{sec:setup}; before we only considered a  logical-wall subdivision of the system to be given);  
    \item[(ii)] Compare and interpret the approach in the framework of decoherence-free subspaces (DFS), showing that {\em perfect wall states} are equivalent to DFSs for systems composed by $n$ identical subsystems (Section \ref{sec:dfs}): in this sense, our general method can be seen as a way to find the best approximate DFS one can engineer in a given system;
    \item[(iii)] Develop a method for {\em finding} optimized logical-wall subsystem decomposition, including a critical regularization functional (Section \ref{sec:subsystem}); 
    \item[(iv)] Analyze in detail the purity dynamics, and propose optimization problems to select the wall states that minimizes the initial purity acceleration (Section \ref{sec:state}), with an analytical solution for the qubit case; 
    \item[(v)] Demonstrate the performance of our wall state selection algorithms and different stabilization techniques in richer and more complex examples, including a {\em 5 spin-1/2 chain} with three-body interactions and a {\em central spin model} (Section \ref{sec:examples}); 
    \item[(vi)] Compare and integrate the method with dynamical-decoupling techniques (Section \ref{sec:dd});
\item[(vii)] Provide an in-depth  analysis and spectral conditions that guarantee the emergence of {\em eternal purity preservation}, a remarkable phenomenon one can observe, under suitable conditions, while stabilizing the wall state with a strong Hamiltonian driving (Section \ref{sec:eternal}).
\end{itemize}

\section{Setup}\label{sec:setup}

Consider a finite-dimensional quantum system with Hilbert space
${\cal H} = {\cal H}_\mathfrak{s} \otimes {\cal H}_\mathfrak{e}$, where ${\cal H}_\mathfrak{s}$ corresponds to
a controllable system of dimension $n_\mathfrak{s},$ and ${\cal H}_\mathfrak{e}$ to the environment of dimension
$n_\mathfrak{e}$.
We consider a decomposition of the system's Hamiltonian into local and interaction terms, i.e.
\begin{equation}
    H = H_\mathfrak{s} + H_\mathfrak{e} + H_\mathfrak{se},
\end{equation}
where $H_\mathfrak{s} = \check{H}_\mathfrak{s} \otimes \one_\mathfrak{e}$
and $H_\mathfrak{s} = \one_\mathfrak{s} \otimes \check{H}_\mathfrak{e}$.
In addition, we consider cases in which the evolution of the environment subsystem
includes dissipative effects described by a Markovian master equation \cite{goriniCompletelyPositiveDynamical1976, lindbladGeneratorsQuantumDynamical1976}. Therefore, the
state dynamics follows
\begin{equation}
\dot\rho(t) = {\cal L}(\rho) = - \iu [H, \rho(t)] + \sum_m {\cal D}_{L_m}(\rho),
\end{equation}
where ${\cal D}_{L_m}(\rho) = L_m \rho(t) L_m^\dagger - \frac{1}{2} \{L_m^\dagger L_m, \rho(t)\},$ where
 we only consider Lindblad operators of the following form,
\begin{equation}
L_m = \one_\mathfrak{s} \otimes L_m^\mathfrak{e}.
\end{equation}
This compactly models a structured environment $\cal H_\mathfrak{e} \otimes H_\mathfrak{m}$, where we traced out $\mathcal{H}_\mathfrak{m}$,
an infinite-dimensional subsystem that
induces Markovian dynamics in $\mathcal{H}_\mathfrak{e}$ and has no interaction with $\mathcal{H}_\mathfrak{s}$. 
Under these assumptions, the evolution of $\rho_\mathfrak{s}=\tr_\mathfrak{e}(\rho)$ becomes non unitary only due to its Hamiltonian coupling with the environment via $H_\mathfrak{se}.$

The first task of our work is to identify the subsystem of ${\cal H}_\mathfrak{s}$ subject to the dominant
terms of the interaction with ${\cal H}_\mathfrak{e}$. Once this subsystem has been identified,
we can decompose ${\cal H}_\mathfrak{s}$ into such a subsystem ${\cal H}_\mathfrak{w},$ which we shall call the {\em wall subsystem},
and its tensor complement $\cal H_\mathfrak{l}$ the {\em logical subsystem}, defined such that $\cal H_\mathfrak{s} = H_\mathfrak{l} \otimes H_\mathfrak{w}$.
This decomposition allows us to apply arbitrary control actions on ${\cal H}_\mathfrak{w}$,
even dissipative ones,  without directly affecting the information stored in the logical subsystem. We shall look for control actions such that the purity of the reduced state in ${\cal H}_\mathfrak{l}$ is maximized.

This setup allows for the implementation of quantum gates or other quantum information
processing protocols to ${\cal H}_\mathfrak{l}$ completely independently of the noise-suppressing
control protocol, which is applied to ${\cal H}_\mathfrak{w}$ instead, in what could be called a {{\em decoupled decoupling}} strategy.
In this way, we ``sacrifice'' the possibility of encoding quantum information 
in ${\cal H}_\mathfrak{w}$ and instead transform its state as a control resource. A state optimized for noise protection will be
called a {\em wall state.} 

\section{Motivating example: Ising chain}

Consider as a toy model the one-dimensional Ising model of length $N$ with longitudinal field.
Each of the spins in the chain has an associated Hilbert space $\mathcal{H}_i \simeq \mathbb{C}^2, i = 1, ..., N$,
whereas the whole chain has a Hilbert space ${\cal H} = \bigotimes_{i=1}^N {\cal H}_i$.
Suppose that we aim to use the spins up to the $j$-th to store quantum information that we
want to preserve, whereas the spins starting from $j+1$ are uncontrollable.
Therefore, we take the latter to act as
the environment, i.e. ${\cal H}_\mathfrak{s} = \bigotimes_{i=1}^j {\cal H}_i$,
${\cal H}_\mathfrak{e} = \bigotimes_{i=j+1}^N {\cal H}_i$ and $\cal H = H_\mathfrak{s} \otimes H_\mathfrak{e}$.
Let $\sigma_{x,y,z}$ denote the Pauli matrices, and define
\begin{equation}
\label{eq:Jz}
J_i^z = \one^{\otimes i-1} \otimes \frac{1}{2}\sigma_z \otimes \one^{\otimes N-i}.
\end{equation}
The system has the following Hamiltonian,
\begin{equation}
H = h \sum_{i=1}^N J_i^z + g_z \sum_{i=1}^{N-1} J_i^z J_{i+1}^z,
\end{equation}
which can then be decomposed into the following local and interaction terms
\begin{equation}
H_\mathfrak{s} = h \sum_{i=1}^j J_i^z + g_z \sum_{i=1}^{j-1} J_i^z J_{i+1}^z,
\end{equation}
\begin{equation}
H_\mathfrak{e} = h \sum_{i=j+1}^N J_i^z + g_z \sum_{i=j+1}^{N-1} J_i^z J_{i+1}^z,
\end{equation}
\begin{equation}
H_\mathfrak{se} = g_z J_j^z J_{j+1}^z.
\end{equation}

It is easy to see that for this model it suffices to use the $j$-th spin as wall subsystem in order to
decouple the two sides of the chain. More precisely, as above,
${\cal H}_\mathfrak{s} = {\cal H}_\mathfrak{l} \otimes {\cal H}_\mathfrak{w}$,
where ${\cal H}_\mathfrak{l} = \bigotimes_{i=1}^{j-1} {\cal H}_i$ and
${\cal H}_\mathfrak{w} = {\cal H}_j$. Under this setup, ${\cal H}_\mathfrak{l}$ may
be perfectly decoupled from the environment just by setting the initial state of
${\cal H}_\mathfrak{w}$ equal to either $\ket{0}_j$ or $\ket{1}_j$, where
$J_j^z \ket{0}_j = \ket{0}_j$ and $J_j^z \ket{1} = - \ket{1}$.
Let us fix $\ket{w} = \ket{0}_j$, then we can verify that if
$\rho(0) = \rho_\mathfrak{l} \otimes \ketbra{w}{w} \otimes \rho_\mathfrak{e}$, we have that
\begin{equation}
\rho(t) = e^{-\iu H t} \rho(0) e^{\iu H t} =
\tilde\rho_\mathfrak{l}(t) \otimes \ketbra{w}{w} \otimes \tilde\rho_\mathfrak{e}(t),
\end{equation}
where $\tilde\rho_\mathfrak{l} = U_\mathfrak{l}(t) \rho_\mathfrak{l} U_\mathfrak{l}(t)$,
$\tilde\rho_\mathfrak{e} = U_\mathfrak{e}(t) \rho_\mathfrak{e} U_\mathfrak{e}(t)$ and in turn
\begin{equation}
U_\mathfrak{l}(t) = (\prod_{i}^{j-1} e^{-\iu h J_i^z t}) (\prod_{i}^{j-2} e^{-\iu g_z J_i^z J_{i+1}^z})
e^{-\iu g_z J_{j-1}^z t}
\end{equation} and
\begin{equation}
U_\mathfrak{e}(t) = (\prod_{i=j+1}^{N} e^{-\iu h J_i^z t}) (\prod_{i=j+1}^{N-1} e^{-\iu g_z J_i^z J_{i+1}^z})
e^{-\iu g_z J_{j+1}^z t}.
\end{equation}

In regards to the evolution of $\ketbra{w}{w}$, we have that it is static, since
$U_\mathfrak{w}(t) \ketbra{w}{w} U_\mathfrak{w}(t)^\dagger = \ketbra{w}{w}$, where
\begin{equation}
U_\mathfrak{w}(t) = e^{-\iu (h + 2 g_z) J_j^z t}.
\end{equation}

It is clear that with this choice of initial state of the $j$-th spin, no correlations can occur
between any of the $i$-th and $k$-th spins, where $i < j$ and $k > j$, since it would
need to happen via the $j$-th spin, which experiences no dynamics. 

\section{Ideal Case: Perfect Decoupling via Wall States}\label{sec:dfs}
\subsection{Perfect wall states}
In the previous example, we saw how
with an appropriate choice of wall subsystem and wall state, one may perfectly
decouple the two subsystems without the need of any kind of additional control. In particular,
no control action is applied to the system where the information is stored, leaving it
fully available to other kinds of information processing controls.
This behavior motivates the following definition.

\begin{definition}
\label{def:perfectwall}
    Consider a system with Hilbert space $\cal H = H_\mathfrak{l} \otimes H_\mathfrak{w} \otimes H_\mathfrak{e}$
    and Hamiltonian $H$. Let the $\cal H_\mathfrak{w}$ partition be initialized to pure state $\ket{w}$
    and the total initial state be factorized $\rho_0 = \rho_\mathfrak{l} \otimes \ketbra{w}{w} \otimes \rho_\mathfrak{e}$.
    Then $\ket{w}$ is said to be a {\bf perfect wall state} if the reduced dynamics on the logical subsystem $\rho_\mathfrak{l}(t) = \tr_\mathfrak{we}(e^{-\iu H t} \rho_0 e^{\iu H t})$ are
    unitary for all initial $\rho_\mathfrak{l}$ and $\rho_\mathfrak{e}$.
\end{definition}

In the Ising chain example the choice of wall subsystem and state is trivial, while in general this might not be the case. Assuming for now that we consider systems such that
all the interactions between the logical subsystem and the environment are mediated by a subsystem, the behavior of the example can be generalized as follows.

\begin{proposition}[Perfect wall eigenstates]
\label{thm:perfectwall}
Consider a system with Hilbert space $\cal H = H_\mathfrak{l} \otimes H_\mathfrak{w} \otimes H_\mathfrak{e}$ with
Hamiltonian $$H = H_\mathfrak{l} + H_\mathfrak{w} + H_\mathfrak{e} + H_\mathfrak{lw} + H_\mathfrak{we},$$ where
$H_\mathfrak{l} =  H_L \otimes \one_\mathfrak{w} \otimes \one_\mathfrak{e}$,
$H_\mathfrak{e} =  \one_\mathfrak{l} \otimes \one_\mathfrak{w} \otimes H_E$,
$H_\mathfrak{w} =  \one_\mathfrak{l} \otimes H_W \otimes \one_\mathfrak{e}$,
$H_\mathfrak{lw} = \sum_j B_j \otimes C_j \otimes \one_\mathfrak{e}$ and
$H_\mathfrak{we} = \sum_j \one_\mathfrak{l} \otimes D_j \otimes E_j$. If there exists a representation such that  all of the $H_W$, $C_j$ and $D_j$ operators share a common eiegenstate, such state is a perfect wall state.
\end{proposition}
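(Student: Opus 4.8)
The plan is to show that a shared eigenvector $\ket{w}$ of $H_W$, the $C_j$'s and the $D_j$'s forces the subspace $\mathcal{V}:=\mathcal{H}_\mathfrak{l}\otimes\mathrm{span}\{\ket{w}\}\otimes\mathcal{H}_\mathfrak{e}$ to be invariant under $H$, with $H$ restricted to $\mathcal{V}$ splitting into two commuting terms that act separately on the logical and the environment factors — exactly the decoupled structure observed in the Ising example. Unitarity of $\rho_\mathfrak{l}(t)$ then follows by tracing out.

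Concretely, first I would fix a representation of $H_\mathfrak{lw}$ and $H_\mathfrak{we}$ for which $H_W\ket{w}=\lambda\ket{w}$, $C_j\ket{w}=c_j\ket{w}$ and $D_j\ket{w}=d_j\ket{w}$ for every $j$, and apply $H$ term by term to an arbitrary product vector $\ket{\phi}_\mathfrak{l}\otimes\ket{w}\otimes\ket{\chi}_\mathfrak{e}\in\mathcal{V}$. The local terms $H_\mathfrak{l}$, $H_\mathfrak{e}$ trivially preserve $\mathcal{V}$, and $H_\mathfrak{w}$ acts as the scalar $\lambda$; the key point is that on $\ket{w}$ the two interaction terms collapse to operators local to a single factor, namely $H_\mathfrak{lw}$ acts as $(\sum_j c_j B_j)\otimes\ket{w}\!\bra{w}\otimes\one_\mathfrak{e}$ and $H_\mathfrak{we}$ acts as $\one_\mathfrak{l}\otimes\ket{w}\!\bra{w}\otimes(\sum_j d_j E_j)$ on $\mathcal{V}$. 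Hence $\mathcal{V}$ is $H$-invariant and, identifying $\mathcal{V}\cong\mathcal{H}_\mathfrak{l}\otimes\mathcal{H}_\mathfrak{e}$, the restriction is $H|_{\mathcal{V}}=\check{H}_\mathfrak{l}\otimes\one_\mathfrak{e}+\one_\mathfrak{l}\otimes\check{H}_\mathfrak{e}$ with $\check{H}_\mathfrak{l}:=H_L+\sum_j c_j B_j$ and $\check{H}_\mathfrak{e}:=\lambda\one_\mathfrak{e}+H_E+\sum_j d_j E_j$, two operators that commute.

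Next I would lift this to the level of density matrices. Because $H$ is Hermitian and $H\mathcal{V}\subseteq\mathcal{V}$, it also holds that $H\mathcal{V}^\perp\subseteq\mathcal{V}^\perp$, so $e^{-\iu H t}$ is block diagonal with respect to $\mathcal{V}\oplus\mathcal{V}^\perp$, and its block on $\mathcal{V}$ equals $U_\mathfrak{l}(t)\otimes U_\mathfrak{e}(t)$ with $U_\mathfrak{l}(t)=e^{-\iu\check{H}_\mathfrak{l}t}$, $U_\mathfrak{e}(t)=e^{-\iu\check{H}_\mathfrak{e}t}$ (the $\ket{w}$ slot left fixed), precisely because $\check{H}_\mathfrak{l}$ and $\check{H}_\mathfrak{e}$ commute. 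The initial state $\rho_0=\rho_\mathfrak{l}\otimes\ketbra{w}{w}\otimes\rho_\mathfrak{e}$ is supported on $\mathcal{V}$, so $e^{-\iu Ht}\rho_0 e^{\iu Ht}=(U_\mathfrak{l}(t)\rho_\mathfrak{l}U_\mathfrak{l}(t)^\dagger)\otimes\ketbra{w}{w}\otimes(U_\mathfrak{e}(t)\rho_\mathfrak{e}U_\mathfrak{e}(t)^\dagger)$, and tracing out $\mathfrak{w}$ and $\mathfrak{e}$ leaves $\rho_\mathfrak{l}(t)=U_\mathfrak{l}(t)\rho_\mathfrak{l}U_\mathfrak{l}(t)^\dagger$, which is unitary and independent of $\rho_\mathfrak{e}$, so $\ket{w}$ meets Definition \ref{def:perfectwall}.

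There is no genuinely hard step: essentially all the content sits in the first computation, recognizing that a common eigenvector of $H_W$, $\{C_j\}$ and $\{D_j\}$ is exactly what turns both interaction Hamiltonians into operators local to $\mathfrak{l}$ and to $\mathfrak{e}$ respectively. The only points needing a little care are that the "representation" — the non-unique operator-sum decomposition of $H_\mathfrak{lw}$ and $H_\mathfrak{we}$ — must be chosen so that a single $\ket{w}$ serves all of $H_W$, $\{C_j\}$, $\{D_j\}$ at once, and that one is entitled to write the restricted propagator as a tensor product, which is where $H$-invariance of $\mathcal{V}$ combined with $[\check{H}_\mathfrak{l}\otimes\one_\mathfrak{e},\,\one_\mathfrak{l}\otimes\check{H}_\mathfrak{e}]=0$ is used. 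Note finally that the dissipative terms of the general model are irrelevant here, since the statement concerns only the purely Hamiltonian propagator $e^{-\iu H t}$.
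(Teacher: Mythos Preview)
Your proof is correct and follows essentially the same idea as the paper's: both exploit that the common eigenvector $\ket{w}$ makes the subspace $\mathcal{H}_\mathfrak{l}\otimes\mathrm{span}\{\ket{w}\}\otimes\mathcal{H}_\mathfrak{e}$ invariant under $H$, with the restricted dynamics factorizing over $\mathfrak{l}$ and $\mathfrak{e}$. The only organizational difference is that the paper first decomposes the full Hamiltonian as $H=\sum_i\lambda_{w,i}\tilde{H}_{\mathfrak{l},i}\otimes\ketbra{w}{w}\otimes\tilde{H}_{\mathfrak{e},i}+\sum_k\tilde{H}_{\mathfrak{l},k}\otimes\tilde{H}^{w^\perp}_{\mathfrak{w},k}\otimes\tilde{H}_{\mathfrak{e},k}$, observes the two pieces commute, and factors the global propagator $U=U_wU_{w^\perp}$ before checking that each factor preserves the $\ket{w}$ slot; you instead go straight to $H$-invariance of $\mathcal{V}$, use Hermiticity to get block-diagonality of $e^{-\iu Ht}$, and compute the restricted block explicitly as $e^{-\iu\check{H}_\mathfrak{l}t}\otimes e^{-\iu\check{H}_\mathfrak{e}t}$. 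Your route is slightly more direct and makes the final unitarity claim for $\rho_\mathfrak{l}(t)$ fully explicit, whereas the paper closes with a brief appeal to ``no direct interaction'' between $\mathfrak{l}$ and $\mathfrak{e}$.
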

\begin{proof} 
Denote the common eigenstate of the $H_W$, $C_j$ and $D_j$ operators as $\ket{w}$.
Then, each of these operators has an eigen-decomposition of the following form:
\begin{equation}
X_k = \lambda_{w,k} \ketbra{w}{w} + \sum_{i \neq w} \lambda_{i,k} \Pi_{i,k} =
\lambda_{w,k} \ketbra{w}{w} + \tilde{H}^{w^\perp}_{\mathfrak{w},k},
\end{equation}
where $X_k\in\{\one_\mathfrak{w}, H_W, C_j, D_j, \forall j\}$, and
$\tilde{H}^{w^\perp}_{\mathfrak{w},k} \ket{w} = \Pi_{i,k} \ket{w} = 0, \forall i \neq w$, and
$\Pi_{i,k} \Pi_{j,k} = 0, \forall i \neq j$.
By substituting this decomposition back into the full Hamiltonian, we obtain
\begin{equation}
H = \sum_i \lambda_{w,i} \tilde{H}_{\mathfrak{l},i} \otimes \ketbra{w}{w} \otimes \tilde{H}_{\mathfrak{e},i} +
\sum_k \tilde{H}_{\mathfrak{l},k} \otimes \tilde{H}^{w^\perp}_{\mathfrak{w},k} \otimes \tilde{H}_{\mathfrak{e},k},
\end{equation}
by collecting all the $\ketbra{w}{w}$ terms together, and where
$\tilde{H}_{\mathfrak{l},k} \in \{\one_\mathfrak{l}, H_L, B_j, \forall j\}$ and
$\tilde{H}_{\mathfrak{e},k} \in \{\one_\mathfrak{e}, H_E, E_j,\forall j\}$.
Since $\tilde{H}^{w^\perp}_{\mathfrak{w},k} \ketbra{w}{w} = 0, \forall k$,
the two sums commute and we can write the evolution operator as
\begin{equation}
U = e^{-\iu H t} = U_w U_{w^\perp},
\end{equation}
where $U_w = e^{-\iu (\sum_i \lambda_{w,i} \tilde{H}_{\mathfrak{l},i} \otimes \ketbra{w}{w} \otimes \tilde{H}_{\mathfrak{e},i}) t}$ and
$U_{w^\perp} = e^{-\iu (\tilde{H}_{\mathfrak{l},k} \otimes \tilde{H}^{w^\perp}_{\mathfrak{w},k} \otimes \tilde{H}_{\mathfrak{e},k}) t}$.
Then, it is easy to verify that
\begin{equation}
U_{w^\perp} \sum_{ik} \alpha_{ik} \ket{i} \otimes \ket{w} \otimes \ket{k} = \sum_{ik} \beta_{ik} \ket{i} \otimes \ket{w} \otimes \ket{k},
\end{equation}
\begin{equation}
U_{w} \sum_{ik} \alpha_{ik} \ket{i} \otimes \ket{w} \otimes \ket{k} = \sum_{ik} \mu_{ik} \ket{i} \otimes \ket{w} \otimes \ket{k}.
\end{equation}
Therefore, the state $\ket{w}$ remains fixed. Finally, since there is no direct interaction between
$\cal H_\mathfrak{l}$ and $\cal H_\mathfrak{e}$, no entanglement can occur between these two subsystems
if the initial state is a product state such that $\tr_\mathfrak{le}(\rho) = \ketbra{w}{w}$.
\end{proof}

When a perfect wall state exists, we can interpret this passive information protection
in terms of a unitarily evolving subspace. This correspondence is
detailed in the next section.

\subsection{Perfect walls and DFS}

In many-body systems formed by a number of identical subsystems, the existence of a perfect wall state is equivalent to
the existence of a decoherence-free subspace. The latter are defined as follows \cite{lidarDecoherenceFreeSubspacesQuantum1998}:

\begin{definition}
Consider a Hilbert space $\cal H$ that may be decomposed as $\cal H \simeq H_\mathfrak{q} \oplus H_\mathfrak{r}$.
If any state $\rho$ that is initialized with support only on $\cal H_\mathfrak{q}$ evolves unitarily,
then $\cal H_\mathfrak{q}$ is said to be a Decoherence-Free Subspace (DFS).
\end{definition}

\noindent  Proposition \ref{thm:perfectwall} is closely related to the following well-known characterization of DFS \cite{lidarQuantumErrorCorrection2013,ticozziQuantumMarkovianSubsystems2008}.

\begin{theorem}
\label{thm:dfs}
Consider a bipartite system $\cal H = H_\mathfrak{s} \otimes H_\mathfrak{e}$ with Hamiltonian $H = H_\mathfrak{s} + H_\mathfrak{e} + H_\mathfrak{se}$.
Let $H_\mathfrak{se} = \sum_i A_i \otimes B_i$. If the following two conditions are satisfied, the system has a DFS
$\mathcal{H}_\mathfrak{q} = {\rm span}(\{\ket{\psi_k}\}_k)$
such that $\cal H_\mathfrak{s} \simeq H_\mathfrak{q} \oplus H_\mathfrak{r}$ (with unitary dynamics generated by $H_\mathfrak{s}$),
\begin{equation}
A_i \ket{\psi_k} = \lambda_i \ket{\psi_k},\ \forall i, \ket{\psi_k},
\end{equation}
\begin{equation}
{\rm supp}(H_\mathfrak{s} \ket{\psi}) \subseteq \mathcal{H}_\mathfrak{q},\ \forall \ket{\psi} \in \mathcal{H}_\mathfrak{q},
\end{equation}
i.e. if the operators $A_i$ have a common (degenerate) eigenspace that is left invariant by $H_\mathfrak{s}$.
\end{theorem}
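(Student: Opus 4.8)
The plan is to exhibit an invariant subspace of $H$, namely $\mathcal{H}_\mathfrak{q}\otimes\mathcal{H}_\mathfrak{e}$, on which the generator factorizes into commuting ``system'' and ``environment'' pieces, and then to trace out the environment. Throughout write $H_\mathfrak{s}=\check H_\mathfrak{s}\otimes\one_\mathfrak{e}$, $H_\mathfrak{e}=\one_\mathfrak{s}\otimes\check H_\mathfrak{e}$, let $P$ be the orthogonal projector of $\mathcal{H}_\mathfrak{s}$ onto $\mathcal{H}_\mathfrak{q}$, and $\bar P:=\one_\mathfrak{s}-P$.

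First I would check, using the two hypotheses, that $(\bar P\otimes\one_\mathfrak{e})\,H\,(P\otimes\one_\mathfrak{e})=0$: the $\check H_\mathfrak{e}$ term is killed by $\bar P P=0$; the invariance condition $\mathrm{supp}(H_\mathfrak{s}\ket{\psi})\subseteq\mathcal{H}_\mathfrak{q}$ kills the $\check H_\mathfrak{s}$ term; and for $H_\mathfrak{se}=\sum_i A_i\otimes B_i$ the eigenvalue condition gives $A_iP=\lambda_iP$, whence $\bar P A_i P=\lambda_i\bar P P=0$. Since $H$ is Hermitian, both $\mathcal{H}_\mathfrak{q}\otimes\mathcal{H}_\mathfrak{e}$ and its orthogonal complement are $H$-invariant, hence invariant under $e^{-\iu H t}$. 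Evaluating $H$ on a generic vector $\ket{\psi_k}\otimes\ket{\phi}$ and using the same two conditions, I would then read off
\[
  H\big|_{\mathcal{H}_\mathfrak{q}\otimes\mathcal{H}_\mathfrak{e}}
  = \check H_\mathfrak{q}\otimes\one_\mathfrak{e}
  \;+\; \one_\mathfrak{q}\otimes\Big(\check H_\mathfrak{e}+\textstyle\sum_i\lambda_i B_i\Big),
\]
where $\check H_\mathfrak{q}:=P\check H_\mathfrak{s}P$ restricted to $\mathcal{H}_\mathfrak{q}$ is Hermitian and, by the invariance condition, coincides with $\check H_\mathfrak{s}|_{\mathcal{H}_\mathfrak{q}}$. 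The two summands act on different tensor factors, so they commute, and therefore on the invariant subspace
\[
  e^{-\iu H t}\big|_{\mathcal{H}_\mathfrak{q}\otimes\mathcal{H}_\mathfrak{e}}
  = U_\mathfrak{q}(t)\otimes V_\mathfrak{e}(t),
  \qquad
  U_\mathfrak{q}(t)=e^{-\iu\check H_\mathfrak{q}t},
  \quad
  V_\mathfrak{e}(t)=e^{-\iu(\check H_\mathfrak{e}+\sum_i\lambda_i B_i)t}.
\]

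Next I would reconcile this with the DFS definition. I read ``support only on $\mathcal{H}_\mathfrak{q}$'' as $\rho_\mathfrak{s}(0)=\tr_\mathfrak{e}(\rho(0))$ having support contained in $\mathcal{H}_\mathfrak{q}$; since $(\bar P\otimes\one_\mathfrak{e})\rho(0)(\bar P\otimes\one_\mathfrak{e})\ge 0$ has vanishing partial trace, it must itself vanish, so $\rho(0)=(P\otimes\one_\mathfrak{e})\rho(0)(P\otimes\one_\mathfrak{e})$ is supported on $\mathcal{H}_\mathfrak{q}\otimes\mathcal{H}_\mathfrak{e}$ (this covers also arbitrarily system--environment-correlated initial states). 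Then $\rho(t)=(U_\mathfrak{q}(t)\otimes V_\mathfrak{e}(t))\,\rho(0)\,(U_\mathfrak{q}(t)\otimes V_\mathfrak{e}(t))^\dagger$, and since $V_\mathfrak{e}(t)$ acts only on $\mathcal{H}_\mathfrak{e}$, cyclicity of the partial trace gives
\[
  \rho_\mathfrak{s}(t)=\tr_\mathfrak{e}(\rho(t))=U_\mathfrak{q}(t)\,\rho_\mathfrak{s}(0)\,U_\mathfrak{q}(t)^\dagger,
\]
i.e.\ unitary dynamics on $\mathcal{H}_\mathfrak{q}$ generated by $H_\mathfrak{s}$ restricted to $\mathcal{H}_\mathfrak{q}$, which is the claim.

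This is a classical fact and the computation is routine; I expect the bookkeeping of the three tensor factors, rather than any genuine difficulty, to be the main thing to watch. The two points that deserve explicit care are the translation of the support hypothesis into a statement about the global state (the positivity/partial-trace argument above) and the cyclicity step used to drop $V_\mathfrak{e}(t)$ under $\tr_\mathfrak{e}$.
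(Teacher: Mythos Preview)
Your proof is correct and follows the standard route: establish invariance of $\mathcal{H}_\mathfrak{q}\otimes\mathcal{H}_\mathfrak{e}$ under $H$, factorize the restricted generator, and trace out the environment. The paper does not actually supply its own proof of this theorem; it is quoted as a well-known characterization with references to the literature, so there is nothing to compare against beyond noting that your argument is the classical one.
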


In the example from the previous section, we may identify $A_1 = J^z_j$ and
$B_1 = J^z_{j+1}$ and verify that Theorem \ref{thm:dfs} confirms the existence of a DFS:
It suffices to check that the eigenspaces of $A_1$ are left invariant by $H_\mathfrak{s}$.
We formalize the general connection between DFS and perfect wall states in the following result.

\begin{proposition}
\label{thm:perfectwalldfs}
    If ${\rm dim}({\cal H}_\mathfrak{s}) = {\rm dim}({\cal H}_\mathfrak{l}) n_\mathfrak{w},$ for some integer $n_\mathfrak{w}$, then a perfect wall state exists if and only if a DFS exists.
\end{proposition}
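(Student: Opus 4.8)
The plan is to establish both directions by translating between the tensor factorization ${\cal H}_\mathfrak{s}={\cal H}_\mathfrak{l}\otimes{\cal H}_\mathfrak{w}$ underlying a perfect wall state and the direct-sum decomposition ${\cal H}_\mathfrak{s}\simeq{\cal H}_\mathfrak{q}\oplus{\cal H}_\mathfrak{r}$ underlying a DFS, using the hypothesis $\dim({\cal H}_\mathfrak{s})=\dim({\cal H}_\mathfrak{l})\,n_\mathfrak{w}$ to guarantee the dimensions match: a perfect wall state $\ket{w}\in{\cal H}_\mathfrak{w}$ singles out the subspace ${\cal H}_\mathfrak{q}:={\cal H}_\mathfrak{l}\otimes\mathrm{span}(\ket{w})$, which has dimension exactly $\dim({\cal H}_\mathfrak{l})$, and conversely a DFS ${\cal H}_\mathfrak{q}$ of dimension $\dim({\cal H}_\mathfrak{l})=\dim({\cal H}_\mathfrak{s})/n_\mathfrak{w}$ can be ``completed'' to a tensor factorization by choosing any ${\cal H}_\mathfrak{w}\simeq\mathbb{C}^{n_\mathfrak{w}}$ with a distinguished vector $\ket{w}$ and any isomorphism ${\cal H}_\mathfrak{s}\simeq{\cal H}_\mathfrak{l}\otimes{\cal H}_\mathfrak{w}$ sending ${\cal H}_\mathfrak{q}$ onto ${\cal H}_\mathfrak{l}\otimes\ket{w}$.

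For the forward direction ($\Rightarrow$), suppose $\ket{w}$ is a perfect wall state. I would first argue that the reduced dynamics on ${\cal H}_\mathfrak{l}$ being unitary for \emph{all} $\rho_\mathfrak{l}$ and \emph{all} $\rho_\mathfrak{e}$ forces the full propagator $e^{-\iu H t}$ to preserve the subspace ${\cal H}_\mathfrak{q}\otimes{\cal H}_\mathfrak{e}$ with a factorized action $U_\mathfrak{l}(t)\otimes\ket{w}\!\bra{w}\otimes U_\mathfrak{e}(t)$ on it — this is essentially the content extracted from Definition~\ref{def:perfectwall} together with the requirement that no $\rho_\mathfrak{e}$ can induce non-unitarity (any leakage out of $\ket{w}$ or any entanglement generated between $\mathfrak{l}$ and $\mathfrak{we}$ would, for a suitable choice of $\rho_\mathfrak{e}$, show up as non-unitary reduced dynamics). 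Restricting to the environment in a fixed pure eigenstate and tracing it out, one reads off that ${\cal H}_\mathfrak{q}={\cal H}_\mathfrak{l}\otimes\ket{w}$ carries unitary dynamics generated by the effective system Hamiltonian, i.e. ${\cal H}_\mathfrak{q}$ is a DFS of ${\cal H}_\mathfrak{s}$. The converse direction ($\Leftarrow$) runs the argument backwards: given a DFS ${\cal H}_\mathfrak{q}$, by the characterization in Theorem~\ref{thm:dfs} the interaction operators $A_i$ act as scalars $\lambda_i$ on ${\cal H}_\mathfrak{q}$ and $H_\mathfrak{s}$ leaves ${\cal H}_\mathfrak{q}$ invariant; transporting this through the chosen isomorphism to ${\cal H}_\mathfrak{l}\otimes{\cal H}_\mathfrak{w}$, the subspace ${\cal H}_\mathfrak{l}\otimes\ket{w}$ is invariant under $H_\mathfrak{s}$ and each $A_i$ restricted there is $\lambda_i\,(\cdot)$, so $\ket{w}$ is a common eigenvector of all the wall-side operators appearing in the decomposition — which, by the mechanism in the proof of Proposition~\ref{thm:perfectwall}, makes $\ket{w}$ a perfect wall state.

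The main obstacle, and the step needing the most care, is the forward direction's claim that unitarity of the \emph{reduced} logical dynamics for all $\rho_\mathfrak{l},\rho_\mathfrak{e}$ actually implies the strong structural statement that $e^{-\iu Ht}$ preserves ${\cal H}_\mathfrak{q}\otimes{\cal H}_\mathfrak{e}$ in product form; a priori unitary reduced dynamics is weaker than a genuine tensor-product evolution. I would handle this by a standard Stinespring/complete-positivity argument: for each fixed pure $\rho_\mathfrak{e}=\ket{\phi}\!\bra{\phi}$ the map $\rho_\mathfrak{l}\mapsto\rho_\mathfrak{l}(t)$ is a channel with Kraus operators $K_{k}(t)=\bra{e_k}\otimes\bra{?}\,U(t)\,\ket{w}\otimes\ket{\phi}$; unitarity of a channel forces it to have a single Kraus operator (up to phase), and letting $\ket{\phi}$ range over a basis of ${\cal H}_\mathfrak{e}$ while also noting the environment may be purified, one pins down $U(t)$ on ${\cal H}_\mathfrak{l}\otimes\ket{w}\otimes{\cal H}_\mathfrak{e}$ to the required factorized form, in particular showing no population escapes $\ket{w}$. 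The remaining bookkeeping — checking dimensions with the hypothesis $\dim({\cal H}_\mathfrak{s})=\dim({\cal H}_\mathfrak{l})n_\mathfrak{w}$, and verifying that the ``complete $\ket{w}$ to a factorization'' construction in the converse is well defined — is routine linear algebra.
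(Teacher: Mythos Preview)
Your approach matches the paper's: establish the identification ${\cal H}_\mathfrak{q}\simeq{\cal H}_\mathfrak{l}\otimes\mathrm{span}(\ket{w})$ via the dimension hypothesis and transport the unitarity condition across it. The paper's proof is in fact much terser than yours --- it explicitly carries out only the DFS $\Rightarrow$ perfect wall direction (by dimension counting and constructing the tensor factorization so that ${\cal H}_\mathfrak{l}\otimes\ket{\psi_0}$ coincides with ${\cal H}_\mathfrak{q}$), and its closing ``Conversely'' sentence only verifies that states in ${\cal H}_\mathfrak{q}$ have wall marginal $\ketbra{\psi_0}{\psi_0}$, not the reverse implication proper.

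Your forward direction (perfect wall $\Rightarrow$ DFS) is therefore more careful than the paper's: you correctly flag that unitary reduced dynamics on ${\cal H}_\mathfrak{l}$ is a priori weaker than unitary reduced dynamics on ${\cal H}_\mathfrak{s}$, and the Kraus/Stinespring sketch you give is a reasonable way to close that gap, which the paper leaves open. In the converse direction, however, routing through Proposition~\ref{thm:perfectwall} is an unnecessary detour --- that result assumes a Hamiltonian with no direct $\mathfrak{le}$ coupling and a common-eigenvector structure on the wall, neither of which is supplied by Theorem~\ref{thm:dfs} in the general tensor factorization you construct. The direct argument suffices and is what the paper (implicitly) does: any $\rho_\mathfrak{s}$ supported on ${\cal H}_\mathfrak{q}={\cal H}_\mathfrak{l}\otimes\ket{w}$ has the form $\rho_\mathfrak{l}\otimes\ketbra{w}{w}$, the DFS property gives unitary evolution of $\rho_\mathfrak{s}(t)$ within ${\cal H}_\mathfrak{q}$, and tracing out the wall factor yields unitary $\rho_\mathfrak{l}(t)$, verifying Definition~\ref{def:perfectwall}.
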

\begin{proof}
Given a system with Hilbert space $\cal H = H_\mathfrak{s} \otimes H_\mathfrak{e}$
with a DFS $\cal H_\mathfrak{q}$ such that $\cal H \simeq (H_\mathfrak{q} \oplus H_\mathfrak{r}) \otimes H_\mathfrak{e}$,
and ${\rm dim}(\mathcal{H}_\mathfrak{r}) = n_\mathfrak{r} = n_\mathfrak{q} (n_\mathfrak{w} - 1)$,
where $n_\mathfrak{q} = {\rm dim}(\mathcal{H}_\mathfrak{q})$ and $n_\mathfrak{w} \in \mathbb{N}$.
In such a case, $\cal H_\mathfrak{r}$ can be interpreted as $n_\mathfrak{w}$ copies of $\cal H_\mathfrak{q}$,
i.e. ${\cal H}_\mathfrak{r} \simeq \bigoplus_{i=1}^{n_\mathfrak{w}-1} {\cal H}_\mathfrak{q}$.

In such a case, by choosing a basis $\{\ket{\phi_i}\} \otimes \{\ket{\psi_j}\}$ of $\mathcal{H}_\mathfrak{s}$,
such that ${\rm span}(\{\ket{\phi_i} \otimes \ket{\psi_0}\}_{i}) = \mathcal{H}_\mathfrak{q}$, we can identify
${\cal H}_\mathfrak{l} = {\rm span}\{ \ket{\phi_i}, i = 0, ..., n_\mathfrak{l}-1 \}$ and
${\cal H}_\mathfrak{w} = {\rm span}\{ \ket{\psi_i}, i = 0, ..., n_\mathfrak{w}-1 \}$.

Under this setting, we see that
\begin{equation*}
\cal H \simeq H_\mathfrak{l} \otimes H_\mathfrak{w} \otimes H_\mathfrak{e} \simeq H_\mathfrak{q} \oplus H_\mathfrak{r} \otimes H_\mathfrak{e},
\end{equation*}
and that $\ket{\psi_0}$ is a perfect wall state. More precisely, $\mathcal{H}_\mathfrak{l} \otimes {\rm span}(\ket{\psi_0}) \simeq \mathcal{H}_\mathfrak{q}$.
This means that initializing the state $\rho_\mathfrak{w}$ of $\cal H_\mathfrak{w}$
as the perfect wall state $\ketbra{\psi_0}{\psi_0}$ results in a system state $\rho_\mathfrak{s}$
such that $\rho_\mathfrak{s} \in \cal H_\mathfrak{q}$.
Conversely, if we initialize the state of the system inside of $\cal H_\mathfrak{q}$ and
then we trace out the logical system and the environment, we have $\tr_\mathfrak{le}(\rho) = \ketbra{\psi_0}{\psi_0}$.
\end{proof}

\section{General case: Finding an optimal wall subsystem}
\label{sec:subsystem}

Requiring that a perfect wall state (or a DFS) exists is too restrictive in realistic scenarios.
Therefore, in the following sections we propose a method to make these choices in general cases,
even when no perfect wall state exists, by means of Riemannian optimization.
This approach has been successfully used in a number of applications where optimization over fixed-dimension subspaces or unitary transformations is necessary, as done in e.g.\cite{casanovaFindingQuantumCodes2025, mansurogluVariationalHamiltonianSimulation2023, luchnikovRiemannianGeometryAutomatic2021}.

In order to treat general (approximate) cases, we first
search for a unitary $U = U_\mathfrak{s} \otimes \one_\mathfrak{e}$  that
allows us to identify the (virtual) degrees of freedom that mediate the dominant
interaction with the environment. After an optimal $\hat{U}$ is found, the controllable
subsystem can be decomposed into a logical and a wall subsystems,
${\cal H}_\mathfrak{s} \simeq {\cal H}_\mathfrak{l} \otimes {\cal H}_\mathfrak{w}$.
With this decomposition, we then look for a state of ${\cal H}_\mathfrak{w}$ that optimally
minimizes the interaction between ${\cal H}_\mathfrak{l}$ and ${\cal H}_\mathfrak{e}$.
In this section we propose an optimization problem to choose the optimal $\hat{U}$.
In light of the previous section, this can be seen as (the first step of) a method to {\em find approximate DFSs.} The next step shall be the selection of the wall step, which we will address later.

\subsection{Optimal wall subsystem via Riemannian optimization}

Let us denote the dimension of each subsystem ${\cal H}_\mathfrak{u}$ as $n_\mathfrak{u}$, where
$\mathfrak{u} \in \{\mathfrak{l}, \mathfrak{w}, \mathfrak{s}, \mathfrak{e}\}$.
In the same way, let $\one_\mathfrak{u}$ be the identity matrix of dimension $n_\mathfrak{u} \times n_\mathfrak{u}$.
Similarly, let $\{\sigma_i^\mathfrak{u}\}|_{i = 0}^{n_\mathfrak{u}^2 - 1}$ be
a basis of the space of Hermitian matrices of dimension $n_\mathfrak{u} \times n_\mathfrak{u}$,
such that $\sigma_0^\mathfrak{u} = \one_\mathfrak{u} / \sqrt{n_\mathfrak{u}}$ and $\tr(\sigma_i^\mathfrak{u} \sigma_j^\mathfrak{u}) = \delta_{ij}$.
Examples of such bases would be the Pauli or Gellmann matrices, including the identity
and normalized to one (instead of the usual normalization to two).
We then abbreviate the notation of tensor products of two or more basis elements as
$\sigma_{ij}^\mathfrak{uv} = \sigma_i^\mathfrak{u} \otimes \sigma_j^\mathfrak{v}$.
Finally, we implicitly specify the dimensions of an operator by choosing the
indices. For example, $H_\mathfrak{we}$ would stand for an operator of dimension $n_\mathfrak{w} n_\mathfrak{e} \times n_\mathfrak{w} n_\mathfrak{e}$.

The space of observables associated to Hilbert space ${\cal H}_\mathfrak{u}$ is denoted as
${\cal O}({\cal H}_\mathfrak{u})$, whereas the convex set of density matrices is denoted
as ${\mathfrak D}({\cal H}_\mathfrak{u})$.  Since the set of pure density matrices is the
set of extremal points of ${\mathfrak D}({\cal H}_\mathfrak{u})$ \cite{kimuraBlochVectorSpaceNLevel2005},
we use ${\rm extr}( {\mathfrak D}({\cal H}_\mathfrak{u}))$ to
denote it.

In order to identify the optimal choice of ${\cal H}_\mathfrak{w}$, we try to find
a unitary operator $U = U_\mathfrak{s} \otimes \one_\mathfrak{e}$, such that we can decompose
the system's Hamiltonian as
\begin{equation}
    U^\dagger H U = H_\mathfrak{l} + H_\mathfrak{w} + H_\mathfrak{e} + H_\mathfrak{lw} + H_\mathfrak{we} + \Delta
\end{equation}
where $\Delta = H_\mathfrak{le} + H_\mathfrak{lwe}$.
The choice of $U_\mathfrak{s}$ should be done such that it minimizes the norm of $\Delta$.
Let
\begin{equation}
    h_{ijk} = \tr(\sigma_{ijk}^\mathfrak{lwe} H),
\end{equation}
so that
\begin{equation}
    H = \sum_{ijk} h_{ijk} \sigma_{ijk}^\mathfrak{lwe}.
\end{equation}
Notice that we can then re-write $\Delta$ selecting only the relevant basis elements as follows:
\begin{equation}
    \Delta = \sum_{\substack{i,k > 0 \\ j \geq 0}}
    \tr(\sigma_{ijk}^\mathfrak{lwe} U^\dagger H U) \sigma_{ijk}^\mathfrak{lwe}.
\end{equation}
Using the orthonormality of the basis, we can now write
$\|\Delta\|^2 = \tr(\Delta^\dagger \Delta) = \tr(\Delta^2)=\sum_{\substack{i,k > 0 \\ j \geq 0}} | \tr(A_{ij} C_k) |^2$, where $A_{ij} = U_\mathfrak{s} \sigma_{ij}^\mathfrak{lw} U_\mathfrak{s}^\dagger$ and $C_k = \sum_{ab \geq 0} h_{abk} \sigma_{ab}^\mathfrak{lw}$. We can
then use this expression to define the following cost function
\begin{equation}
\label{eq:costfunc_sub}
    J(U_\mathfrak{s}) = \|\Delta\|^2
    = \sum_{\substack{i,k > 0 \\ j \geq 0}} | \tr(A_{ij} C_k) |^2
\end{equation}

We then propose the following optimization problem
on the special unitary group ${\rm SU}(n_\mathfrak{s})$,
\begin{equation}
\label{eq:optprob_sub}
    \hat{U}_\mathfrak{s} =
    \underset{U_\mathfrak{s} \in {\rm SU}(n_\mathfrak{s})}{\mathrm{argmin}} J(U_\mathfrak{s}).
\end{equation}
Notice that, at this point, no minimization is done on the terms $H_\mathfrak{lw}$ (nor $H_\mathfrak{we}$).
In the following subsection we introduce a regularization term to
limit any detrimental effect due to $H_\mathfrak{lw}$ that may occur due to the transformation $U_\mathfrak{s}$.

When equipped with the metric $\langle A, B \rangle = \tr(A^\dagger B)$,
the Lie group ${\rm SU}(n_\mathfrak{s})$ becomes a Riemannian manifold. 
This means that Problem \eqref{eq:optprob_sub} can be solved by
Riemannian gradient descent \cite{satoRiemannianOptimizationIts2021}.
The Riemannian gradient of \eqref{eq:costfunc_sub} can be computed following the
steps outlined in
\cite{wiersemaOptimizingQuantumCircuits2023, schulte-herbruggenGRADIENTFLOWSOPTIMIZATION2010},
resulting in
\begin{equation}
    {\rm grad} J(U_\mathfrak{s}) = -2 \sum_{\substack{i,k > 0 \\ j \geq 0}} \tr(A_{ij} C_k) [A_{ij}, C_k] U_\mathfrak{s},
\end{equation}
where $U_\mathfrak{s} \in {\rm SU}(n_\mathfrak{s})$ and ${\rm grad} J(U_\mathfrak{s}) \in T_{U_\mathfrak{s}} {\rm SU}(n_\mathfrak{s})$.
Therefore, the exponential map solving the gradient flow
$\dot{U}_\mathfrak{s}(t) = -{\rm grad} J(U_\mathfrak{s})$ is the following,
\begin{equation}
\label{eq:expmap_SUn}
    U_\mathfrak{s}(t + \epsilon) = {\rm exp}(-\Omega \epsilon) U_\mathfrak{s}(t)
\end{equation}
where $\Omega = -2 \sum_{\substack{i,k > 0 \\ j \geq 0}} \tr(A_{ij} C_k) [A_{ij}, C_k]$. More details on the implementation
of a numerical algorithm that solves this problem is given in \ref{sec:riemannianopt}.

After settling on a solution $\hat U_\mathfrak{s}$ and with $\hat U = \hat U_\mathfrak{s} \otimes \one_\mathfrak{e}$,
we use $\hat U$ to perform a change of basis. We call the new frame the ``$\hat{U}$-frame'',
and the Hamiltonian becomes
\begin{equation}
\begin{split}
\hat H =& \hat U^\dagger H \hat U = \sum_{ijk} g_{ijk} \sigma_{ijk}^\mathfrak{lwe} \\
=& \hat H_\mathfrak{l} + \hat H_\mathfrak{w} + \hat H_\mathfrak{e} + \hat H_\mathfrak{lw} + \hat H_\mathfrak{we} + \hat \Delta,
\end{split}
\end{equation}
where $g_{ijk} = \tr(\sigma_{ijk}^\mathfrak{lwe} \hat H)$. Equivalently,
by defining $\hat A_{ij} = \hat U_\mathfrak{s} \sigma_{ij}^\mathfrak{lw} \hat U_\mathfrak{s}^\dagger,$
we have that $g_{ijk}= \tr((\hat A_{ij} \otimes \sigma_k^\mathfrak{e}) H)$.

\subsection{Regularization}

The previous optimization problem allows us to find subsystem decompositions that
minimize the {\em direct} interaction between the logical subsystem and the environment.
However, these solutions are not unique, and some of them might lead to frames in which
the information loses coherence faster (in the absence of control) than in the original frame,
by increasing the interaction between $\cal{H}_\mathfrak{l}$ and $\cal{H}_\mathfrak{w}$.
In order to promote solutions in which the latter is kept low (or even provide
a full decoupling when possible), a regularization term can be added to the cost function.
In particular, we add the squared norm of $H_\mathfrak{lw}$, resulting in the following regularized
cost function,
\begin{equation}
J_{\rm reg}(U_\mathfrak{s}) = \|\Delta\|^2 + \eta_{\rm reg} \|H_\mathfrak{lw}\|^2,
\end{equation}
where $\eta_{\rm reg}$ is the regularization hyperparameter. Its Riemannian gradient is
\begin{equation}
\begin{split}
{\rm grad}J_{\rm reg}(U_s) =& -2 \sum_{\substack{i,k > 0 \\ j \geq 0}} \tr(A_{ij} C_k) [A_{ij}, C_k] U_\mathfrak{s}
-2 \sum_{i,j > 0} \tr(A_{ij} C_0) [A_{ij}, C_0] U_\mathfrak{s}.
\end{split}
\end{equation}
Then, the gradient flow is solved by the same exponential as before, substituting
$\Omega$ by $
\Omega_{\rm reg} = -2(
\sum_{\substack{i,k > 0 \\ j \geq 0}} \tr(A_{ij} C_k) [A_{ij}, C_k] +
\sum_{i,j > 0} \tr(A_{ij} C_0) [A_{ij}, C_0])
$.
The introduction of the regularization is further motivated analytically and tested in the example
given in Section \ref{sec:3spin_reg}.

\section{General case: Purity dynamics and the choice of wall state}
\label{sec:state}

After finding a good choice of wall subsystem, we proceed to search for an
optimal choice of initial wall state. The purity variation is used as quality index.
That is, given an initial product state $\rho = \rho_\mathfrak{l} \otimes \ketbra{w}{w} \otimes \rho_\mathfrak{e}$,
we search for a state $\ket{w}$ that minimizes the initial decay of the purity of the logical subsystem $\cal H_\mathfrak{l}$.

\subsection{Purity loss acceleration}

We wish to study the dynamics of the purity
$\gamma_\mathfrak{l}(\rho(t)) = \tr(\tr_\mathfrak{we}(\rho(t))^2) = \tr(\rho_\mathfrak{l}(t)^2)$ of the state
$\rho_\mathfrak{l}(t) = \tr_\mathfrak{we}(\rho(t)) \in {\mathfrak D}({\cal H}_\mathfrak{l})$,
where $\rho(t) \in {\mathfrak D}({\cal H})$. In particular, we are interested in the
initial dynamics of the purity when the initial state $\rho(0) = \rho_0$
is a product state. Our goal will then be to find a way to manipulate these dynamics
in order to extend the time that the purity remains sufficiently close to one,
as a way to reduce information loss to the environment.
The first derivative of the purity is as follows,
\begin{equation}
\dot\gamma_\mathfrak{l}(\rho) = 2 \tr(\tr_\mathfrak{we}({\cal L}(\rho))
\tr_\mathfrak{we}(\rho)).
\end{equation}
However, notice that if $\rho_0 = \rho_\mathfrak{l} \otimes \rho_\mathfrak{w} \otimes \rho_\mathfrak{e}$,
\begin{equation}
\tr_\mathfrak{we}([\sigma_{ijk}^\mathfrak{lwe}, \rho_0]) =
\tr(\sigma_j^\mathfrak{w} \rho_\mathfrak{w}) \tr(\sigma_k^\mathfrak{e} \rho_\mathfrak{e}) [\sigma_i^\mathfrak{l}, \rho_\mathfrak{l}],
\end{equation}
and if $L_m = \one_\mathfrak{l} \otimes \one_\mathfrak{w} \otimes L_m^\mathfrak{e}$,
we have $
\tr_\mathfrak{we}({\cal D}_{L_m}(\rho)) = 0.
$
This means that whenever the initial state is a product state, the initial
purity loss rate is equal to zero,
\begin{equation}
\begin{split}
\dot\gamma_\mathfrak{l}(\rho)|_{t=0} =& -2\iu \sum_{ijk} g_{ijk}
\tr(\sigma_j^\mathfrak{w} \rho_\mathfrak{w}) \tr(\sigma_k^\mathfrak{e} \rho_\mathfrak{e})
\tr([\sigma_i^\mathfrak{l}, \rho_\mathfrak{l}] \rho_l) = 0.
\end{split}
\end{equation}
Therefore, this quantity is not suitable for our aims, and we resort to the second derivative, which has
the following general expression:
\begin{equation}
\label{eq:2ndderiv}
\begin{split}
\ddot\gamma_\mathfrak{l}(\rho) =& 2 \tr(
\tr_\mathfrak{we}({\cal L} \circ {\cal L}(\rho)) \tr_\mathfrak{we}(\rho) +
\tr_\mathfrak{we}({\cal L}(\rho))^2
).
\end{split}
\end{equation}

Then, in order to find a function that depends only on the initial wall state $\rho_\mathfrak{w}$, we take the average of the initial
value of the second derivative $\ddot\gamma_\mathfrak{l}$ over all the initial pure logical states $\rho_\mathfrak{l}$ and fix
the initial state of the environment $\rho_\mathfrak{e}$.
This results in a function of $\rho_\mathfrak{w}$ only, whose expression is given in Proposition \ref{thm:avg_2nd_deriv}.

\begin{proposition}
\label{thm:avg_2nd_deriv}
Given a tripartite system with Hilbert space $\cal H = H_\mathfrak{l} \otimes H_\mathfrak{w} \otimes H_\mathfrak{e}$
and factorized initial state $\rho_0 = \rho_\mathfrak{l} \otimes \rho_\mathfrak{w} \otimes \rho_\mathfrak{e}$,
which follows Lindbladian dynamics with arbitrary Hamiltonian and noise operators $L_m = \one_\mathfrak{l} \otimes \one_\mathfrak{w} \otimes L^\mathfrak{e}_m$,
the second derivative of the purity $\gamma_\mathfrak{l}(\rho)$ has the following average initial value over all the reduced states $\rho_\mathfrak{l}$,
\begin{equation}
\label{eq:avg_2nd_deriv}
\langle \ddot\gamma_\mathfrak{l}(\rho)|_{t=0} \rangle_\mathfrak{l} =
4 \nu_n \sum_{\substack{i > 0 \\ bcjk \geq 0}} g_{ibc} g_{ijk}
(\tau_{c0} \tau_{0k} \tr(\sigma_b^\mathfrak{w} \rho_\mathfrak{w}) \tr(\sigma_j^\mathfrak{w} \rho_\mathfrak{w})
- \tau_{ck} \tr(\sigma_b^\mathfrak{w} \sigma_j^\mathfrak{w} \rho_\mathfrak{w})),
\end{equation}
where $\nu_n = \frac{1}{n_\mathfrak{l}} - \frac{1}{n_\mathfrak{l} (n_\mathfrak{l} + 1)}$ and
$\tau_{ck} = \tr(\sigma^\mathfrak{e}_c \sigma^\mathfrak{e}_k \rho_\mathfrak{e})$.
\end{proposition}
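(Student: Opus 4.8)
The plan is to start from the general expression \eqref{eq:2ndderiv} for $\ddot\gamma_\mathfrak{l}(\rho)$ and evaluate it at $t=0$ on the factorized state $\rho_0 = \rho_\mathfrak{l}\otimes\rho_\mathfrak{w}\otimes\rho_\mathfrak{e}$. Since we already know from the computation preceding the statement that $\tr_\mathfrak{we}(\mathcal{L}(\rho_0))$ does not vanish in general (only its trace against $\rho_\mathfrak{l}$ does, giving $\dot\gamma_\mathfrak{l}|_{t=0}=0$), both terms in \eqref{eq:2ndderiv} contribute. First I would expand $\mathcal{L}$ in the $\hat U$-frame basis, writing $H=\sum_{ijk} g_{ijk}\,\sigma_{ijk}^\mathfrak{lwe}$ and using that the dissipators $\mathcal{D}_{L_m}$ act as the identity on $\mathfrak{l}$ and annihilate under $\tr_\mathfrak{we}$ when hitting a factorized state; the key simplification is that $\tr_\mathfrak{we}(\mathcal{L}(\rho_0)) = -\iu\sum_{ijk} g_{ijk}\,\tr(\sigma_j^\mathfrak{w}\rho_\mathfrak{w})\tr(\sigma_k^\mathfrak{e}\rho_\mathfrak{e})\,[\sigma_i^\mathfrak{l},\rho_\mathfrak{l}]$, and the $i=0$ term drops because $[\sigma_0^\mathfrak{l},\cdot]=0$, which is why the final sum runs over $i>0$.

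Next I would handle the two terms of \eqref{eq:2ndderiv} separately. The term $\tr(\tr_\mathfrak{we}(\mathcal{L}(\rho_0))^2)$ is the square of the expression above, producing a double sum over $(i,j,k)$ and $(i',j',k')$ with a factor $\tr([\sigma_i^\mathfrak{l},\rho_\mathfrak{l}][\sigma_{i'}^\mathfrak{l},\rho_\mathfrak{l}])$; after averaging over $\rho_\mathfrak{l}$ this forces $i=i'$ (by orthonormality-type identities on the logical factor). The term $\tr(\tr_\mathfrak{we}(\mathcal{L}\circ\mathcal{L}(\rho_0))\,\rho_\mathfrak{l})$ requires applying $\mathcal{L}$ twice; the outer $\tr_\mathfrak{we}$ together with the trace against $\rho_\mathfrak{l}$ kills many cross terms, and the surviving contributions again pair two Hamiltonian coefficients $g_{ibc}g_{ijk}$ sharing the same logical index $i$, now with the wall factor appearing as $\tr(\sigma_b^\mathfrak{w}\sigma_j^\mathfrak{w}\rho_\mathfrak{w})$ (the operators act on the same slot before the partial trace) and the environment factor as $\tau_{ck}=\tr(\sigma_c^\mathfrak{e}\sigma_k^\mathfrak{e}\rho_\mathfrak{e})$. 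Collecting the two contributions yields the difference of the product term $\tau_{c0}\tau_{0k}\tr(\sigma_b^\mathfrak{w}\rho_\mathfrak{w})\tr(\sigma_j^\mathfrak{w}\rho_\mathfrak{w})$ and the correlated term $\tau_{ck}\tr(\sigma_b^\mathfrak{w}\sigma_j^\mathfrak{w}\rho_\mathfrak{w})$ displayed in \eqref{eq:avg_2nd_deriv}.

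The main obstacle is the logical-state averaging: I need the first and second moments of $\rho_\mathfrak{l}$ over the uniform (Haar-induced, Fubini–Study) measure on pure states of $\mathcal{H}_\mathfrak{l}$, i.e. $\langle\rho_\mathfrak{l}\rangle = \one_\mathfrak{l}/n_\mathfrak{l}$ and $\langle\rho_\mathfrak{l}\otimes\rho_\mathfrak{l}\rangle = (\one+\mathrm{SWAP})/(n_\mathfrak{l}(n_\mathfrak{l}+1))$, and I must push these through expressions of the form $\langle\tr([\sigma_i^\mathfrak{l},\rho_\mathfrak{l}]X)\tr([\sigma_{i'}^\mathfrak{l},\rho_\mathfrak{l}]Y)\rangle_\mathfrak{l}$ and $\langle\tr(Z\rho_\mathfrak{l})\rangle_\mathfrak{l}$ to see the constant $\nu_n = \frac{1}{n_\mathfrak{l}}-\frac{1}{n_\mathfrak{l}(n_\mathfrak{l}+1)}$ emerge and to confirm that only $i=i'$ survives with the right combinatorial weight. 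Once that averaging identity is in hand, the rest is bookkeeping: tracking which Pauli-type operators land on the same tensor slot (hence a product $\sigma_b^\mathfrak{w}\sigma_j^\mathfrak{w}$ versus a product of traces) and relabelling indices so the two terms of \eqref{eq:2ndderiv} combine into \eqref{eq:avg_2nd_deriv}. I would carry out the averaging lemma first as a standalone computation, then substitute.
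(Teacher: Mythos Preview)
Your proposal is correct and follows the same overall architecture as the paper's proof: expand $\mathcal{L}$ in the tensor basis, evaluate the two terms of \eqref{eq:2ndderiv} on the factorized state, verify that all dissipator contributions vanish, arrive at a pre-averaged expression bilinear in the $g_{abc}$, and then average the logical factor $\tr(\sigma_a^\mathfrak{l}\sigma_i^\mathfrak{l}\rho_\mathfrak{l}^2 - \sigma_a^\mathfrak{l}\rho_\mathfrak{l}\sigma_i^\mathfrak{l}\rho_\mathfrak{l})$ over pure $\rho_\mathfrak{l}$ to produce $\nu_n\delta_{ai}$.

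The one genuine methodological difference is in how the averaging is carried out. You invoke the Haar second-moment identity $\langle\rho_\mathfrak{l}\otimes\rho_\mathfrak{l}\rangle = (\one + \mathrm{SWAP})/\bigl(n_\mathfrak{l}(n_\mathfrak{l}+1)\bigr)$ and the relation $\tr(A\rho B\rho)=\tr\bigl(\mathrm{SWAP}\,(A\otimes B)(\rho\otimes\rho)\bigr)$, which immediately gives $\langle\tr(\sigma_a\rho\sigma_i\rho)\rangle = \delta_{ai}/\bigl(n_\mathfrak{l}(n_\mathfrak{l}+1)\bigr)$ for $a,i>0$ in one line. The paper instead proves two preparatory lemmas: it models a Haar-random pure state via a Dirichlet distribution on the squared amplitudes and independent uniform phases, computes $\langle\rho_{jk}\rangle$ and $\langle\rho_{jk}\rho_{cb}\rangle$ entrywise, and then establishes $\langle\tr(\sigma_i\rho\sigma_a\rho)\rangle = \delta_{ia}/\bigl(n_\mathfrak{l}(n_\mathfrak{l}+1)\bigr)$ by a case analysis over the symmetric, antisymmetric, and diagonal Gell-Mann matrices. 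Your route is considerably shorter and basis-independent; the paper's is more elementary but laborious. Both yield the same constant $\nu_n$ and the same Kronecker collapse $a=i$ that turns the double logical sum into the single sum over $i>0$ in \eqref{eq:avg_2nd_deriv}. One small point to tighten: when you say the dissipators ``annihilate under $\tr_\mathfrak{we}$ when hitting a factorized state,'' be sure to also check the mixed terms in $\mathcal{L}\circ\mathcal{L}$ (commutator followed by dissipator and vice versa), since after one application of $\mathcal{L}$ the state is no longer factorized; the paper verifies these three cross terms explicitly.
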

\begin{proof}
The proof is given in \ref{sec:avg_2nd_deriv}.
\end{proof}

\subsection{Wall states minimizing the purity loss acceleration}
A direct method to find a good choice of wall
state consists in solving an optimization problem that minimizes the
average initial purity loss acceleration \eqref{eq:avg_2nd_deriv}.
In particular, we can parameterize ${\rm extr} ({\cal O}({\cal H}_\mathfrak{w}))$
as the complex-valued $(n_\mathfrak{w}-1)$-sphere,
$S^{n_\mathfrak{w}-1}(\mathbb{C})$ and propose an optimization problem on it.
\begin{equation}
\label{eq:optprob_state_full}
\ket{\hat w} =
\underset{\ket{w} \in {\rm S}^{n_\mathfrak{w}-1}(\mathbb{C})}{\mathrm{argmin}}
\Gamma_1(\ket{w}),
\end{equation}
where $
\Gamma_1(\ket{w}) =
- (4 \nu_n)^{-1}
\langle \ddot \gamma_\mathfrak{l}(\rho)|_{t=0} \rangle_\mathfrak{l}
$.

Given a point $\ket{w} \in S^{n_\mathfrak{w}-1}(\mathbb{C})$ and vectors
$x,y$ in the tangent space $T_{\ket{w}} S^{n_\mathfrak{w}-1}(\mathbb{C})$,
the complex sphere $S^{n_\mathfrak{w}-1}(\mathbb{C})$ equipped with the metric induced
by the inner product
$\langle x, y\rangle_{\ket{w}} = x^\dagger (\one - \frac{1}{2} \ketbra{w}{w}) y$ is a Riemannian manifold.
Therefore, we can solve problem \eqref{eq:optprob_state_full}
with the same algorithm used to solve
\eqref{eq:optprob_sub}, but with the Riemannian gradient
\begin{equation}
\begin{split}
{\rm grad} \Gamma_1(\ket{w}) =&
2 \sum_{bcijk} g_{ibc} g_{ijk} (\tau_{c0} \tau_{0k}
(2 \langle \sigma_b^\mathfrak{w} \rangle_w \langle \sigma_j^\mathfrak{w} \rangle_w \ket{w} -
\langle \sigma_b^\mathfrak{w} \rangle_w \sigma_j^\mathfrak{w} \ket{w} \\
& \qquad\qquad\qquad\quad\ \ - \langle \sigma_j^\mathfrak{w} \rangle_w \sigma_b^\mathfrak{w} \ket{w}) -
\tau_{ck} (\langle \sigma_b^\mathfrak{w} \sigma_b^\mathfrak{w} \rangle_w \ket{w} - \sigma_b^\mathfrak{w} \sigma_j^\mathfrak{w} \ket{w})
),
\end{split}
\end{equation}
and exponential map
\begin{equation}
\label{eq:expmap_sph}
\ket{w(t + \epsilon)} = {\rm exp}((Q X \bra{w(t)} - \ket{w(t)} X^\dagger Q) \epsilon)
\ket{w(t)},
\end{equation}
where $Q = \one - \ketbra{w(t)}{w(t)}/2$ and $X = - {\rm grad} \Gamma_1(\ket{w(t)})$. More information on the implementation of
a numerical algorithm to solve this optimization problem is given in \ref{sec:riemannianopt}.

\subsection{Optimization with no direct LE interactions}
The previous optimization problem becomes significantly simpler in the absence of direct Hamiltonian coupling between the logical subsystem and the environment.
In fact, assuming that $\hat \Delta = 0$, the second derivative becomes
\begin{equation}
\begin{split}
\ddot\gamma_\mathfrak{l}(\rho)|_{\substack{t=0 \\ \hat \Delta = 0}} =&
\frac{-4}{n_\mathfrak{e}} \sum_{abij > 0} g_{ab0} g_{ij0}
\tr(\sigma_a^\mathfrak{l} \sigma_i^\mathfrak{l} \rho_\mathfrak{l}^2 - \sigma_a^\mathfrak{l} \rho_\mathfrak{l} \sigma_i^\mathfrak{l} \rho_\mathfrak{l})
{\rm Cov}_{\rho_\mathfrak{w}}(\sigma_b^\mathfrak{w}, \sigma_j^\mathfrak{w}),
\end{split}
\end{equation}
where $
{\rm Cov}_{\rho_\mathfrak{w}}(\sigma_b^\mathfrak{w}, \sigma_j^\mathfrak{w}) =
\tr(\sigma_b^\mathfrak{w} \sigma_j^\mathfrak{w} \rho_\mathfrak{w}) -
\tr(\sigma_b^\mathfrak{w} \rho_\mathfrak{w}) \tr(\sigma_j^\mathfrak{w} \rho_\mathfrak{w})
$.

Notice that, as before, the terms with $a = 0$ and/or $i = 0$ do not contribute to the sum. However, now this is also
the case if $b = 0,$ and / or $j = 0$. Therefore, the expression is simplified by restricting all four
indices to be greater than zero.
In other words, this expression is completely independent of all local Hamiltonian
terms and $\hat H_\mathfrak{we}$ and depends only on the interaction term $\hat H_\mathfrak{lw}$.

Since the simplified initial purity loss acceleration depends only on a
single bipartite operator, $\hat H_\mathfrak{lw}$, we can take its
operator-Schmidt decomposition (OSD). This allows us to propose a choice
of wall state with an analytic expression by targeting the component of
$\hat H_\mathfrak{lw}$ with the highest singular value. In fact, the LW interaction Hamiltonian can be written as
\begin{equation}
\hat H_\mathfrak{lw} = \sum_{ij > 0} g_{ij0} \sigma_{ij}^\mathfrak{lw}.
\end{equation}
We can then arrange the coefficients into a matrix
$G = [0] \oplus [g_{ij0}]_{ij>0} = [\tr(\sigma_{ij}^\mathfrak{lw} \hat H_\mathfrak{lw})]_{ij \geq 0}$ and
find its SVD $G = U S V^\dagger$, where
$U \in \mathbb{R}^{n_\mathfrak{l}^2 \times n_\mathfrak{u}^2}$,
$V \in \mathbb{R}^{n_\mathfrak{w}^2 \times n_\mathfrak{u}^2}$,
$S \in \mathbb{R}^{n_\mathfrak{l}^2 \times n_\mathfrak{u}^2}$ and
$n_\mathfrak{u} = \min \{n_\mathfrak{l}, n_\mathfrak{w}\}$.
Due to the properties of the SVD, $U$ and $V$ are such that $U^\dagger U = V^\dagger V = \one_\mathfrak{u}$,
and $S$ is a diagonal matrix with semipositive entries ordered in non-increasing order, i.e.
$S = {\rm diag}(s_1, \cdots, s_{n_\mathfrak{u}^2})$ where $s_1 \geq \cdots \geq s_{n_\mathfrak{u}^2}$.
Additionally, because $\hat H_\mathfrak{lw}$ is a purely interaction Hamiltonian (i.e. it does not contain any local term),
we have that for all $i > 1$ and $j < n_\mathfrak{u}^2$ the entries $U_{i n_\mathfrak{u}^2} = U_{1j} = V_{i n_\mathfrak{u}^2} = V_{1j} = S_{n_\mathfrak{u}^2 n_\mathfrak{u}^2} = 0$, while
$U_{1 n_\mathfrak{u}^2} = V_{1 n_\mathfrak{u}^2} = 1$.

By defining new operators $C_i = \sum_{a \geq 1} U_{ai} \sigma_{a-1}^\mathfrak{l}$ and $D_j = \sum_{b \geq 1} V_{bj} \sigma_{b-1}^\mathfrak{w}$
we can rewrite the CW interaction Hamiltonian as
\begin{equation}
\label{eq:osd_ham}
\hat H_\mathfrak{lw} = \sum_{i \geq 1} s_i C_i \otimes D_i,
\end{equation}
which is its OSD. Notice that due to the orthonormality of the columns of $U$ and $V$,
the two sets of matrices $\{C_i\}_i$ and $\{D_i\}_i$ satisfy $\tr(C_i C_j) = \delta_{ij}$
and $\tr(D_i D_j) = \delta_{ij}$, just as was the case for $\{\sigma_i^\mathfrak{l}\}_i$
and $\{\sigma_j^\mathfrak{w}\}_j$.
We also have $\tr(C_i) = 0$ and $\tr(D_i) = 0$ for $i < n_\mathfrak{u}^2$ and $C_{n_\mathfrak{u}^2} = \sigma_0^\mathfrak{l}$ and $D_{n_\mathfrak{u}^2} = \sigma_0^\mathfrak{w}$.

Writing $\ddot\gamma_\mathfrak{l}(\rho)|_{t=0}$ in terms of the OSD, we obtain
\begin{equation}
\begin{split}
\ddot\gamma_\mathfrak{l}(\rho)|_{\substack{t=0 \\ \hat \Delta = 0}} =& 
\frac{-4}{n_\mathfrak{e}} \sum_{ij} s_i s_j \tr(C_i C_j \rho_\mathfrak{l}^2 - C_i \rho_\mathfrak{l} C_j \rho_\mathfrak{l})
{\rm Cov}_{\rho_\mathfrak{w}}(D_i, D_j),
\end{split}
\end{equation}
and its average over the pure states of ${\cal H}_\mathfrak{l}$ and ${\cal H}_\mathfrak{e}$ is
\begin{equation}
\begin{split}
\langle \ddot\gamma_\mathfrak{l}(\rho)|_{\substack{t=0 \\ \hat \Delta = 0}} \rangle_\mathfrak{le} =&
- 4 \frac{\nu_n}{n_\mathfrak{e}}
\sum_i s_i^2 {\rm Var}_{\rho_\mathfrak{w}}(D_i).
\end{split}
\end{equation}

Under these assumptions (no LE or no three-body interaction terms) the optimization problem  \eqref{eq:optprob_state_full} simplifies to:
\begin{equation}
\label{eq:optprob_state}
\ket{\hat w} =
\underset{\ket{w} \in {\rm S}^{n_\mathfrak{w}-1}(\mathbb{C})}{\mathrm{argmin}} \Gamma_2(\ket{w}),
\end{equation}
where $
\Gamma_2(\ket{w}) = - (4 \frac{\nu_n}{n_\mathfrak{e}})^{-1} \langle \ddot\gamma_\mathfrak{l}(\rho)|_{\substack{t=0 \\ \hat \Delta = 0}} \rangle_\mathfrak{le}
$, i.e.
\begin{equation}
\label{eq:osd_gamma}
\Gamma_2(\ket{w}) = \sum_i s_i^2 {\rm Var}_{\ket{w}}(D_i).
\end{equation}
Problem \eqref{eq:optprob_state} can also be considered as a simpler, yet approximate alternative to the general case.

We can solve problem \eqref{eq:optprob_state}
with the same Riemannian gradient descent algorithm  used to solve
\eqref{eq:optprob_state_full}, but with the following Riemannian gradient
\begin{equation}
\label{eq:osd_gamma_grad}
\begin{split}
{\rm grad} \Gamma_2(\ket{w}) =& \sum_i s_i^2 (2 (D_i^2 \ket{w} - \bra{w} D_i^2 \ket{w} \ket{w}) - \\
&\qquad 4 (\bra{w} D_i \ket{w} D_i \ket{w} - (\bra{w} D_i \ket{w})^2 \ket{w})),
\end{split}
\end{equation}
and with $X = - {\rm grad} \Gamma_2(\ket{w(t)})$. See \ref{sec:riemannianopt} for a more detailed
explanation of Riemannian gradient descent method.

\subsection{Analytical solution for qubit walls}

If the wall subsystem is a qubit system, i.e. ${\cal H}_\mathfrak{w} \simeq \mathbb{C}^2$, we can
provide an analytical solution to the optimization problem \eqref{eq:optprob_state}.
In particular, the solution is given by the eigenstates of the dominant interaction term $D_1$,
i.e. wall states $\ket{\hat w}$ such that $D_1 \ket{\hat w} = \lambda \ket{\hat w}$.

\begin{proposition}
\label{thm:optimwall}
    Given $H_\mathfrak{lw}$ in the OSD form as in (\ref{eq:osd_ham}), $n_\mathfrak{w} = 2$
    and $s_1 > s_i$ for $i > 1$, choosing $\ket{w} \in \mathrm{eigenvectors}(D_1)$
    minimizes $\Gamma_2(\ket{w})$ over $\bigcup_{i = 1}^3 \mathrm{eigenvectors}(D_i)$,
    and locally minimizes $\Gamma_2(\ket{w})$ over the set of pure states of ${\cal H}_\mathfrak{w}$
\end{proposition}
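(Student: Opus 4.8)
\noindent\emph{Proof strategy.} The plan is to reduce the statement to a quadratic optimization on the Bloch sphere. Since $n_\mathfrak{w}=2$, each $D_i$ with $i\in\{1,2,3\}$ is a traceless Hermitian $2\times2$ matrix, and the orthonormality $\tr(D_iD_j)=\delta_{ij}$ inherited from the OSD \eqref{eq:osd_ham} makes $\{D_1,D_2,D_3\}$ an orthonormal basis of the three-dimensional space of traceless Hermitian $2\times2$ matrices. (If fewer than three of the $s_i$ are nonzero the corresponding $D_i$ are not pinned down by the SVD, but they may be chosen to complete such a basis; since their weights $s_i^2$ vanish, this is harmless.) Consequently there is an orthonormal frame $\{\hat n_1,\hat n_2,\hat n_3\}$ of $\mathbb{R}^3$ with $D_k=\tfrac{1}{\sqrt2}\,\hat n_k\!\cdot\!\vec\sigma$, and $D_4=\sigma_0^\mathfrak{w}=\one_\mathfrak{w}/\sqrt2$ has vanishing variance in any state, so only $k=1,2,3$ contribute to $\Gamma_2$.

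\noindent Next I would parametrize the pure wall state by its Bloch vector $\vec r\in\mathbb{R}^3$, $|\vec r|=1$, i.e. $\ketbra{w}{w}=\tfrac12(\one_\mathfrak{w}+\vec r\!\cdot\!\vec\sigma)$. Using $(\hat n\!\cdot\!\vec\sigma)^2=\one_\mathfrak{w}$ for a unit vector, one gets $D_k^2=\tfrac12\one_\mathfrak{w}$ and $\langle D_k\rangle_w=\tfrac{1}{\sqrt2}\,\hat n_k\!\cdot\!\vec r$, hence $\mathrm{Var}_{\ket{w}}(D_k)=\tfrac12\bigl(1-(\hat n_k\!\cdot\!\vec r)^2\bigr)$. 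Substituting into \eqref{eq:osd_gamma},
\begin{equation*}
\Gamma_2(\ket{w})=\tfrac12\sum_{k=1}^{3}s_k^2-\tfrac12\,\vec r^{\,\mathsf T}M\vec r,\qquad M:=\sum_{k=1}^{3}s_k^2\,\hat n_k\hat n_k^{\,\mathsf T},
\end{equation*}
where $M\succeq0$ is symmetric with eigenpairs $(s_k^2,\hat n_k)$. Thus minimizing $\Gamma_2$ over the pure states of ${\cal H}_\mathfrak{w}$ is exactly maximizing the Rayleigh quotient $\vec r^{\,\mathsf T}M\vec r$ over the unit sphere.

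\noindent Both claims then follow. A pure state is an eigenvector of $D_i$ precisely when $\vec r=\pm\hat n_i$, which gives $\vec r^{\,\mathsf T}M\vec r=s_i^2$ and $\Gamma_2=\tfrac12\sum_k s_k^2-\tfrac12 s_i^2$; because $s_1>s_i$ for $i>1$, this is smallest at $i=1$, so the eigenvectors of $D_1$ beat those of $D_2$ and $D_3$. For the second claim, $\max_{|\vec r|=1}\vec r^{\,\mathsf T}M\vec r=\lambda_{\max}(M)=s_1^2$, attained exactly at $\vec r=\pm\hat n_1$; since $s_1^2$ is a \emph{simple} eigenvalue of $M$ (the strict inequality $s_1>s_i$, $i>1$, is exactly the spectral gap), $\pm\hat n_1$ is a strict isolated maximizer, so $\ket{\hat w}\in\mathrm{eigenvectors}(D_1)$ is a strict local minimizer of $\Gamma_2$ with value $\tfrac12(s_2^2+s_3^2)$. (This argument actually yields the \emph{global} minimum, which a fortiori gives the asserted local minimality.)

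\noindent The computation is largely routine; the two points deserving care are (i) justifying that $\{D_1,D_2,D_3\}$ is, or can be completed to, an orthonormal basis of the traceless Hermitian $2\times2$ matrices — this disposes of OSD degeneracies and the SVD's non-uniqueness for zero singular values — and (ii) upgrading ``global maximizer of the Rayleigh quotient'' to ``strict isolated maximizer'', which is what actually delivers \emph{local} minimality of $\Gamma_2$ on the manifold of pure states and is where the hypothesis $s_1>s_i$ is indispensable (under equality the maximizers, hence the minima, would form a whole circle or two-sphere of Bloch vectors). Everything else is bookkeeping with the Pauli identities $\sigma_a\sigma_b=\delta_{ab}\one+\mathrm{i}\,\epsilon_{abc}\sigma_c$ and $\tr(\sigma_a\sigma_b)=2\delta_{ab}$.
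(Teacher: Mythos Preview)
Your proof is correct and takes a genuinely different route from the paper's. The paper first establishes a lemma collecting qubit-specific identities ($D_i^2=\tfrac12\one$, $\bra{\psi_k^i}D_j\ket{\psi_k^i}=\lambda_k^j\delta_{ij}$, etc.), then works intrinsically on the manifold of pure states: it computes the Riemannian gradient of $\Gamma_2$, shows that eigenvectors of each $D_j$ are critical points, computes the Riemannian Hessian at an eigenvector of $D_1$ and checks it is positive definite (this is where the paper uses $s_1>s_i$), and finally evaluates $\Gamma_2$ at each $\ket{\psi_k^j}$ to compare the finitely many candidates.

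You instead pass to the extrinsic Bloch-sphere picture and reduce $\Gamma_2$ to an affine function of the Rayleigh quotient $\vec r^{\,\mathsf T}M\vec r$. This is more elementary --- no Riemannian gradient or Hessian machinery is needed --- and it actually proves more: you obtain the \emph{global} minimum over all pure states, not merely local minimality. The paper's approach, by contrast, is tied to the optimization framework used elsewhere in the manuscript and only certifies local minimality via second-order conditions; it never rules out other local minima. Your handling of possible zero singular values (completing $\{D_1,D_2,D_3\}$ to an orthonormal basis) is also cleaner than leaving that case implicit. Both arguments use the strictness $s_1>s_i$ in the same essential way: for you it makes $s_1^2$ a simple eigenvalue of $M$ and hence isolates the maximizer; for the paper it makes the Hessian strictly positive.
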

\begin{proof}
The proof is given in \ref{sec:qubitproof}.
\end{proof}

If $\Delta = 0$ and the chosen $\ket{\hat{w}}$ is a perfect wall state,
initializing the state of $\cal H_\mathfrak{w}$ to $\ket{\hat w}$ would cause the logical subsystem
to evolve unitarily, i.e. it would provide an effective decoupling of $\cal H_\mathfrak{l}$
from the rest of the system. However, if that is not the case, one could try different kinds
of stabilization methods that could be used to further slow the loss of purity by acting only
on the wall subsystem and not on the logical subsystem. Three different control strategies are proposed
in the next section.

\section{General Case: Stabilizing and Protecting the Wall State}
\label{sec:enc_prot}

After settling on a choice of wall subsystem (as in Section \ref{sec:subsystem}) and wall state (as in Section \ref{sec:state}),
we want to improve the performance of the noise suppression protocol by stabilizing the wall state,
so that the optimization of the purity dynamics remains significant for longer times, with the purity $\gamma_\mathfrak{l}(\rho(t))$
remaining close to one for longer times.

The first method to stabilize the initial state that we shall consider is {\em repeated measurements}:
If the frequency of the measurements is high enough, they induce a quantum Zeno effect
\cite{misraZenoParadoxQuantum1977}, slowing the drift of the wall state.
Next, we consider the addition of {\em engineered dissipation} terms to the wall, so that the desired wall state is approximately stabilized. Lastly, we will consider constant {\em strong Hamiltonian driving} as a
way to energetically decouple the wall subsystem from the other parts of the system. This last
approach, which also results in zeno-like dynamics, exhibits interesting phenomena, in which the purity oscillations may not just
slow down, but also shrink in amplitude with an increase of the control Hamiltonian amplitude. This phenomenon may, in certain cases, lead to
lower bounds on purity $\gamma_\mathfrak{l}(\rho(t))$ that are valid for {\em at any time}. These might be seen as an approximate {\em eternal purity preservation}: see Figure \ref{fig:control_hamd}, and Section \ref{sec:eternal} for an analytical treatment.
All these control methods are implemented and investigated numerically in the following subsections for prototypical examples.

\subsection{Repeated measurements}

We assume to be able to measure
$\hat \Pi = \one_\mathfrak{l} \otimes \ketbra{\hat w}{\hat w} \otimes \one_\mathfrak{e}$.
As such, if the measurement result is positive, the state $\rho$ collapses to
$\hat\Pi \rho \hat\Pi$, otherwise we obtain $(\one - \hat\Pi) \rho (\one - \hat\Pi)$.
\begin{equation}
\rho(t_i + \varepsilon) = \hat\Pi \rho(t_i) \hat\Pi + \hat\Pi^\perp \rho(t_i) \hat\Pi^\perp,
\end{equation}
where $\hat\Pi^\perp = (\one - \hat\Pi)$.
Measurements are performed with frequency $f$. We can then discretize the evolution as
\begin{equation}
\rho(t_{i+1}) = \hat\Pi U_{1/f} \rho(t_i) U_{1/f}^{\dagger} \hat\Pi +
\hat\Pi^\perp U_{1/f} \rho(t_i) U_{1/f}^{\dagger} \hat\Pi^\perp,
\end{equation}
where $U_{1/f} = \exp(-\iu \hat H / f)$ and $t_{i} = i / f$. In the limit where
$f \to \infty$, the system behaves according to the following effective Zeno
Hamiltonian \cite{facchiThreeDifferentManifestations2003},
\begin{equation}
\label{eq:zenoham}
H_Z = \Pi H \Pi + \Pi^\perp H \Pi^\perp.
\end{equation}
The resulting evolution operator is
\begin{equation}
e^{-\iu H_Z t} 
= \Pi e^{-\iu \Pi H \Pi t} \Pi + \Pi^\perp e^{-\iu \Pi^\perp H \Pi^\perp t} \Pi^\perp,
\end{equation}
which clearly prevents any entanglement between the wall and the other partitions,
whenever the initial state has the form
$\rho_0 = \rho_\mathfrak{s} \otimes \ketbra{\hat w}{\hat w} \otimes \rho_\mathfrak{e}$,
as it leaves the wall partition invariant.

\subsection{Dissipation}

If one could design dissipation terms acting on the wall subsystem by techniques such as
reservoir engineering and dissipative control \cite{altafiniModelingControlQuantum2012,ticozziAnalysisSynthesisAttractive2009}, one could implement dissipation operators such as
\begin{equation}
L_i = \one_\mathfrak{l} \otimes \ketbra{\hat w}{\hat w^\perp_i} \otimes \one_\mathfrak{e},
\end{equation}
where $\{\ket{\hat w^\perp_i}\}_i^{n_\mathfrak{w} - 1}$ are $n_\mathfrak{w} - 1$ mutually orthonormal
states such that $\braket{\hat w}{\hat w^\perp_i} = 0, \forall i$. The state dynamics would then
be governed by
\begin{equation}
\dot\rho(t) = - \mathrm{i} [\hat H, \rho(t)] +
\eta \sum_i (L_i \rho L_i^\dagger - \frac{1}{2} \{L_i^\dagger L_i, \rho\}).
\end{equation}
One can prove \cite{ticozziQuantumMarkovianSubsystems2008,ticozziHamiltonianControlQuantum2012} that the desired wall state
$\ketbra{\hat{w}}{\hat{w}}$ is the unique fixed point of the reduced dynamics $\tr_\mathfrak{le}(\rho(t)),$ and hence it is asymtpotically stable. For increasing $\gamma,$ the convergence becomes faster.

\subsection{Strong Hamiltonian driving}

Let there be a controllable Hamiltonian term $\one_\mathfrak{l} \otimes H_u \otimes \one_\mathfrak{e}$
whose amplitude can be modulated via an input function $u(t)$. We are interested in studying the case where $u(t) = \kappa$ is large. The driven
Hamiltonian would be the following
\begin{equation}
H_\kappa(t) = \hat H + u(t) \one_\mathfrak{l} \otimes H_u \otimes \one_\mathfrak{e}
\end{equation}
\begin{equation}
H_\kappa = \hat H + \kappa \one_\mathfrak{l} \otimes H_u \otimes \one_\mathfrak{e}.
\end{equation}
Assume also that $H_u$ is nondegenerate. If $\kappa$ is large enough, this kind of control
would consist of a detuning of the wall from the rest of the system, i.e. the eigenenergies
of $H_\mathfrak{w}$ are taken away from any possible resonance with the eigenenergies of the
other Hamiltonian terms (except for any possible eigenenergy fixed at zero).

In particular, we want the control Hamiltonian to be of the form
\begin{equation}
H_u = \lambda_{\hat w} \ketbra{\hat w}{\hat w} + \sum_{j=1}^{n_\mathfrak{w} - 1} \lambda_{\hat w^\perp_j} \ketbra{\hat w^\perp_j}{\hat w^\perp_j},
\end{equation}
where $\ket{\hat{w}^\perp_j}$ are $n_\mathfrak{w} - 1$ mutually orthonormal states such that $\braket{\hat{w}^\perp_j}{\hat{w}} = 0$.
By the dynamical Zeno theorem \cite{facchiThreeDifferentManifestations2003}, we can then prove that in the limit $\kappa \to \infty$,
the dynamics of the system are governed by the following effective Hamiltonian
\begin{equation}
H_{\rm eff} = \ketbra{\hat w}{\hat w} \hat{H} \ketbra{\hat w}{\hat w} +
\sum_{j=1}^{n_\mathfrak{w} - 1} \ketbra{\hat w^\perp_j}{\hat w^\perp_j} \hat{H} \ketbra{\hat w^\perp_j}{\hat w^\perp_j}.
\end{equation}
Under this Hamiltonian, the wall state $\ket{\hat w}$ becomes a fixed point and therefore no entanglement can be formed
with the rest of the system.

\section{Numerical Results}\label{sec:examples}

In this section, we provide a series of examples to demonstrate the effectiveness of our method. We start by considering two minimal three-qubit systems. This choice allows us to test the algorithms against a model with a simple known solution and highlight the importance of the regularization terms.
First, we introduce a toy Hamiltonian in order to test the algorithm to find the optimal subsystem
decomposition of $\mathcal{H}_\mathfrak{s}$ independently of the choice of wall state. Next ,
we consider a three qubit Ising chain to test our method to choose the optimal wall state
and the stabilizing controls, independently of the choice of subsystem decomposition.

The whole method is then tested in two additional, richer examples. The first is a {\em spin lattice
with three-body interaction terms.} The other is a {\em central spin model,} in which a single 3/2-spin is decomposed
into logical and wall subsystems.

In all of the examples, we use a regularization hyperparameter of $\eta_{\rm reg} = 0.01$ and initialize
the environment to a thermal state with inverse temperature $\beta = 0.01$,
\begin{equation}
\rho_\mathfrak{e} = \frac{e^{-H_\mathfrak{e} \beta}}{\tr(e^{-H_\mathfrak{e} \beta})}.
\end{equation}
The starting point of the Riemannian gradient descent for subsystem decomposition is set to a random unitary,
whereas that for wall state optimization is set to $\ket{w_0} = \mu (\ket{w_D} + \ket{w_r}/10)$, where $\ket{w_D}$
is an eigenstate of $D_1$, $\ket{w_r}$ is a random unit vector and $\mu$ is a normalization constant.

\subsection{Three-Spin Toy Model: The Effect of Regularization} \label{sec:3spin_reg}
\subsubsection{Model and analytical discussion:}  Consider a three-qubit system and the following toy Hamiltonian:
\begin{equation}
\label{eq:toyham_reg}
H_0 = \sum_{i=1}^3 J_i^z + J_1^x J_3^x,
\end{equation}
where $J_i^z$ is given by \eqref{eq:Jz}, whereas $J_i^x$ is given by
\begin{equation}
\label{eq:Jx}
J_i^x = \one^{\otimes i-1} \otimes \frac{1}{2}\sigma_x \otimes \one^{\otimes N-i},
\end{equation}
and $N = 3$. Additionally, consider the following two unitary operators:
\begin{equation}
\label{eq:U1}
U_{1,\mathfrak{s}} = \ketbra{00}{11} + \ketbra{01}{00} + \ketbra{10}{10} + \ketbra{11}{01},
\end{equation}
\begin{equation}
\label{eq:U2}
U_{2,\mathfrak{s}} = \ketbra{00}{00} + \ketbra{01}{10} + \ketbra{10}{01} + \ketbra{11}{11}.
\end{equation}

\noindent Then, applying them to the Hamiltonian \eqref{eq:toyham_reg}, we obtain
\begin{equation}
H_1 = 2 J_1^z J_2^z - J_1^z + J_3^z + J_2^x J_3^x,
\end{equation}
\begin{equation}
H_2 = J_2^z + J_1^z + J_3^z + J_2^x J_3^x.
\end{equation}
where $H_i = (U_{i, \mathfrak{s}} \otimes \one)^\dagger H_0 (U_{i, \mathfrak{s}} \otimes \one)$.
Both $H_1$ and $H_2$ satisfy $\Delta = 0$ and therefore
$J(U_{1, \mathfrak{s}}) = J(U_{2, \mathfrak{s}}) = 0$. However, in the frame associated with the Hamiltonian
$H_1$ the logical subsystem remains coupled to the environment via the wall,
whereas in the frame of $H_2$ it is perfectly decoupled. This difference is
captured by the regularized cost function. Indeed,
$J_{\rm reg}(U_{1, \mathfrak{s}}) > 0$, whereas $J_{\rm reg}(U_{2, \mathfrak{s}}) = 0$.

\subsubsection{Numerical construction of the wall subsystem:} The first test for our method is to try to decompose the first two subsystems into a logical and a wall {\em virtual} subsystems using our numerical approach.
In order to test the validity of our algorithm and the effectiveness of the regularization,  we ran the optimization procedure twice, first
without and then with regularization. The results are shown in Table \ref{tab:subsystem_choice}.

\begin{table}[!htpb]
\centering
\begin{tabular}{c|c|c}
\hline
    & $J(U_\mathfrak{s})$ & $J_{\rm reg}(U_\mathfrak{s})$ \\
\hline
$\hat{U}_{\rm reg}$ & $3.87 \times 10^{-15}$ & $7.76 \times 10^{-8}$ \\
$\hat{U}$           & $3.53 \times 10^{-19}$ & $0.0367$ \\
$\one$              & $0.5$                  & $0.5$     \\
\hline
\end{tabular}
\caption{Value of the cost function for subsystem decomposition after optimizing with and without regularization and Hamiltonian \eqref{eq:toyham_reg}.
The regularization hyper-parameter was set to $0.01$. $\hat{U}$ refers to the numerical solution without regularization, whereas
$\hat{U}_{\rm reg}$ refers to the solution with regularization.}
\label{tab:subsystem_choice}
\end{table}

As expected, the algorithm can remove the direct coupling between the logical subsystem and the environment.
However, without regularization, it does so in a way that leaves a significant coupling between the virtual logical subsystem and the wall,
which is greater in magnitude than the original coupling with the environment. In fact, for $U = \one$, we had $\|H_\mathfrak{le}\|^2 = 0.5$,
and for $U = \hat{U}$ we instead have $\|H_\mathfrak{lw}\|^2 = 3.67$.
However, with regularization, it is able to find a decomposition of $\cal H_\mathfrak{s}$ such that $\cal H_\mathfrak{l}$ is completely
decoupled from the rest of the system.

\begin{figure}[ht]
\centering
\includegraphics[width=0.7\linewidth]{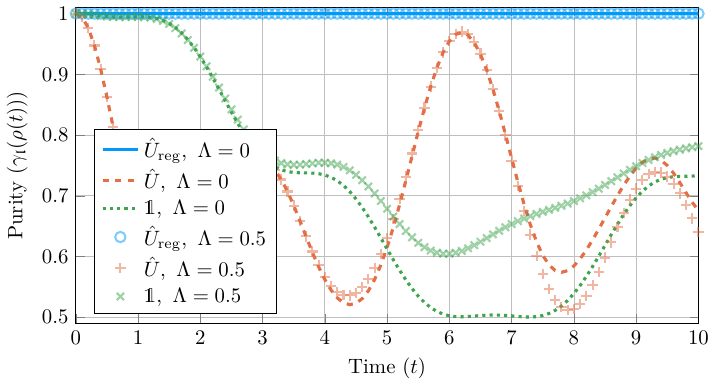}
\caption{Sample purity dynamics of the system with Hamiltonian \eqref{eq:toyham_reg}. The solid blue line shows the purity
$\gamma_\mathfrak{l}$ of the logical subsystem in the optimized frame with regularization. The dashed red line shows the purity of the logical
subsystem in the optimized frame without regularization. The dotted green line shows the purity of the logical subsystem in the original frame.
The markers of matching colors show the purity of the system when a Lindbladian pumping term \eqref{eq:pumping} is added to the environment.}
\label{fig:subsystem_choice}
\end{figure}

These findings are confirmed by the simulations shown in Figure \ref{fig:subsystem_choice}, where we plot sample purity dynamics with the same random initial logical and wall state in
three different subsystem decompositions given by the regularized solution, the nonregularized solution, and the original
frame. The initial state of the environment is set to a thermal state with inverse temperature $\beta = 0.01$.
Additional purity trajectories are plotted,  where a Lindbladian term was added to the environment, inducing spontaneous decay with a noise operator of the following form
\begin{equation}
\label{eq:pumping}
L = \Lambda J^+_3 = \Lambda (J^x_3 + \iu J^y_3),
\end{equation}
where $\Lambda = 0.5$, $J^x_i$ is given by \eqref{eq:Jx} and $J_i^y$ is given by
\begin{equation}
\label{eq:Jy}
J_i^y = \one^{\otimes i-1} \otimes \frac{1}{2}\sigma_y \otimes \one^{\otimes N-i}.
\end{equation}
We see that the inclusion of the regularization allows our optimization algorithm to effectively decouple
the logical subsystem $\mathcal{H}_\mathfrak{l}$ from the rest of the system, and therefore the purity
is perfectly preserved. However, without regularization, the purity decays even faster than
in the original frame. This is possible because, in spite of the removal of the coupling to the environment,
a stronger coupling of the system to the wall partition can be allowed, and at this stage we are not yet applying any initialization or
control to the wall partition. The introduction of Markovian dissipation in the third qubit has little to no effect on the performance of the method.

\subsection{Three-spin Ising model: Choice of wall state}

We shift our focus now to the choice of initial wall state.
To test our method, we consider the following Ising Hamiltonian with transversal field:
\begin{equation}
\label{eq:ising_trans}
H = \sum_{i=1}^3 J_i^z + \sum_{i=1}^2 J_i^x J_{i+1}^x,
\end{equation}
modeling a chain of three spins. Each spin has a Hilbert space $\mathcal{H}_i = \mathbb{C}^2$,
and the Hilbert space of the full chain is $\mathcal{H} = \bigotimes_{i=1}^3 \mathcal{H}_i$.
In this subsection, we wish to demonstrate the method to choose the initial wall state
independently of the previously shown method to optimize the subsystem decomposition of
$\mathcal{H}_\mathfrak{s}$. Therefore, we take $\mathcal{H}_\mathfrak{l} = \mathcal{H}_1$,
$\mathcal{H}_\mathfrak{w} = \mathcal{H}_2$ and $\mathcal{H}_\mathfrak{e} = \mathcal{H}_3$.

Notice that in this case we have that $\Delta = 0$, therefore the cost functions
$\Gamma_1$ and $\Gamma_2$ are identical. Moreover, the LW interaction Hamiltonian
is equal to $H_\mathfrak{lw} = J_1^x J_2^x$, i.e. we have a single nonzero singular
value stemming from the OSD decomposition, as evidenced in Table \ref{tab:state_singvals}.

\begin{table}[!htpb]
\centering
\begin{tabular}{c|c|c|c|c}
\hline
    & $s_1$ & $s_2$ & $s_3$ & $s_4$  \\
\hline
$\one$    & 0.5 & 0.0 & 0.0  & 0.0 \\
\hline
\end{tabular}
\caption{Singular values of the OSD decomposition of $H_\mathfrak{lw}$ for Hamiltonian \eqref{eq:ising_trans}.}
\label{tab:state_singvals}
\end{table}

Since the wall is a qubit system, the eigenvalues of $J_x$ are solutions to the
optimization problem \eqref{eq:optprob_state}. In fact, we get $\ket{\hat{w}} = \ket{-}$ as the output
of the gradient descent algorithm.

Figure \ref{fig:state_sample} shows the sample purity dynamics for a random initial pure state in the logical subsystem
and a thermal state in the environment. As expected, we observe a slower purity decay with the optimized
wall state. We note that also in this case the inclusion of the Lindbladian term \eqref{eq:pumping}
has no effect on the purity of the logical subsystem.

\begin{figure}[ht]
\centering
\subfloat[][Sample purity dynamics.]{
\includegraphics[width=0.45\linewidth]{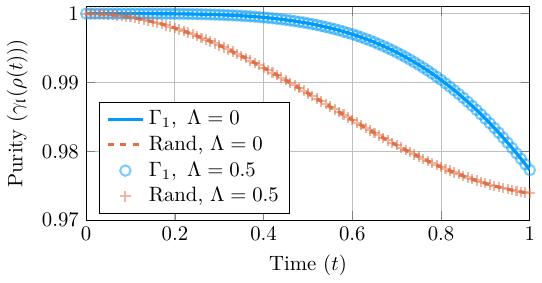}
\label{fig:state_sample}
}
\subfloat[][Average advantage of $\ket{\hat{w}}$.]{
\includegraphics[width=0.45\linewidth]{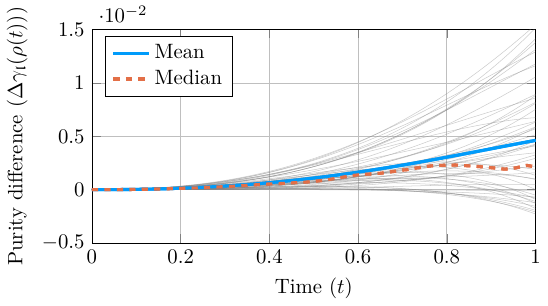}
\label{fig:state_avg}
}
\caption{\subref{fig:state_sample} Sample purity dynamics for different initial wall states. The solid blue line
shows the purity $\gamma_\mathfrak{l}$ for the optimal choice $\ket{\hat{w}} = \ket{-}$, whereas
the dashed red line shows the purity for a random choice of wall state.
The markers of matching colors show the purity of the system when a Lindbladian pumping term \eqref{eq:pumping} is added to the environment.
\subref{fig:state_avg} Difference between the purity of the logical subsystem between realizations
with the optimal wall state and a random wall state. 50 samples are shown in gray, whose
mean is represented by the solid blue line, whereas the median is represented by the dashed red line. Positive values indicate that the wall state is performing better than the random one.
}
\label{fig:state}
\end{figure}

At this point, we would like to highlight that the advantage of the optimal wall state (when
no perfect wall state exists) is on average. That is, for certain initial states $\rho_\mathfrak{l}$
of the logical subsystem, a different choice of wall state $\ket{w}$ may provide a slower purity decay.
But since we want to provide good protection for all initial, and possibly unknown, $\rho_\mathfrak{l}$,
it is the average that concerns us. Figure \ref{fig:state_avg} shows the difference in purity
$\gamma_\mathfrak{l}$ between realizations with the optimal wall state $\ket{-}$ and a random wall state.

\subsubsection{Stabilizing control schemes:}
We next test the effectiveness of the different control schemes with the same Hamiltonian
\eqref{eq:ising_trans} and wall state $\ket{\hat{w}} = \ket{-}$ derived in the previous subsection.

Figure \ref{fig:control_meas} shows the purity dynamics with repeated
measurements with projectors $\hat{\Pi} = \ketbra{-}{-}$ and $\hat{\Pi}^\perp = \ketbra{+}{+}$,
which project the wall state back to (mixtures of) the eigenstates of $J_x$.
Figure \ref{fig:control_diss} shows the purity dynamics with a single Lindblad operator
$L = \ketbra{-}{+}$, which causes the wall state to decay back to its initial state.
In both cases, we observe an effective slowdown of the purity loss.
In Figure \ref{fig:control_hamd} we see the purity dynamics with the strong Hamiltonian driving,
using $H_u = \ketbra{-}{-} - \ketbra{+}{+} = - 2 J_x$.
This case is interesting because, more than a slowdown of the purity dynamics, what we observe
is a shrinking of the oscillations. The entire dynamics is compressed towards constant purity equal to one.
We study this behavior in more depth in Section \ref{sec:eternal}.

Figure \ref{fig:control_times} shows the time that it takes the purity $\gamma_\mathfrak{l}$ to reach
a value of $0.97$. Here we confirm that, in the case of Hamiltonian driving,
after a certain value of $\kappa$, the purity remains bounded above $0.97$ for the entirety of the simulation time.
In the case of dissipative control, the time that it takes $\gamma_\mathfrak{l}$ to reach $0.97$ is faster than
linearly on $\eta$. For repeated measurements, we observe a slow growth that is approximately
piecewise linear on $f$, with some discontinuous jumps that occur when the valleys of oscillations of the purity decay
cross the $0.97$ threshold.

\begin{figure}[ht]
\centering
\subfloat[][Repeated measurements.]{
\includegraphics[width=0.45\linewidth]{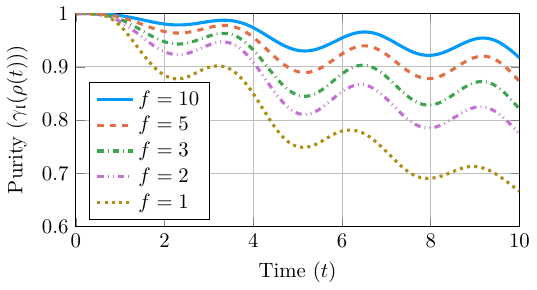}
\label{fig:control_meas}
}\
\subfloat[][Dissipation.]{
\includegraphics[width=0.45\linewidth]{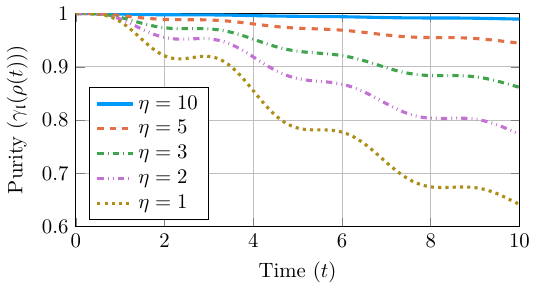}
\label{fig:control_diss}
}\
\subfloat[][Hamiltonian driving.]{
\includegraphics[width=0.45\linewidth]{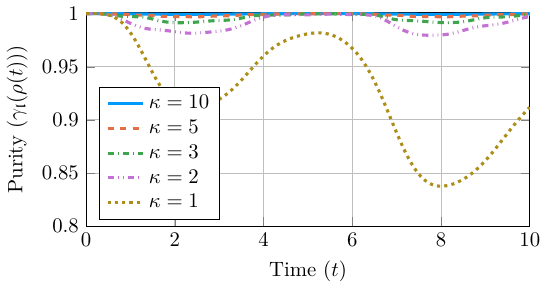}
\label{fig:control_hamd}
}\
\subfloat[][Time to reach $\gamma_\mathfrak{l} = 0.97$.]{
\includegraphics[width=0.45\linewidth]{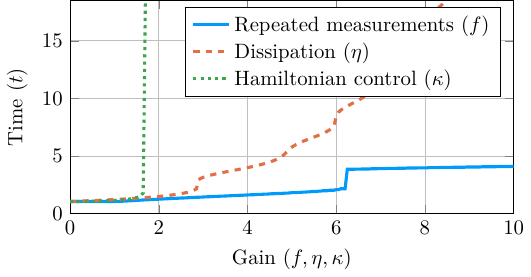}
\label{fig:control_times}
}
\caption{\subref{fig:control_meas}, \subref{fig:control_diss}, \subref{fig:control_hamd} Purity dynamics when each of the proposed controls is applied for different values of the corresponding control gain variable. The Hamiltonian of the system is given by \eqref{eq:ising_trans} and the wall state is $\ket{-}$. In \subref{fig:control_meas} the measurement projectors are $\Pi = \ketbra{-}{-}$ and $\Pi^\perp = \ketbra{+}{+}$. In \subref{fig:control_diss} the dissipation operator is $L = \ketbra{-}{+}$. In \subref{fig:control_hamd} the control Hamiltonian is $H_u = \ketbra{-}{-} - \ketbra{+}{+}$. \subref{fig:control_times} shows the time that it takes the purity $\gamma_\mathfrak{l}$ to reach a value of $0.97$ for different values of the corresponding control variable of the three control schemes. The green dotted line, corresponding to the strong Hamiltonian driving, goes off the plot for $\kappa > 2$ because it never reaches $\gamma_\mathfrak{l} \leq 0.97$ within the simulation time.}
\label{fig:control}
\end{figure}

\subsection{Spin lattice with three-body interactions}

We considered the following optical $1/2$-spin lattice Hamiltonian
with three-body interactions based on the expression found in
\cite{pachosThreeSpinInteractionsOptical2004},

\begin{equation}
\label{eq:spinham}
\begin{split}
H =& \sum_{i=1}^{N} \omega_i J_i^z + \\
& \sum_{i=1}^{N-1} (g_{i, i+1}^{zz} J_i^z J_{i, i+1}^z +
g_{i, i+1}^{xx} (J_i^x J_{i+1}^x + J_i^y J_{i+1}^y)) + \\
& \sum_{i=1}^{N-2} (g_{i, i+1, i+2}^{zzz} J_i^z J_{i+1}^z J_{i+2}^z +
g_{i, i+1, i+2}^{xzx} (J_i^x J_{i+1}^z J_{i+2}^x + J_i^y J_{i+1}^z J_{i+2}^y))
\end{split}
\end{equation}
where $N=5$ and the following parameters (randomized, while maintaining some separation between different interaction lengths) are used:
$\omega_1 = \omega_2 = 1.13$, $\omega_3 = 1.55$, $\omega_4 = \omega_5 = 2.51$,
$g_{1,2}^{zz} = g_{2,3}^{zz} = 0.77$, $g_{3,4}^{zz} = g_{4,5}^{zz} = 0.46$,
$g_{1,2}^{xx} = g_{2,3}^{xx} = 0.21$, $g_{3,4}^{xx} = g_{4,5}^{xx} = 0.19$,
$g_{1,2,3}^{zzz} = g_{2,3,4}^{zzz} = g_{4,5,6}^{zzz} = 0.08$ and $g_{1,2,3}^{xzx} = g_{2,3,4}^{xzx} = g_{3,4,5}^{xzx} = 0.06$.
Furthermore, $J_i^x$, $J_i^y$, and $J_i^x$ are given by
\eqref{eq:Jx}, \eqref{eq:Jy} and \eqref{eq:Jz}, respectively.

Assume that spins $1$, $2$ and $3$ are fully controllable, whereas spins $4$ and $5$ are not.
Therefore, we let ${\cal H}_\mathfrak{e} \simeq \mathbb{C}^4$ correspond to the last two
spins. Then, since the third spin has direct interactions with ${\cal H}_\mathfrak{e}$,
while the first two interact only with ${\cal H}_\mathfrak{e}$ through the three-body
terms, we let ${\cal H}_\mathfrak{w} \simeq \mathbb{C}^2$ correspond to the third spin
and ${\cal H}_\mathfrak{l} \simeq \mathbb{C}^4$ to the first two. In other words, we let
$n_\mathfrak{l} = n_\mathfrak{e} = 4$,  $n_\mathfrak{w} = 2$ and $U_\mathfrak{s} = \one_\mathfrak{s}$.
We call this choice of frame the $\one$-frame.

The three-body interaction terms in \eqref{eq:spinham} make it so that $\Delta \neq 0$.
Therefore, if we apply the optimization procedure described in Section \ref{sec:subsystem},
we would expect to find a decomposition that better separates the logical subsystem from the environment.
Here we have chosen the regularization hyperparameter to be $\eta = 0.01$.
After finding an optimal $U = \hat{U}$, we call the new frame the $\hat{U}$-frame.
Table \ref{tab:cost_Us} shows the value of the cost function for the original
$\one$-frame and the optimized $\hat{U}$-frame.

\begin{table}[!htpb]
\centering
\begin{tabular}{c|c|c}
\hline
    & $J(U_\mathfrak{s})$ & $J_{\rm reg}(U_\mathfrak{s})$  \\
\hline
$\hat{U}$ & 0.00655  & 0.02005 \\
$\one$    & 0.00680  & 0.02049 \\
\hline
\end{tabular}
\caption{Value of the unregularized and regularized cost functions, i.e.  $J(U_\mathfrak{s})$ and $J_{\rm reg}(U_\mathfrak{s})$ with Hamiltonian \eqref{eq:spinham}
for different spin orders $s$ in the original $\one$-frame and the optimized $\hat{U}$-frame.}
\label{tab:cost_Us}
\end{table}

After choosing the decomposition of ${\cal H}_\mathfrak{s}$, we next choose
an initial state for the ${\cal H}_\mathfrak{w}$ subsystem.
To this end, we compute the OSD decomposition of the Hamiltonian interaction between
${\cal H}_\mathfrak{l}$ and ${\cal H}_\mathfrak{w}$. In this step, it is important to confirm
that the highest singular value is nondegenerate, as can be seen in Table \ref{tab:singvals1}.
Since this is the case, we may proceed to solve the optimization problem
\eqref{eq:optprob_state_full} using an eigenstate of $D_1$ for initialization.

\begin{table}[!htpb]
\centering
\begin{tabular}{c|c|c|c|c}
\hline
    & $s_1$ & $s_2$ & $s_3$ & $s_4$  \\
\hline
$\hat{U}$ & 0.5380 & 0.1548 & 0.1548  & 0.0 \\
$\one$    & 0.5452 & 0.1500 & 0.1500  & 0.0 \\
\hline
\end{tabular}
\caption{Singular values obtained from the OSD decomposition of the $H_\mathfrak{lw}$ part of Hamiltonian \eqref{eq:spinham} in the original $\one$-frame and the optimized $\hat{U}$-frame.}
\label{tab:singvals1}
\end{table}

Figure \ref{fig:Hs} shows the purity dynamics, without control, for different choices
in the subsystem decomposition of $\cal{H}_\mathfrak{s}$ and for different choices
of initial wall state. In particular, we compare the system in the original frame,
where the first spin is taken to be $\cal{H}_\mathfrak{l}$ and the second to be
$\cal{H}_\mathfrak{w}$ and a frame given by the optimized unitary $\hat{U}$.
Then, in each of these frames, we compare the optimized wall state which solves
problem \eqref{eq:optprob_state_full} with a random choice of wall state.

\begin{figure}[ht]
\centering
\includegraphics[width=0.7\linewidth]{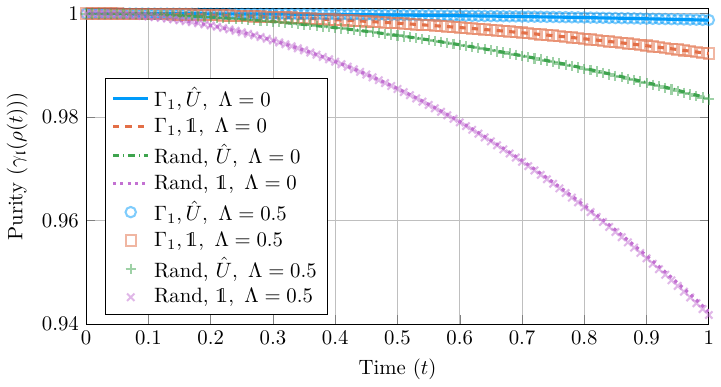}
\caption{%
Purity dynamics of the spin lattice model with different choices of initial wall state and subsystem.
The blue solid line shows the purity with the optimized subsystem and state, whereas
the the red dashed line shows the purity where only the wall state has been optimized.
The green dash-dotted and purple dotted lines show the purity with a random wall state in the optimized and unoptimized subsystems, respectively.
The markers of matching colors show the purity of the system when a Lindbladian pumping term \eqref{eq:pumping} is added to the environment.
}
\label{fig:Hs}
\end{figure}

In Figure \ref{fig:hs} we can see the effect of the different control actions.
These were applied to the system in the $\hat{U}$-frame, i.e. for the optimized $U = \hat{U}$,
and with the wall initialized in the corresponding optimal wall state.
The main difference that we notice with respect
to the Ising chain is that the purity dynamics do not converge to constant. Instead, the slowdown provided by the control
saturates. This is due to the non-zero direct coupling between the logical system and the environment.

\begin{figure}[ht]
\centering
\subfloat[][Repeated measurements.]{
\includegraphics[width=0.45\linewidth]{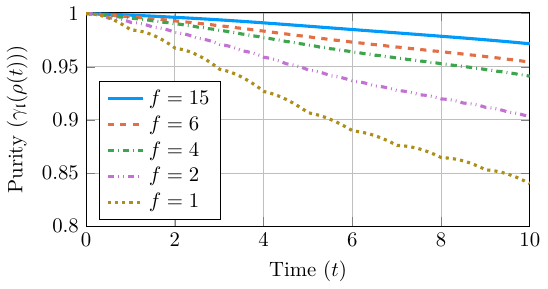}
\label{fig:meas_gains_hs}
}\
\subfloat[][Dissipation.]{
\includegraphics[width=0.45\linewidth]{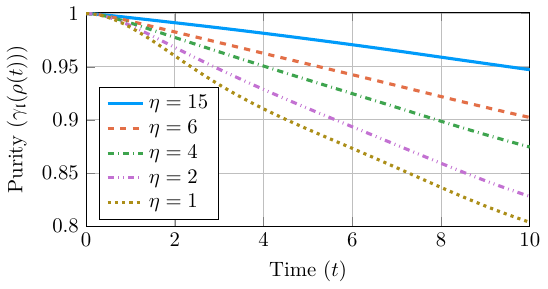}
\label{fig:diss_gains_hs}
}\
\subfloat[][Hamiltonian driving.]{
\includegraphics[width=0.45\linewidth]{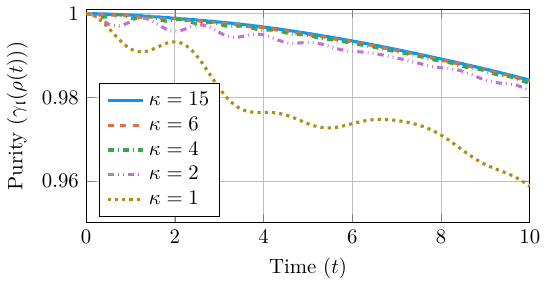}
\label{fig:hamd_gains_hs}
}\
\subfloat[][Time to reach $\gamma_\mathfrak{l} = 0.97$.]{
\includegraphics[width=0.45\linewidth]{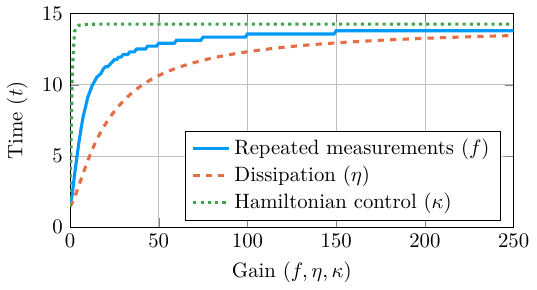}
\label{fig:time_threshd_hs}
}
\caption{\subref{fig:meas_gains_hs}, \subref{fig:diss_gains_hs}, \subref{fig:hamd_gains_hs} Purity dynamics when each of the proposed controls is applied to the spin lattice model for different values of the corresponding control gain variable. \subref{fig:time_threshd_hs} Time to reach purity equal to $0.97$ with different control schemes applied to the spin lattice model.}
\label{fig:hs}
\end{figure}

\subsection{Central spin model}

In the following example, we take $\cal{H}_\mathfrak{s}$ to be a single
$3/2$-spin, which is surrounded by a bath of $1/2$-spins.
The system in question has the following Hamiltonian terms,
\begin{equation}
H_\mathfrak{s} = \omega_s J_\mathfrak{s}^z + \eta_s J_\mathfrak{s}^x,
\end{equation}
\begin{equation}
H_\mathfrak{se} =
A_x J_\mathfrak{s}^x \sum_i J_{\mathfrak{e}_i}^x +
A_y J_\mathfrak{s}^y \sum_i J_{\mathfrak{e}_i}^y +
A_z J_\mathfrak{s}^z \sum_i J_{\mathfrak{e}_i}^z,
\end{equation}
\begin{equation}
H_{e} = \omega_e J_\mathfrak{e}^z + \eta_e J_\mathfrak{e}^x + \lambda_e \sum_{ij} J_{\mathfrak{e}_i}^x J_{\mathfrak{e}_j}^x,
\end{equation}
where
\begin{equation}
J_\mathfrak{s}^x = \left[\begin{matrix} 0 & \sqrt{3}/2 & 0 & 0 \\ \sqrt{3}/2 & 0 & 1 & 0 \\ 0 & 1 & 0 & \sqrt{3}/2 \\ 0 & 0 & \sqrt{3}/2 & 0 \end{matrix}\right] \otimes \one_\mathfrak{e},
\end{equation}
\begin{equation}
J_\mathfrak{s}^y = \left[\begin{matrix} 0 & -\iu \sqrt{3}/2 & 0 & 0 \\ \iu \sqrt{3}/2 & 0 & -\iu & 0 \\ 0 & \iu & 0 & - \iu \sqrt{3}/2 \\ 0 & 0 & \iu \sqrt{3}/2 & 0 \end{matrix}\right] \otimes \one_\mathfrak{e},
\end{equation}
\begin{equation}
J_\mathfrak{s}^z = \left[\begin{matrix} 1.5 & 0 & 0 & 0 \\ 0 & 0.5 & 0 & 0 \\ 0 & 0 & -0.5 & 0 \\ 0 & 0 & 0 & -1.5 \end{matrix}\right] \otimes \one_\mathfrak{e},
\end{equation}
and where $J_\mathfrak{e}^{\{x,y,z\}} = \sum_k J_{\mathfrak{e}_i}^{\{x,y,z\}}$, and $J_{\mathfrak{e}_i}^{\{x,y,z\}}$ are given by
\eqref{eq:Jx}, \eqref{eq:Jy} and \eqref{eq:Jz}, respectively.
In addition, the following parameters were used
$\omega_s = 1.01$, $\eta_s = 0$, $A_x = 0.71$, $A_y = 0$, $A_z = 0.19$,
$\omega_e = 1.92$, $\eta_e = 0$, and $\lambda_e = 0.31$, and the bath was simulated with four $1/2$ spins.

In this system there is no ``obvious'' or ``physical''
decomposition of $\cal{H}_\mathfrak{s}$ into a logical and a wall subsystems.
In the previous example, we had two $1/2$-spins and could therefore choose
a decomposition based on physical intuition, i.e. take each spin as a separate
subsystem.
In this case, a naive approach would be to decompose the Hilbert space $\cal{H}_\mathfrak{s}$ along
virtual degrees of freedom into the tensor product of two $1/2$-spin Hilbert spaces in the basis
that diagonalizes $J_\mathfrak{s}^z$, i.e. $\ket{00} = \ket{3/2}$, $\ket{01} = \ket{1/2}$, $\ket{10} = \ket{-1/2}$
and $\ket{11} = \ket{-3/2}$, where $J_\mathfrak{s}^z \ket{m} = m \ket{m}, m \in \{-3/2, -1/2, 1/2, 3/2\}$.
This is what we refer to in the following as the $\one$-frame. Alternatively, by applying the optimized unitary
change of basis $U = \hat U$ to the $\one$-frame, we obtain the $\hat U$-frame decomposition.

Notice that when $U = \one$, that is, in the $\one$ frame, this decomposition leads
to degenerate dominant singular values when performing the OSD of $H_\mathfrak{sw}$,
violating one of our assumptions for the optimization of the initial wall state.
On the other hand, in the $\hat{U}$ frame, the degeneracy is resolved.

Table \ref{tab:centralspin_cost} shows the values of the cost function for the optimized and original frames. Even though there is a significant reduction
in the coupling of the logical subsystem to the other partitions, we were not able to completely remove it.

\begin{table}[!htpb]
\centering
\begin{tabular}{c|c|c}
\hline
    & $J(U_\mathfrak{s})$ & $J_{\rm reg}(U_\mathfrak{s})$  \\
\hline
$\hat{U}$ & 7.7830  & 7.8728 \\
$\one$    & 30.6792  & 30.6792 \\
\hline
\end{tabular}
\caption{Value of the unregularized and regularized cost functions, i.e.  $J(U_\mathfrak{s})$ and $J_{\rm reg}(U_\mathfrak{s})$ with the central spin model
for different spin orders $s$ in the original $\one$-frame and the optimized $\hat{U}$-frame.}
\label{tab:centralspin_cost}
\end{table}

Table \ref{tab:centralspin_singvals} shows the singular values obtained
from the OSD decomposition of $H_\mathfrak{lw}$. Notice that in the original frame this interaction term
was identically zero, that is why the singular values in the second row of the table are all zero. Indeed, in the original frame
the logical subsystem and the wall were not directly coupled (only indirectly through the environment) and we are able to find a different
subsystem decomposition in which most of the coupling is mediated by the wall ($\|\hat H_\mathfrak{lw}\|^2 = 8.98$ and $\|\hat \Delta\|^2 = 7.78\|$).
In fact, if it was not because of the fact that $\hat \Delta \neq 0$, this system would have an approximate wall state that would be very close to
a perfect one, since $s_1$ is much larger than $s_2$.

\begin{table}[!htpb]
\centering
\begin{tabular}{c|c|c|c|c}
\hline
    & $s_1$ & $s_2$ & $s_3$ & $s_4$  \\
\hline
$\hat{U}$ & 0.9986 & 0.0012 & 0.0 & 0.0 \\
$\one$    & 0.0 & 0.0 & 0.0  & 0.0 \\
\hline
\end{tabular}
\caption{Singular values obtained from the OSD decomposition of the $H_\mathfrak{lw}$ part of the central spin model in the original $\one$-frame and the optimized $\hat{U}$-frame.}
\label{tab:centralspin_singvals}
\end{table}

Figure \ref{fig:Hm} shows the purity dynamics without control for a random initial state on the logical subsystem and
a thermal state with inverse temperature $\beta = 0.01$ in the environment. Although in the Ising spin chain model
the cost functions $\Gamma_1$ and $\Gamma_2$ were identical, due to the fact that $\Delta = 0$, in this case,
when $U = \one$, the opposite is true. The decoherence of the logical system is completely determined
by $\Delta$, as $H_\mathfrak{lw} = 0$. Therefore, neither $\Gamma_2$ nor Proposition \ref{thm:optimwall} can be used,
as $s_1 = s_2 = 0$. However, we are able to find an optimal wall state based only on the indirect LW
coupling captured by $\Gamma_1$.

\begin{figure}[ht]
\centering
\includegraphics[width=0.7\linewidth]{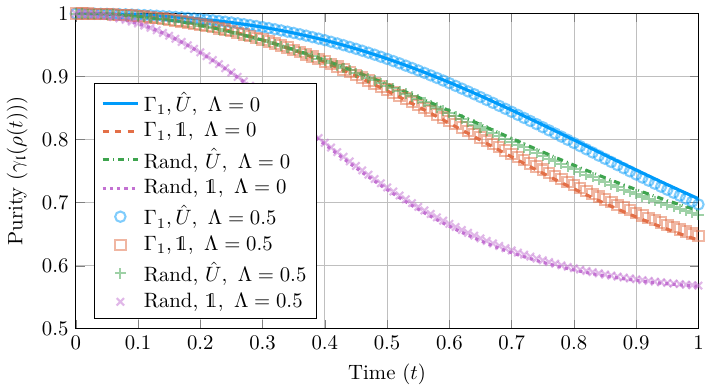}
\caption{%
Purity dynamics of the central spin model with different choices of initial wall state and subsystem. The blue solid line shows the
purity with the optimized subsystem and state, whereas the the red dashed line shows the purity where only the wall state has been optimized. The green dash-dotted and purple dotted lines show the purity with a random wall state in the optimized and unoptimized subsystems, respectively. The markers of matching colors show the purity of the system when a Lindbladian pumping term \eqref{eq:pumping} is added to each of the spins of the bath, showing it has minimal effect on the purity dynamics.
}
\label{fig:Hm}
\end{figure}

As in the case of the spin lattice model, the slow-down provided by the controls saturates at a certain limit determined by
the LE direct coupling. This is evidenced in Figure \ref{fig:hm}.
The controls were applied to the system in the $\hat{U}$-frame, i.e. for the optimized $U = \hat{U}$,
and with the wall initialized in the corresponding optimal wall state.
In this case, we do not observe the peak that was present in
Figure \ref{fig:time_threshd_hs}, as the purity decay in Figure \ref{fig:hamd_gains_hm} is much more straight, without
the pronounced oscillations seen before.

\begin{figure}[ht]
\centering
\subfloat[][Repeated measurements.]{
\includegraphics[width=0.45\linewidth]{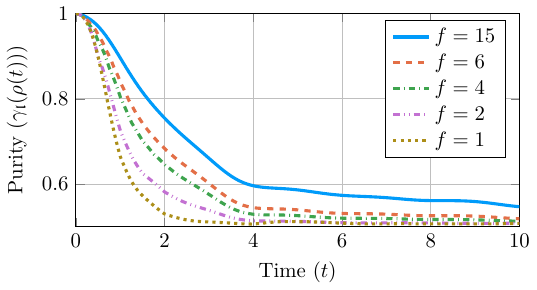}
\label{fig:meas_gains_hm}
}\
\subfloat[][Dissipation.]{
\includegraphics[width=0.45\linewidth]{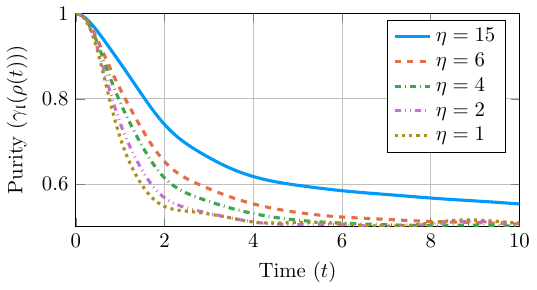}
\label{fig:diss_gains_hm}
}\
\subfloat[][Hamiltonian driving.]{
\includegraphics[width=0.45\linewidth]{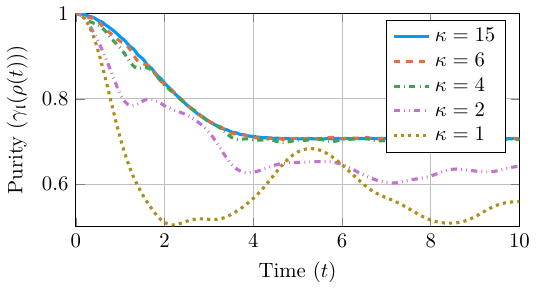}
\label{fig:hamd_gains_hm}
}\
\subfloat[][Time to reach $\gamma_\mathfrak{l} = 0.97$.]{
\includegraphics[width=0.45\linewidth]{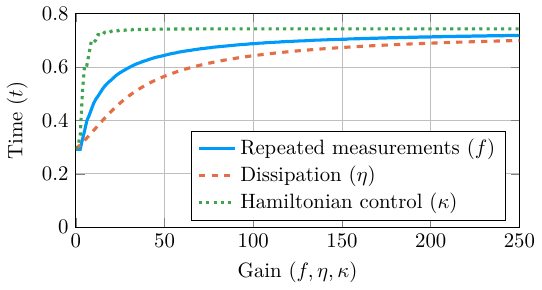}
\label{fig:time_threshd_hm}
}
\caption{\subref{fig:meas_gains_hm}, \subref{fig:diss_gains_hm}, \subref{fig:hamd_gains_hm} Purity dynamics when each of the proposed controls is applied to the central spin model for different values of the corresponding control gain variable. \subref{fig:time_threshd_hm} Time to reach purity equal to $0.97$ with different control schemes applied to the spin lattice model.}
\label{fig:hm}
\end{figure}

\section{Comparison and Integration with Dynamical Decoupling}\label{sec:dd}

A widespread and successful approach for the protection of quantum information is offered by quantum dynamical decoupling (DD) techniques \cite{violaDynamicalDecouplingOpen1999}.
DD consists in actively symmetrizing the dynamics so that the leading orders in the coupling between the system and the environment are suppressed. This is done by the introduction of periodic control such that in the
Floquet (stroboscopic) picture, the system (approximately) follows a unitary evolution.
In this section, we wish to study how our quantum wall method and DD compare and complement each other.

In the following, the target system for the decoupling is considered to be a qubit.
In this case, an idealized universal zero-order DD pulse sequence, that is, a pulse sequence that effectively suppresses any interactions averaged in the pulse cycle time, is given by a sequence of four instantaneous unitary transformations $X,Z,X,Z$ interspaced by free evolution on a given time $T/4.$ Here $T$ represents the length of the control cycle. $X, Z$ are $\pi$-pulses along
the $x$ and $z$ axes, respectively:
\begin{equation}
X = \begin{bmatrix}
0 & 1 \\
1 & 0
\end{bmatrix},
\end{equation}
\begin{equation}
Z = \begin{bmatrix}
1 & 0 \\
0 & -1
\end{bmatrix}.
\end{equation}
The unitary evolution over a control cycle, including the free evolution, reads
\begin{equation}
\label{eq:ddseq}
\begin{split}
U_{\rm DD} =& (Z \otimes \one) e^{-\iu H T/4} (X \otimes \one)
e^{-\iu H T/4} (Z \otimes \one) e^{-\iu H T/4} (X \otimes \one) e^{-\iu H T/4}.
\end{split}
\end{equation}
In some special cases, when the interaction terms are known and are all commuting, decoupling may be achieved by applying pulses along a single
axis. For instance, if the Hamiltonian is a pure dephasing Hamiltonian,
then we do not need to apply the pulses along the $z$ axis, resulting in the
simplified sequence
\begin{equation}
\label{eq:ddseq_simple}
U_{\rm DD} = X e^{-\iu H T/2} X e^{-\iu H T/2}.
\end{equation}
This is an instance of {\em selective decoupling}, where a simpler pulse sequence is tailored to suppress only the dominant error sources, rather than obtaining complete decoupling in some given order \cite{ticozziDynamicalDecouplingQuantum2006}.

The sequence \eqref{eq:ddseq} corresponds to an ideal case in which instantaneous
unbounded $\pi$-pulses are possible. A more realistic setting would be to
consider finite-amplitude pulses. More precisely, we consider the following
time-dependent Hamiltonian,
\begin{equation}
    H_{\rm DD} = H + u_x(t) X + u_z(t) Z,
\end{equation}
where
\begin{equation}
    u_x(t) = \kappa (\Theta((t\ {\rm mod} \frac{T}{2}) - \frac{T}{4} + \tau) - \Theta((t\ {\rm mod} \frac{T}{2}) - \frac{T}{4})),
\end{equation}
\begin{equation}
    u_z(t) = u_x(t - T/4),
\end{equation}
$\Theta(t)$ is the Heaviside function and $\kappa \tau = \pi/2$.
Equivalently, for the simplified sequence \eqref{eq:ddseq_simple},
the corresponding pulses would be
\begin{equation}
    u_x(t) = \kappa (\Theta((t\ {\rm mod} \frac{T}{2}) - \frac{T}{2} + \tau) - \Theta((t\ {\rm mod} \frac{T}{2}) - \frac{T}{2})),
\end{equation}
\begin{equation}
    u_z(t) = 0.
\end{equation}

In the following, we apply the DD to the spin lattice and the central spin models from the previous section
and study how it compares to our quantum wall method with strong Hamiltonian driving, as well as how they
can be used in tandem to further improve the protection of logical information stored in $\mathcal{H}_\mathfrak{l}$.

\subsection{Direct comparison between wall state engineering and selective DD}

Figure \ref{fig:dd} shows the purity dynamics for the spin lattice and the central spin models when DD is applied
to $\mathcal{H}_\mathfrak{l}$.
The DD cycle frequency was set to $f=10$, i.e. $40$ $\pi/2$-pulses per time unit in the case of universal DD
and $20$ $\pi/2$-pulses per unit of time in the case of selective DD.
Finite-amplitude pulses were used, where the control signals were on a fifth of the time, and the system evolved
freely four fifths of the time.
For comparison, the amplitude of the wall Hamiltonian driving was set to $\kappa = 5 \pi$.
This introduces the same amount of energy into the wall subsystem as continuously applying $\pi/2$-pulses of
duration $\tau = 0.1$. Indeed, with these values, we have $\kappa \tau = \pi/2$.
In this sense, this control action is equivalent to
applying $10$ $\pi/2$-pulses per unit of time to the wall subsystem.

In this subsection, we consider the comparison to the cases in which the original frame was used for DD.
The case of DD on the optimized $\hat U$-frame will be considered later in proposing an
``enhanced'' selective DD using our optimized frame. 
For the spin lattice model, both DD approaches, selective and universal, has a performance comparable
to our quantum wall method. This is not the case
for the central spin model, in which universal DD greatly outperforms our method, which has a performance more
in line with selective DD. This is because of a nontrivial residual coupling between the logical system and the environment even in the optimized frame: while the universal DD is able to address that as well, the selective method and the wall state stabilization are expected to struggle. In the next subsection we propose a way to integrate both approaches (universal DD and wall state stabilization) to obtain a protocol that significantly outperforms both.

\begin{figure}[ht]
\centering
\subfloat[][Spin lattice model.]{
\includegraphics[width=0.45\linewidth]{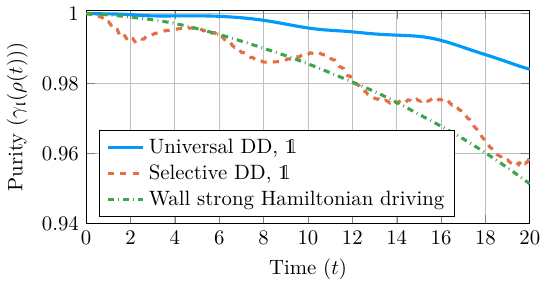}
\label{fig:seldd_Hs}
}\
\subfloat[][Central spin model.]{
\includegraphics[width=0.45\linewidth]{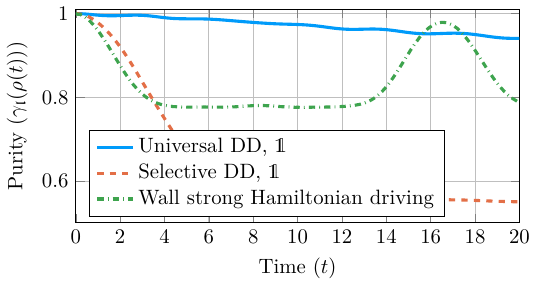}
\label{fig:seldd_Hm}
}
\caption{Comparison between DD and wall state engineering.
The solid blue line represents DD with an universal pulse sequence. The red dashed line represents
selective DD. The green dash-dotted line represents strong Hamiltonian driving on the wall subsystem.
}
\label{fig:dd}
\end{figure}

\subsection{Wall-enhanced DD}

In the following, we consider two levels of integration of our proposed techniques with DD.
The first is to apply DD in the optimized $\hat U$-frame, without additional state stabilization. The second is to simultaneously apply DD on the
logical subsystem and strong Hamiltonian driving on the wall subsystem.

We start by continuing the comparison from the previous subsection, but taking into account DD
applied in the optimized $\hat U$-frame, as shown in Figure \ref{fig:dd_enhc}.
We are mainly interested in selective DD, rather than universal DD.
In the case of the spin lattice model, we observe a slowdown of the
initial purity loss, but it is still faster than that of the universal DD and that of our wall method.
The fact that it still performs worse than universal DD is to be expected, as we know from
Table \ref{tab:singvals1} that the logical subsystem experiences interactions along all three axes of
the Bloch sphere, with relatively uniform intensity. However, in the case of the central spin model,
selective DD provides a practically perfect decoupling when applied in the optimized $\hat U$ frame.
This is compatible with the singular values observed in Table \ref{tab:centralspin_singvals}.

\begin{figure}[ht]
\centering
\subfloat[][Spin lattice model.]{
\includegraphics[width=0.45\linewidth]{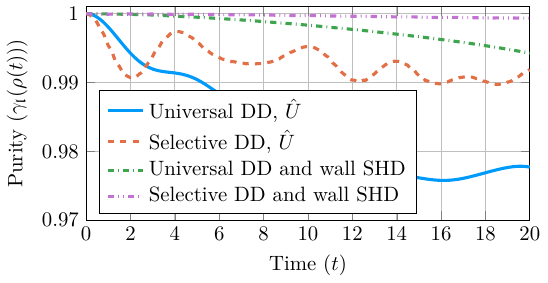}
\label{fig:seldd_enhc_Hs}
}\
\subfloat[][Central spin model.]{
\includegraphics[width=0.45\linewidth]{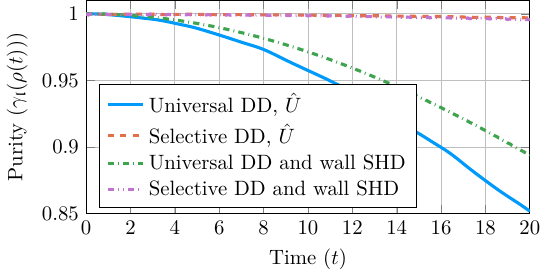}
\label{fig:seldd_enhc_Hm}
}
\caption{Comparison between different kinds of DD, enhanced by wall state engineering.
The solid blue line represents DD with an universal pulse sequence. The red dashed line represents
selective DD. The green dash-dotted line represents universal DD with strong Hamiltonian driving.
The purple dash-dot-dotted line represents selective DD with strong Hamiltonian driving.
}
\label{fig:dd_enhc}
\end{figure}

The next level of enhancement is the application of strong Hamiltonian driving on the wall subsystem
simultaneously to the DD control on the logical subsystem. In Figure \ref{fig:dd_enhc} we see that
it provides practically perfect decoupling in both examples, outperforming all other
decoupling techniques. However, these results are dependent on the combination of DD cycle frequency $f$
and Hamiltonian driving amplitude $\kappa$. For certain combinations of these values, a sort of
anti-Zeno effect \cite{zhangCriterionQuantumZeno2018} may occur, catastrophically accelerating the loss of purity.
Figures \ref{fig:seldd_antizeno_Hs} and \ref{fig:seldd_antizeno_Hm} show examples of the purity dynamics
when this anti-Zeno effect occurs, in contrast to the dynamics when only selective DD is applied and when
it is applied together with Hamiltonian driving with amplitudes both lower and higher than the catastrophic one.
Finally, Figures \ref{fig:seldd_times_Hs} and \ref{fig:seldd_times_Hm} show the time it takes for purity
$\gamma_\mathfrak{l}$ to reach a value of $0.97$ as a function of the Hamiltonian driving amplitude $\kappa$,
both for selective and universal DD. The DD-cycle frequency was kept at $f = 10$.
Although most of these times are not visible in the plots as they exceed the simulation times, the figures are
useful for identifying catastrophic combinations of parameters.
Future work will be devoted to the detailed study of this interference between the two control actions.

\begin{figure}[ht]
\centering
\subfloat[][Sample purity dynamics - Spin lattice.]{
\includegraphics[width=0.45\linewidth]{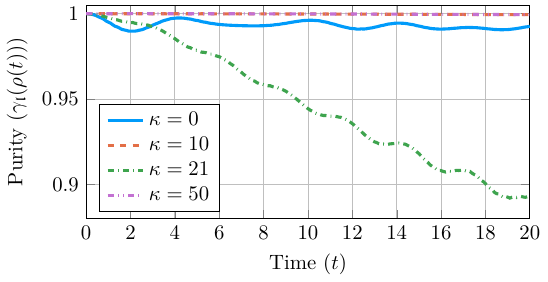}
\label{fig:seldd_antizeno_Hs}
}\
\subfloat[][Sample purity dynamics - Central spin.]{
\includegraphics[width=0.45\linewidth]{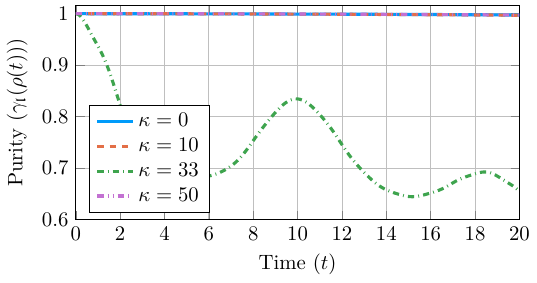}
\label{fig:seldd_antizeno_Hm}
}\
\subfloat[][Time to reach $\gamma_\mathfrak{l} = 0.97$ - Spin lattice.]{
\includegraphics[width=0.45\linewidth]{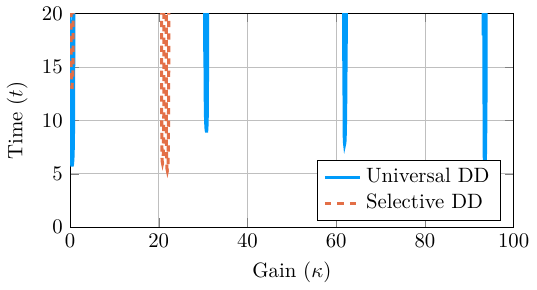}
\label{fig:seldd_times_Hs}
}\
\subfloat[][Time to reach $\gamma_\mathfrak{l} = 0.97$ - Central spin.]{
\includegraphics[width=0.45\linewidth]{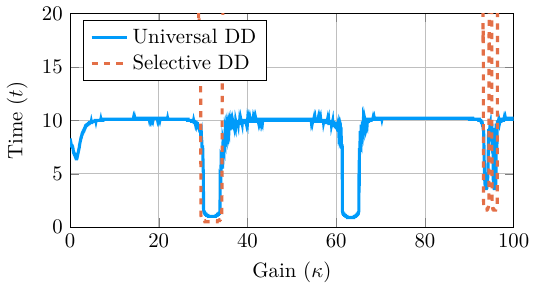}
\label{fig:seldd_times_Hm}
}
\caption{Purity dynamics when selective DD on the logical subsystem
is combined with strong Hamiltonian driving on the wall for different values of the driving amplitude $\kappa$
and a DD cycle frequency of $f = 10$. \subref{fig:seldd_antizeno_Hs}, \subref{fig:seldd_antizeno_Hm} For certain values of
$\kappa$ (here, $\kappa=21$ in \subref{fig:seldd_antizeno_Hs} and $\kappa = 33$ in \subref{fig:seldd_antizeno_Hm}) an ``anti-Zeno'' effect is observed. \subref{fig:seldd_times_Hs}, \subref{fig:seldd_times_Hm} This behavior is
also observed when universal DD is performed, instead of selective DD.}
\label{fig:ddwall}
\end{figure}

\section{Eternally bounded purity}
\label{sec:eternal}

In the numerical exploration of the proposed methods, a remarkable phenomenon emerges, as illustrated in Figure \ref{fig:control_hamd}:  {\em under certain conditions}, the strong Hamiltonian driving exhibits a uniform lower bound on the decrease in purity. Not only that, but such a lower bound can be raised arbitrarily close to one by increasing the driving Hamiltonian gain.  In contrast, all the other stabilization methods for the wall states only induce a reduced decrease rate for purity, which still approaches its minimum of 1/2 at later times. 

We next proceed to characterize the conditions under which we have
the emergence of eternal lower bounds on the purity when strong Hamiltonian control is applied to the quantum wall.
In particular, we find (1) a time-independent lower bound for the purity in function of the initial state and the
system's Hamiltonian (2) a series of conditions on the Hamiltonian so that the purity remains arbitrarily
close to $1$ at every time $t$, if one suitably increases the strong Hamiltonian driving gain $\kappa$.

Consider a system $\cal H = H_\mathfrak{l} \otimes H_\mathfrak{w} \otimes H_\mathfrak{e}$ with Hamiltonian
$H_\kappa = H + \kappa \one_\mathfrak{l} \otimes H_u \otimes \one_\mathfrak{e}$. Let $\ket{\psi^\kappa_i}$ and $\lambda^\kappa_i$
be the eigenstates and eigenvalues of $H_\kappa$, respectively. For the sake of simplicity, in this Section we do not consider any dissipative
term $L_k$: the calculations can be in principle extended to that case, but it would burden the notation and would require working with vectorized representation of the generators.
Therefore, the evolution of the system is purely Hamiltonian,
\begin{equation}
\begin{split}
\rho(t) =& e^{-\iu H_\kappa t} \rho_0 e^{\iu H_\kappa t}
= \Big(\sum_i e^{-\iu \lambda_i t} \ketbra{\psi^\kappa_i}{\psi^\kappa_i}\Big) \rho_0 \Big(\sum_j e^{\iu \lambda_j t} \ketbra{\psi^\kappa_j}{\psi^\kappa_j}\Big).
\end{split}
\end{equation}
Let us define the following instrumental quantities (with the explicit dependence on $\kappa$ is omitted to lighten the notation):
\begin{equation}
\lambda_{abij} = \lambda^\kappa_a - \lambda^\kappa_b + \lambda^\kappa_i - \lambda^\kappa_j,
\end{equation}
\begin{equation}
\rho_{ij} = \bra{\psi^\kappa_i} \rho_0 \ket{\psi^\kappa_j},
\end{equation}
\begin{equation}
\tau_{ij} = \tr_\mathfrak{we}(\ketbra{\psi^\kappa_i}{\psi^\kappa_j}).
\end{equation}
Then, the purity of the logical partition can be written as
\begin{equation}
\begin{split}
\gamma_\mathfrak{l}(\rho(t)) =& \tr(\tr_\mathfrak{we}(\rho(t))^2) \\
=& \tr(\tr_\mathfrak{we}(
(\sum_i e^{-\iu \lambda_i t} \ketbra{\psi^\kappa_i}{\psi^\kappa_i}) \rho_0 (\sum_j e^{\iu \lambda_j t} \ketbra{\psi^\kappa_j}{\psi^\kappa_j}))^2) \\
=& \sum_{abij} e^{-\iu (\lambda_a - \lambda_b + \lambda_i - \lambda_j) t}
\bra{\psi^\kappa_a} \rho_0 \ket{\psi^\kappa_b} \bra{\psi^\kappa_i} \rho_0 \ket{\psi^\kappa_j}
\tr(\tr_\mathfrak{we}(\ketbra{\psi^\kappa_a}{\psi^\kappa_b})
\tr_\mathfrak{we}(\ketbra{\psi^\kappa_i}{\psi^\kappa_j})
) \\
=& \sum_{abij} e^{-\iu \lambda_{abij} t} \rho_{ab} \rho_{ij} \tr(\tau_{ab} \tau_{ij}) \\
=& \sum_{abij} \gamma_{abij}(t),
\end{split}
\end{equation}
where, we defined:
\begin{equation}
\gamma_{abij}(t) = e^{-\iu \lambda_{abij} t} \rho_{ab} \rho_{ij} \tr(\tau_{ab} \tau_{ij}).
\end{equation}

Whenever $\lambda_{abij} = 0$
we have that $\gamma_{abij}(t) = \gamma_{abij}$ is constant. Otherwise,
$\gamma_{abij}(t) + \gamma_{baji}(t) = 2 \cos(\lambda_{abij} t) \rho_{ab} \rho_{ij} \tr(\tau_{ab} \tau_{ij})$.
Let us define the set of indices for which the summands $\gamma_{abij}(t)$ are not constant,
\begin{equation}
{\cal C}_1 = \{(a,b,i,j) | \lambda_{abij} \neq 0\}.
\end{equation}

This allows us to split the expression of the purity into two sums, one consisting only constant terms, and the other
being a superposition of oscillations
\begin{equation}
\label{eq:purity_const}
\begin{split}
\gamma_\mathfrak{l}(\rho(t)) =& \sum_{(a,b,i,j) \in \bar{\mathcal{C}}_1} \rho_{ab} \rho_{ij} \tr(\tau_{ab} \tau_{ij}) +
\sum_{(a,b,i,j) \in \mathcal{C}_1} e^{-\iu \lambda_{abij} t} \rho_{ab} \rho_{ij} \tr(\tau_{ab} \tau_{ij}) \\
=& \bar{\gamma} + \vec{e}(t)^\top \vec{\rho},
\end{split}
\end{equation}
where $\bar{\gamma} = \sum_{(a,b,i,j) \in \bar{\mathcal{C}}_1} \rho_{ab} \rho_{ij} \tr(\tau_{ab} \tau_{ij})$,
$\vec{e}(t) = [e^{-\iu \lambda_{abij} t}]_{(a,b,i,j) \in \mathcal{C}_1}$ and
$\vec{\rho} = [\rho_{ab} \rho_{ij} \tr(\tau_{ab} \tau_{ij})]_{(a,b,i,j) \in \mathcal{C}_1}$, where the last two quantities are vectors that have as many components as the elements of ${\cal C}_1.$
The term $\bar\gamma$ gives us the base value around which the purity oscillates, and the vector $\vec\rho$
encodes the amplitudes of the individual oscillations, collected in $\vec e(t)$.

Ideally, we would like $\vec\rho = 0$.
That would correspond to a perfectly preserved purity, i.e. a unitary evolution of the reduced
state $\rho_\mathfrak{l}$. If there exists a state $\rho_\mathfrak{w} = \ketbra{w}{w}$
such that that is the case, independently of the initial states $\rho_\mathfrak{l}$ and $\rho_\mathfrak{e}$, then it
is a perfect wall state, as the evolution of the reduced state $\rho_\mathfrak{s}$ would be unitary independently
of its initial state and the state of the environment, which is precisely the statement of Definition \ref{def:perfectwall}.
Then, by Proposition \ref{thm:perfectwalldfs} we also know that a DFS exists.

In general, even if $\vec\rho \neq 0$, i.e. the system is not initialized in a DFS,
the bound given in the following proposition applies.

\begin{proposition}
Given a system with purity dynamics as described by \eqref{eq:purity_const},
then the following bound is satisfied for all time $t \in \mathbb{R}$,
\begin{equation}
\gamma_\mathfrak{l}(\rho(t)) \geq \bar{\gamma} - \|\vec{\rho}\|_1.
\end{equation}
\end{proposition}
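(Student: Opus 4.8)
The plan is to read off from \eqref{eq:purity_const} the decomposition $\gamma_\mathfrak{l}(\rho(t)) = \bar{\gamma} + \vec{e}(t)^\top \vec{\rho}$, in which the first term does not depend on $t$ and the second is a finite superposition of pure oscillations whose coefficients $e^{-\iu\lambda_{abij}t}$ all have unit modulus. Since $\gamma_\mathfrak{l}(\rho(t)) = \tr(\rho_\mathfrak{l}(t)^2)$ is the trace of the square of a Hermitian (positive semidefinite) operator, it is a real number for every $t$. It therefore suffices to bound the oscillating term $\vec{e}(t)^\top\vec{\rho}$ from below, uniformly in $t$, by $-\|\vec{\rho}\|_1$; combining this with the decomposition yields the stated inequality for all $t\in\mathbb{R}$.

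First I would check that $\bar{\gamma}$ is real, which also guarantees that $\vec{e}(t)^\top\vec{\rho} = \gamma_\mathfrak{l}(\rho(t)) - \bar{\gamma}$ is real for every $t$. Consider the involution $(a,b,i,j)\mapsto(b,a,j,i)$ on the index set: it preserves $\bar{\mathcal{C}}_1$ because $\lambda_{baji} = -\lambda_{abij}$. Using Hermiticity of $\rho_0$ one has $\rho_{ba} = \overline{\rho_{ab}}$ and $\rho_{ji} = \overline{\rho_{ij}}$; and since the partial trace commutes with the adjoint, $\tau_{ba} = \tau_{ab}^\dagger$ and $\tau_{ji} = \tau_{ij}^\dagger$, so by cyclicity of the trace $\tr(\tau_{ba}\tau_{ji}) = \overline{\tr(\tau_{ab}\tau_{ij})}$. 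Hence the summand attached to $(a,b,i,j)$ is mapped to its complex conjugate, so $\bar{\gamma}$ is a sum of conjugate pairs and is real (the same pairing shows the oscillating sum is real, consistent with the pairwise cosine form noted just before \eqref{eq:purity_const}).

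Finally I would apply the triangle inequality to the oscillating part: for every $t\in\mathbb{R}$,
\begin{equation}
\bigl|\vec{e}(t)^\top\vec{\rho}\bigr|
\le \sum_{(a,b,i,j)\in\mathcal{C}_1} \bigl|e^{-\iu\lambda_{abij}t}\bigr|\,\bigl|\rho_{ab}\rho_{ij}\tr(\tau_{ab}\tau_{ij})\bigr|
= \sum_{(a,b,i,j)\in\mathcal{C}_1} \bigl|\rho_{ab}\rho_{ij}\tr(\tau_{ab}\tau_{ij})\bigr|
= \|\vec{\rho}\|_1,
\end{equation}
since each phase factor satisfies $|e^{-\iu\lambda_{abij}t}| = 1$; note that only the entrywise moduli of $\vec{\rho}$ enter, so no further structure of $\vec{\rho}$ is needed. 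Therefore $\gamma_\mathfrak{l}(\rho(t)) = \bar{\gamma} + \vec{e}(t)^\top\vec{\rho} \ge \bar{\gamma} - |\vec{e}(t)^\top\vec{\rho}| \ge \bar{\gamma} - \|\vec{\rho}\|_1$ for all $t$. There is no substantial obstacle here: the estimate is essentially a one-line triangle inequality once the decomposition \eqref{eq:purity_const} is available, and in particular it uses no spectral information about $H_\kappa$. The only point requiring a moment's care is ensuring that one is comparing genuinely real quantities, i.e. that $\bar{\gamma}$ (and hence $\vec{e}(t)^\top\vec{\rho}$) is real, which is exactly the index-pairing argument of the previous paragraph.
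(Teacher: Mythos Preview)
Your proof is correct and follows essentially the same approach as the paper: bound the oscillating sum $\vec{e}(t)^\top\vec{\rho}$ in absolute value by $\|\vec{\rho}\|_1$ using that each phase $e^{-\iu\lambda_{abij}t}$ has unit modulus. Your explicit verification that $\bar{\gamma}$ (and hence the oscillating part) is real via the involution $(a,b,i,j)\mapsto(b,a,j,i)$ is a welcome clarification that the paper leaves implicit.
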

\begin{proof}
It is easy to see that
\begin{equation*}
\begin{split}
\min_{t \in \mathbb{R}} \gamma_\mathfrak{l}(\rho(t)) =&
\bar{\gamma} + \min_{t \in \mathbb{R}} \vec{e}(t)^\top \vec{\rho} \\
=&
\bar{\gamma} + \min_{t \in \mathbb{R}} \sum_{(a,b,i,j) \in \mathcal{C}_1} e^{-\iu \lambda_{abij} t} \rho_{ab} \rho_{ij} \tr(\tau_{ab} \tau_{ij}) \\
\geq&
\bar{\gamma} + \sum_{(a,b,i,j) \in \mathcal{C}_1} \min_{t \in \mathbb{R}} e^{-\iu \lambda_{abij} t} \rho_{ab} \rho_{ij} \tr(\tau_{ab} \tau_{ij}) \\
=& \bar{\gamma} - \sum_{(a,b,i,j) \in \mathcal{C}_1} |\rho_{ab} \rho_{ij} \tr(\tau_{ab} \tau_{ij})| \\
=& \bar{\gamma} - \|\vec{\rho}\|_1.
\end{split}
\end{equation*}
\end{proof}

At this point, we study the effect of the strong Hamiltonian driving on this bound.
We begin by introducing the following lemma, which will be useful to characterize conditions necessary
for the vector $\vec\rho$ to vanish in the limit $\kappa \to \infty$.

\begin{lemma}
\label{thm:asymp_eigs}
Let $H_\kappa$ be the Hamiltonian of a tripartite system ${\cal H}$ with a tunable parameter $\kappa$, such that
\begin{equation}
H_\kappa = H_0 + \kappa \mathds{1} \otimes H_u \otimes \mathds{1},
\end{equation}
where $H_u$ is non-degenerate. Then, its eigenstates $\ket{\psi^\kappa_j}$ satisfy
\begin{equation}
\label{eq:w_eigs}
\lim_{\kappa \to \infty} \ket{\psi^\kappa_j} =
\sum_{ac} \alpha_{abc}^{(j)} \ket{a} \ket{w_b} \ket{c} = \ket{\tilde \psi_j},
\end{equation}
where $H_u \ket{w_b} = \lambda^u_b \ket{w_b}$. Namely, the eigenstates of $H_\kappa$
converge to eigenstates of $\one \otimes H_u \otimes \one$, where the wall partition
is not entangled to any of the other two partitions.

Finally, the widths of the energy level splittings induced by $H_0$ on $\kappa \one \otimes H_u \otimes \one$ converge to
\begin{equation}
\label{eq:w_eigv}
\lim_{\kappa \to \infty} \bra{\psi_j} (H_\kappa - \kappa \one \otimes H_u \otimes \one) \ket{\psi_j} =
\bra{\tilde \psi_j} H_0 \ket{\tilde \psi_j}.
\end{equation}
\end{lemma}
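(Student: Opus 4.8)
\emph{Plan.} This is a degenerate perturbation theory statement, and I would treat $\varepsilon:=1/\kappa$ as the small parameter: rescaling, $\varepsilon H_\kappa=\one\otimes H_u\otimes\one+\varepsilon H_0$, so the unperturbed operator is $\one\otimes H_u\otimes\one$, whose spectrum is $\{\lambda^u_b\}_b$ — the eigenvalues of $H_u$, which are \emph{distinct} because $H_u$ is non-degenerate — with eigenprojections $P_b=\one\otimes\ketbra{w_b}{w_b}\otimes\one$, each of multiplicity $n_\mathfrak{l}\,n_\mathfrak{e}$. First I would invoke Rellich's analytic perturbation theorem: the family $\varepsilon\mapsto\varepsilon H_\kappa$ is Hermitian and affine (hence real-analytic) in $\varepsilon$, so near $\varepsilon=0$ one can choose eigenvalues of $\varepsilon H_\kappa$ and an orthonormal basis of eigenvectors $\ket{v_j(\varepsilon)}$ analytic in $\varepsilon$; then $\ket{\psi^\kappa_j}=\ket{v_j(1/\kappa)}\to\ket{v_j(0)}$ as $\kappa\to\infty$, so the limit on the left of \eqref{eq:w_eigs} exists with this indexing. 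The remaining work is to pin down $\ket{v_j(0)}$ and then read off \eqref{eq:w_eigv}.

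\emph{Concentration onto one $H_u$-eigenspace.} The key step is to show $\ket{\psi^\kappa_j}$ concentrates on a single $P_b$. Since $\|H_\kappa-\kappa\,\one\otimes H_u\otimes\one\|=\|H_0\|$, every eigenvalue $\lambda^\kappa_j$ lies within $\|H_0\|$ of $\{\kappa\lambda^u_b\}_b$; for $\kappa$ large enough that the points $\kappa\lambda^u_b$ are pairwise separated by more than $2\|H_0\|$, the nearest one $\kappa\lambda^u_{b}$, $b=b(j)$, is unique. Projecting $H_\kappa\ket{\psi^\kappa_j}=\lambda^\kappa_j\ket{\psi^\kappa_j}$ with $P_{b'}$, $b'\neq b$, and using $P_{b'}(\one\otimes H_u\otimes\one)=\lambda^u_{b'}P_{b'}$, gives $(\kappa\lambda^u_{b'}-\lambda^\kappa_j)P_{b'}\ket{\psi^\kappa_j}=-P_{b'}H_0\ket{\psi^\kappa_j}$, hence
\[
\|P_{b'}\ket{\psi^\kappa_j}\|\;\le\;\frac{\|H_0\|}{|\kappa\lambda^u_{b'}-\lambda^\kappa_j|}\;\le\;\frac{\|H_0\|}{\kappa\,|\lambda^u_{b'}-\lambda^u_b|-\|H_0\|}\;=\;O(1/\kappa).
\]
Because $\sum_{b'}P_{b'}=\one$ and $\|\ket{\psi^\kappa_j}\|=1$, this forces $\|(\one-P_b)\ket{\psi^\kappa_j}\|\to0$, so $\ket{v_j(0)}=\ket{\tilde\psi_j}\in\mathrm{range}(P_b)=\mathcal{H}_\mathfrak{l}\otimes\mathrm{span}(\ket{w_b})\otimes\mathcal{H}_\mathfrak{e}$. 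Any unit vector there equals $\sum_{ac}\alpha^{(j)}_{abc}\ket{a}\ket{w_b}\ket{c}$ with $b=b(j)$ fixed — exactly \eqref{eq:w_eigs} — and in particular the wall partition ends up in the pure, unentangled state $\ketbra{w_b}{w_b}$.

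\emph{The level shift.} Now \eqref{eq:w_eigv} falls out: since $H_\kappa-\kappa\,\one\otimes H_u\otimes\one=H_0$ identically, its left-hand side is literally $\bra{\psi^\kappa_j}H_0\ket{\psi^\kappa_j}$, and the map $\ket{\phi}\mapsto\bra{\phi}H_0\ket{\phi}$ is continuous on the unit sphere, so $\ket{\psi^\kappa_j}\to\ket{\tilde\psi_j}$ gives $\bra{\psi^\kappa_j}H_0\ket{\psi^\kappa_j}\to\bra{\tilde\psi_j}H_0\ket{\tilde\psi_j}$. I would also remark, using the $O(1/\kappa)$ bound, that $\kappa\bra{\psi^\kappa_j}\one\otimes H_u\otimes\one\ket{\psi^\kappa_j}-\kappa\lambda^u_b=\kappa\sum_{b'\neq b}(\lambda^u_{b'}-\lambda^u_b)\|P_{b'}\ket{\psi^\kappa_j}\|^2\to0$, so $\lambda^\kappa_j-\kappa\lambda^u_b$ has the same limit as the quantity in \eqref{eq:w_eigv}; this is what justifies calling \eqref{eq:w_eigv} the width of the splitting that $H_0$ induces on the level $\kappa\lambda^u_b$.

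\emph{Main obstacle.} The hard part is the massive degeneracy of $\one\otimes H_u\otimes\one$ (multiplicity $n_\mathfrak{l}n_\mathfrak{e}$): non-degenerate perturbation theory does not apply, and one cannot say a priori \emph{which} vector of $\mathrm{range}(P_b)$ a given branch approaches. The spectral-stability estimate resolves the coarse question — which $\lambda^u_b$, hence which $\ket{w_b}$, is selected — but existence of the finer limit inside a cluster, and (should one want it) identifying $\ket{\tilde\psi_j}$ as an eigenvector of the reduced operator $P_bH_0P_b$ even when that operator has accidental degeneracies of its own, is exactly what Rellich's theorem is buying. An elementary alternative would be to replace the Rellich step with an explicit first/second-order perturbative analysis within $\mathrm{range}(P_b)$, which is where the care is needed.
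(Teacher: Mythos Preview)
Your proof is correct and takes a genuinely different route from the paper. The paper argues by computing $\lim_{\kappa\to\infty}H_\kappa\ket{\psi}/\|H_\kappa\ket{\psi}\|$ for a \emph{fixed} $\ket{\psi}$, observing that this equals $(\one\otimes H_u\otimes\one)\ket{\psi}/\|(\one\otimes H_u\otimes\one)\ket{\psi}\|$, and then inferring that any limiting eigenstate must have support on a single $\ket{w_b}$ because the $\lambda^u_b$ are distinct; \eqref{eq:w_eigv} is dispatched in one line as ``verified by direct calculation.'' This is a physicist-style heuristic that does not actually establish the existence of the eigenvector limit, and in particular does not address the issue you correctly flag as the main obstacle: the $n_\mathfrak{l}n_\mathfrak{e}$-fold degeneracy of each level $\lambda^u_b$. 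Your approach --- rescale by $\varepsilon=1/\kappa$, invoke Rellich's theorem for the Hermitian analytic family $\one\otimes H_u\otimes\one+\varepsilon H_0$ to get analytic eigenbranches, then use the explicit $O(1/\kappa)$ bound on $\|P_{b'}\ket{\psi^\kappa_j}\|$ to pin down which $P_b$ the limit lands in --- is the standard rigorous treatment and closes that gap. Your observation that the left side of \eqref{eq:w_eigv} is literally $\bra{\psi^\kappa_j}H_0\ket{\psi^\kappa_j}$, so that continuity alone gives the limit, is also cleaner than the paper's ``direct calculation,'' and your remark that $\lambda^\kappa_j-\kappa\lambda^u_b$ shares the same limit (using $\|P_{b'}\ket{\psi^\kappa_j}\|^2=O(1/\kappa^2)$) supplies a fact the paper tacitly uses downstream. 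What the paper's argument buys is brevity and no external machinery; what yours buys is an actual proof.
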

\begin{proof}

Let $\{\ket{w_b}\}_{b=1}^{n_\mathfrak{w}}$ be the eigenstates of $H_u$ with corresponding eigenvalues
$\{\lambda_b^u\}_{j=1}^{n_\mathfrak{w}}$. The eigenvalues are all distinct, since we are assuming
$H_u$ to be non-degenerate.

Cosider an arbitrary state $\ket{\psi}$,
then
\begin{equation}
\begin{split}
    \lim_{\kappa \to \infty} \frac{H \ket{\psi}}{\|H \ket{\psi}\|} =&
    \lim_{\kappa \to \infty}
    \frac{H_0 \ket{\psi} + \kappa \one \otimes H_u \otimes \one \ket{\psi}}{\|H_0 \ket{\psi} + \kappa \one \otimes H_u \otimes \one \ket{\psi}\|} \\
    =& \frac{\one \otimes H_u \otimes \one \ket{\psi}}{\| \one \otimes H_u \otimes \one \ket{\psi}\|}.
\end{split}
\end{equation}
This proves that indeed the eigenstates converge to those of $\one \otimes H_u \otimes \one$.
As $H_u$ is nondegenerate, its eigenvectors $\{\ket{w_b}\}$ forms a basis of ${\cal H}_\mathfrak{w}$. Let $\{\ket{a}\}_{a=1}^{n_\mathfrak{l}}$
and $\{\ket{c}\}_{c=1}^{n_\mathfrak{e}}$ be bases of ${\cal H}_\mathfrak{l}$ and ${\cal H}_\mathfrak{e}$, respectively.
Now express the state $\ket{\psi}$ in terms of the basis of ${\cal H}$ formed from tensor products of the former,
i.e. $\ket{\psi} = \sum_{abc} \alpha_{abc} \ket{a}\ket{w_b}\ket{c}$,
\begin{equation}
\begin{split}
    \lim_{\kappa \to \infty} \frac{H \ket{\psi}}{\|H \ket{\psi}\|} =&
    \frac{\one \otimes H_u \otimes \one (\sum_{abc} \alpha_{abc} \ket{a}\ket{w_b}\ket{c})}{\| \one \otimes H_u \otimes \one (\sum_{abc} \alpha_{abc} \ket{a}\ket{w_b}\ket{c})\|} \\
    =& \frac{\sum_{abc} \alpha_{abc} \lambda_b^u \ket{a}\ket{w_b}\ket{c}}{\| \sum_{abc} \alpha_{abc} \lambda_b^u \ket{a}\ket{w_b}\ket{c} \|}.
\end{split}
\end{equation}
But since the eigenvalues $\lambda_j^u$ are all distinct, $\ket{\psi}$ is an eigenstate
of $H$ (in the limit $\kappa \to \infty$) if and only if  the coefficients $\alpha_{abc}$
are non-zero only for a single value of the index $b$, i.e. they have the form given by \eqref{eq:w_eigs}.
Finally, \eqref{eq:w_eigv} can be verified by direct calculation.
\end{proof}

We split our study of $\vec\rho$ into a characterization of the coefficients $\rho_{ab} \rho_{ij}$ and the traces $\tr(\tau_{ab} \tau_{ij})$.
We begin by characterizing the cases for which
$\lim_{\kappa \to \infty} \rho_{ab} \rho_{ij} = 0$.
The terms $\rho_{ij}$ involve the
initial state $\rho_0 = \rho_\mathfrak{l} \otimes \ketbra{\hat{w}}{\hat{w}} \otimes \rho_\mathfrak{e}$,
where $\ket{\hat{w}}$ is an eigenstate of $H_u$. Then,
by Lemma \ref{thm:asymp_eigs}, we have that
\begin{equation}
\begin{split}
\lim_{\kappa \to \infty} \rho_{ij} =& \sum_{kl} \overline{\alpha_{k_i l_i}} \alpha_{k_j l_j}
\bra{k_i}\rho_\mathfrak{l}\ket{k_j} \braket{w_i}{\hat{w}}\braket{\hat{w}}{w_j} \bra{l_i}\rho_\mathfrak{e}\ket{l_j},
\end{split}
\end{equation}
which is zero whenever $\ket{w_i} \neq \ket{\hat{w}}$ or $\ket{w_j} \neq \ket{\hat{w}}$.
The following is the set of indices for which the product $\rho_{ab} \rho_{ij}$ cannot be made
arbitrarily small by increasing $\kappa$,
\begin{equation}
{\cal K}_1 = \{(a,b,i,j) | \ket{w_a} = \ket{w_b} = \ket{w_i} =\ket{w_j} = \ket{\hat{w}}\}
\end{equation}

\noindent Next, the trace $\tr(\tau_{ab} \tau_{ij})$ remains to be studied, where
$\tau_{ij}$ is purely dependent on the system's
Hamiltonian.
\begin{equation}
\begin{split}
\lim_{\kappa \to \infty} \tr(\tau_{ab} \tau_{ij}) =&
\sum_{k_a k_b l_a l_b p_i p_j q_i q_j} \alpha_{k_a l_a} \overline{\alpha_{k_b l_b}}
\alpha_{p_i q_i} \overline{\alpha_{p_j q_j}}
\braket{k_b}{p_i} \braket{p_j}{k_a}
\braket{w_a}{w_b} \braket{w_i}{w_j} \braket{l_a}{l_b} \braket{q_i}{q_j} \\
=& \sum_{k_a k_b l_a l_b p_i p_j q_i q_j} \alpha_{k_a l_a} \overline{\alpha_{k_b l_b}}
\alpha_{p_i q_i} \overline{\alpha_{p_j q_j}}
(\bra{k_b} \otimes \bra{p_j}) (\ket{p_i} \otimes \ket{k_a}) \\
&\qquad (\bra{w_b}\bra{l_b} \otimes \bra{w_j}\bra{q_j})
(\ket{w_a}\ket{l_a} \otimes \ket{w_i}\ket{q_i}) \\
=& (\bra{\tilde \psi_b} \otimes \bra{\tilde \psi_j}) {\rm SWAP}_\mathfrak{ll}
(\ket{\tilde \psi_a} \otimes \ket{\tilde \psi_i}),
\end{split}
\end{equation}
where ${\rm SWAP}_\mathfrak{ll} (\ket{a}\ket{b}\ket{c} \otimes \ket{i}\ket{j}\ket{k}) =
\ket{i}\ket{b}\ket{c} \otimes \ket{a}\ket{j}\ket{k}$.
We notice that the trace converges to zero whenever 
${\rm SWAP}_\mathfrak{ll} (\ket{\tilde \psi_a} \otimes \ket{\tilde \psi_i})$ is orthogonal to
$\ket{\tilde \psi_b} \otimes \ket{\tilde \psi_j}$. We now define the set of the indices
for which the trace $\tr(\tau_{ab} \tau_{ij})$ cannot be made arbitrarily small by
increasing $\kappa$,
\begin{equation}
{\cal K}_2 = \{(a,b,i,j) | (\bra{\tilde \psi_b} \otimes \bra{\tilde \psi_j}) {\rm SWAP}_\mathfrak{ll}
(\ket{\tilde \psi_a} \otimes \ket{\tilde \psi_i}) \neq 0\}.
\end{equation}

\begin{proposition}
\label{thm:boundcond}
Given a system with initial state $\rho_0 = \rho_\mathfrak{l} \otimes \ketbra{\hat w}{\hat w} \otimes \rho_\mathfrak{e}$
such that $\gamma_\mathfrak{l}(0) = \gamma_0$,
if ${\cal C}_1 \cap {\cal K}_1 \cap {\cal K}_2 = \varnothing$,
then there exists a $\kappa$ such that the purity
$\gamma_\mathfrak{l}(t) \in (\gamma_0 + \epsilon, \gamma_0 - \epsilon)$ for all time $t$, for any initial
$\rho_\mathfrak{l} \in \mathfrak{D}(\mathcal{H}_\mathfrak{l})$
and $\rho_\mathfrak{e} \in \mathfrak{D}(\mathcal{H}_\mathfrak{e})$,
and for every $\epsilon>0$.
\end{proposition}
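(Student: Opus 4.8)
The plan is to show that the amplitude vector $\vec\rho$ of the oscillatory part in \eqref{eq:purity_const} can be driven to zero, uniformly in the initial state, by taking $\kappa$ large, and then conclude by a triangle inequality. First I would evaluate \eqref{eq:purity_const} at $t=0$: since $\vec e(0)$ is the all-ones vector, $\gamma_0=\bar\gamma+\sum_{(a,b,i,j)\in\mathcal C_1}\rho_{ab}\rho_{ij}\tr(\tau_{ab}\tau_{ij})$, an identity valid for every $\kappa$ because $\gamma_0=\tr(\rho_\mathfrak l^2)$ is $\kappa$-independent. Subtracting from \eqref{eq:purity_const} and using $\lvert e^{-\iu\lambda_{abij}t}-1\rvert\le 2$,
\[
\lvert\gamma_\mathfrak l(\rho(t))-\gamma_0\rvert=\Big\lvert\sum_{(a,b,i,j)\in\mathcal C_1}(e^{-\iu\lambda_{abij}t}-1)\,\rho_{ab}\rho_{ij}\tr(\tau_{ab}\tau_{ij})\Big\rvert\le 2\|\vec\rho\|_1\qquad\forall\,t\in\mathbb R .
\]
Hence it suffices to prove $\|\vec\rho\|_1\to 0$ as $\kappa\to\infty$, uniformly over $\rho_\mathfrak l\in\mathfrak D(\mathcal H_\mathfrak l)$ and $\rho_\mathfrak e\in\mathfrak D(\mathcal H_\mathfrak e)$; then, given $\epsilon>0$, any $\kappa$ with $\|\vec\rho\|_1<\epsilon/2$ yields the claim.

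Since $\mathcal C_1$ lies in a fixed finite index set, it is enough to treat one tuple $(a,b,i,j)\in\mathcal C_1$ at a time, after labelling the eigenpairs of $H_\kappa$ continuously (analytic perturbation theory) so that $\ket{\psi^\kappa_j}\to\ket{\tilde\psi_j}$ as in Lemma~\ref{thm:asymp_eigs}. Two uniform bounds are always at hand: $\lvert\rho_{ab}\rvert,\lvert\rho_{ij}\rvert\le\|\rho_0\|_\infty\le 1$, and $\lvert\tr(\tau_{ab}\tau_{ij})\rvert\le\|\tau_{ab}\|_1\|\tau_{ij}\|_1\le 1$ because the partial trace is a trace-norm contraction. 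Now by the hypothesis $\mathcal C_1\cap\mathcal K_1\cap\mathcal K_2=\varnothing$ the tuple is outside $\mathcal K_1$ or outside $\mathcal K_2$. If it is outside $\mathcal K_2$, the computation preceding the statement gives $\tr(\tau_{ab}\tau_{ij})\to(\bra{\tilde\psi_b}\otimes\bra{\tilde\psi_j}){\rm SWAP}_\mathfrak{ll}(\ket{\tilde\psi_a}\otimes\ket{\tilde\psi_i})=0$, a limit not involving $\rho_\mathfrak l,\rho_\mathfrak e$, while the remaining factor stays $\le 1$. If instead it is outside $\mathcal K_1$, say the limiting wall component of $\ket{\tilde\psi_i}$ is not $\ket{\hat w}$, write $\ket{\chi^\kappa_i}$ for the (un-normalised) $\ket{\hat w}$-block of $\ket{\psi^\kappa_i}$; then $\rho_{ij}=\bra{\chi^\kappa_i}(\rho_\mathfrak l\otimes\rho_\mathfrak e)\ket{\chi^\kappa_j}$, so $\lvert\rho_{ij}\rvert\le\|\chi^\kappa_i\|$, and $\|\chi^\kappa_i\|=\big\|(\one_\mathfrak l\otimes\ketbra{\hat w}{\hat w}\otimes\one_\mathfrak e)\ket{\psi^\kappa_i}\big\|\to 0$ by Lemma~\ref{thm:asymp_eigs}, a bound uniform in $\rho_\mathfrak l,\rho_\mathfrak e$, while the other factors are $\le 1$. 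In either case the summand tends to $0$ uniformly; summing the finitely many contributions gives $\|\vec\rho\|_1\to 0$ uniformly, as required.

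I expect the one genuinely delicate point to be the $\kappa$-dependence of the labelling of eigenpairs and of the set $\mathcal C_1$: one must track the eigenvectors so that Lemma~\ref{thm:asymp_eigs} applies tuple-by-tuple, and ensure that no level crossing makes a resonance $\lambda_{abij}$ that is $0$ in the limit (for a tuple in $\mathcal K_1\cap\mathcal K_2$) become nonzero at the large but finite $\kappa$ eventually selected — i.e. that $\mathcal C_1$ in the hypothesis is the one relevant for that $\kappa$. Everything else is bookkeeping with the uniform bounds above, after which the final estimate $\lvert\gamma_\mathfrak l(\rho(t))-\gamma_0\rvert\le 2\|\vec\rho\|_1<\epsilon$ holds for all $t$ and all admissible $\rho_\mathfrak l,\rho_\mathfrak e$.
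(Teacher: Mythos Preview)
Your argument is correct and follows essentially the same route as the paper: decompose $\gamma_\mathfrak{l}(t)$ into the constant part $\bar\gamma$ and the oscillatory part indexed by $\mathcal C_1$, then use the hypothesis $\mathcal C_1\cap\mathcal K_1\cap\mathcal K_2=\varnothing$ together with the limits established before the proposition to drive every oscillatory summand to zero as $\kappa\to\infty$. Your version is in fact a bit more careful than the paper's, since you anchor explicitly at $t=0$ to obtain the clean bound $|\gamma_\mathfrak{l}(t)-\gamma_0|\le 2\|\vec\rho\|_1$, track uniformity in $\rho_\mathfrak l,\rho_\mathfrak e$, and correctly flag the $\kappa$-dependence of the eigenpair labelling and of $\mathcal C_1$ as the one genuinely subtle point---an issue the paper does not address.
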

\begin{proof}
Let us split $\gamma_\mathfrak{l}(t)$ into the following three sums
\begin{equation}
\begin{split}
\gamma_\mathfrak{l}(t) = \sum_{(a,b,i,j) \in \bar{\cal C}} &e^{-\iu \lambda_{abij} t}
\rho_{ab} \rho_{ij} \tr(\tau_{ab} \tau_{ij}) + \\
\sum_{(a,b,i,j) \in {\cal C}_1 \cap (\bar{\cal K}_1 \cup \bar{\cal K}_2)} &e^{-\iu \lambda_{abij} t}
\rho_{ab} \rho_{ij} \tr(\tau_{ab} \tau_{ij}) + \\
\sum_{(a,b,i,j) \in {\cal C}_1 \cap {\cal K}_1 \cap {\cal K}_2} &e^{-\iu \lambda_{abij} t}
\rho_{ab} \rho_{ij} \tr(\tau_{ab} \tau_{ij})
\end{split}
\end{equation}

If ${\cal C}_1 \cap {\cal K}_1 \cap {\cal K}_2 = \varnothing$, the third sum is zero, as it would
contain no summand. We also know that the summands of the second sum all tend to zero as $\kappa \to \infty$
and that the first sum is composed only of constant summands in time. Therefore, if ${\cal C}_1 \cap {\cal K}_1 \cap {\cal K}_2 = \varnothing$, we can make the purity $\gamma_\mathcal{l}(t)$ remain arbitrarily close to its initial value $\gamma_0$
for all time $t$, by increasing $\kappa$.

\end{proof}

While the condition defining $\mathcal{K}_2$ may seem complex or artificial, if we restrict the asymptotic eigenvalues
$\ket{\tilde \psi_j}$ of the driven Hamiltonian $H_\kappa$ to have a particular product state structure, a simple spectral
interpretation of Proposition \ref{thm:boundcond} emerges. This result is given in the following corollary.

\begin{corollary}
\label{thm:boundcond_noent}
If there exists a choice of bases $\{\ket{j_\mathfrak{l}}\}_{j_\mathfrak{l}=1}^{n_\mathfrak{l}}$ and
$\{\ket{j_\mathfrak{e}}\}_{j_\mathfrak{e}=1}^{n_\mathfrak{e}}$ of
${\cal H}_\mathfrak{l}$ and ${\cal H}_\mathfrak{e}$, respectively, such that
asymptotic eigenstates $\ket{\tilde \psi_j}$ can be written as product states of
the form $\ket{j_\mathfrak{l}}\ket{\hat w}\ket{j_\mathfrak{e}}$, then
${\cal C}_1 \cap {\cal K}_1 \cap {\cal K}_2 = \varnothing$ if and only if
$\lambda_{[j_{\mathfrak{l}}, \hat w, b_{\mathfrak{e}}]} - \lambda_{[i_{\mathfrak{l}}, \hat w, b_{\mathfrak{e}}]} =
\lambda_{[j_{\mathfrak{l}}, \hat w, j_{\mathfrak{e}}]} - \lambda_{[i_{\mathfrak{l}}, \hat w, j_{\mathfrak{e}}]}$
for all $i_{\mathfrak{l}},j_{\mathfrak{l}},b_{\mathfrak{e}},j_{\mathfrak{e}}$, where
$\lambda_{[j_{\mathfrak{l}}, \hat w, j_{\mathfrak{e}}]}$
denotes the eigenvalue of $H_\kappa$ corresponding to the asymptotic eigenstate
$\ket{j_\mathfrak{l}}\ket{\hat w}\ket{j_\mathfrak{e}}$ and $\ket{\hat w}$ is the initial wall state.
\end{corollary}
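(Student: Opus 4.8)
The plan is to make the three index sets $\mathcal{C}_1$, $\mathcal{K}_1$, $\mathcal{K}_2$ completely explicit under the extra product-state hypothesis and then simply read off the claimed spectral relation.

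First I would fix the asymptotic eigenbasis furnished by Lemma~\ref{thm:asymp_eigs} and use the hypothesis to label every asymptotic eigenstate sitting in the $\ket{\hat w}$-block as $\ket{\tilde\psi_{[j_\mathfrak{l},\hat w,j_\mathfrak{e}]}} = \ket{j_\mathfrak{l}}\ket{\hat w}\ket{j_\mathfrak{e}}$, with $\{\ket{j_\mathfrak{l}}\}$ and $\{\ket{j_\mathfrak{e}}\}$ orthonormal; every eigenstate in a different $H_u$-block has wall factor $\ket{w_b}\neq\ket{\hat w}$. By definition $\mathcal{K}_1$ already restricts attention to quadruples all four of whose eigenstates lie in the $\ket{\hat w}$-block, so I may write throughout $a=[a_\mathfrak{l},\hat w,a_\mathfrak{e}]$, $b=[b_\mathfrak{l},\hat w,b_\mathfrak{e}]$, $i=[i_\mathfrak{l},\hat w,i_\mathfrak{e}]$, $j=[j_\mathfrak{l},\hat w,j_\mathfrak{e}]$.

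Next I would evaluate the $\mathcal{K}_2$ condition on such a quadruple. Since ${\rm SWAP}_\mathfrak{ll}$ touches only the logical slots, acting on $\ket{\tilde\psi_a}\otimes\ket{\tilde\psi_i}$ gives $\ket{i_\mathfrak{l}}\ket{\hat w}\ket{a_\mathfrak{e}}\otimes\ket{a_\mathfrak{l}}\ket{\hat w}\ket{i_\mathfrak{e}}$, and pairing with $\bra{\tilde\psi_b}\otimes\bra{\tilde\psi_j}$, using orthonormality of the product bases, collapses the overlap to $\delta_{b_\mathfrak{l} i_\mathfrak{l}}\,\delta_{b_\mathfrak{e} a_\mathfrak{e}}\,\delta_{j_\mathfrak{l} a_\mathfrak{l}}\,\delta_{j_\mathfrak{e} i_\mathfrak{e}}$. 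Hence $\mathcal{K}_1\cap\mathcal{K}_2$ consists exactly of the quadruples of the form $a=[j_\mathfrak{l},\hat w,b_\mathfrak{e}]$, $b=[i_\mathfrak{l},\hat w,b_\mathfrak{e}]$, $i=[i_\mathfrak{l},\hat w,j_\mathfrak{e}]$, $j=[j_\mathfrak{l},\hat w,j_\mathfrak{e}]$ as $(i_\mathfrak{l},j_\mathfrak{l},b_\mathfrak{e},j_\mathfrak{e})$ range freely. Substituting this parametrization into $\lambda_{abij}=\lambda^\kappa_a-\lambda^\kappa_b+\lambda^\kappa_i-\lambda^\kappa_j$ and noting that the bulk $\kappa\lambda^u_{\hat w}$ contribution cancels because all four wall factors coincide (so that, by \eqref{eq:w_eigv}, $\lambda_{abij}$ is a difference of the level splittings inside the $\ket{\hat w}$-block), I get $\lambda_{abij}=\big(\lambda_{[j_\mathfrak{l},\hat w,b_\mathfrak{e}]}-\lambda_{[i_\mathfrak{l},\hat w,b_\mathfrak{e}]}\big)-\big(\lambda_{[j_\mathfrak{l},\hat w,j_\mathfrak{e}]}-\lambda_{[i_\mathfrak{l},\hat w,j_\mathfrak{e}]}\big)$. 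Because $\mathcal{C}_1$ is precisely the set of quadruples with $\lambda_{abij}\neq0$, the triple intersection $\mathcal{C}_1\cap\mathcal{K}_1\cap\mathcal{K}_2$ is empty if and only if this combination vanishes for every $(i_\mathfrak{l},j_\mathfrak{l},b_\mathfrak{e},j_\mathfrak{e})$, which is the stated equality; Proposition~\ref{thm:boundcond} then gives the eternal purity bound as the payoff.

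I expect the only delicate point to be bookkeeping rather than anything conceptual: one must keep careful track of which tensor slot carries $\ket{\hat w}$ versus the logical/environment labels under ${\rm SWAP}_\mathfrak{ll}$, and one must observe that possible degeneracies of $H_0$ restricted to the $\ket{\hat w}$-block do not obstruct the choice of a single product eigenbasis $\ket{j_\mathfrak{l}}\ket{\hat w}\ket{j_\mathfrak{e}}$ — which is exactly the content of the corollary's hypothesis.
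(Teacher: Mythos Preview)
Your proposal is correct and follows essentially the same route as the paper's own proof: restrict to the $\ket{\hat w}$-block via $\mathcal{K}_1$, evaluate the ${\rm SWAP}_\mathfrak{ll}$ overlap on product states to pin down $\mathcal{K}_1\cap\mathcal{K}_2$ as the quadruples with $b_\mathfrak{l}=i_\mathfrak{l}$, $a_\mathfrak{l}=j_\mathfrak{l}$, $a_\mathfrak{e}=b_\mathfrak{e}$, $i_\mathfrak{e}=j_\mathfrak{e}$, and then substitute into $\lambda_{abij}$ to obtain the stated spectral condition. Your remark on the cancellation of the bulk $\kappa\lambda^u_{\hat w}$ term is a nice clarifying observation, though the paper simply works with the $H_\kappa$ eigenvalues directly and lets that cancellation happen implicitly.
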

\begin{proof}

Let $\ket{\psi^{\hat w}_{j_\mathfrak{l}, j_\mathfrak{e}}} = \ket{j_\mathfrak{l}} \ket{\hat w} \ket{j_\mathfrak{e}}$,
then
\begin{equation}
\begin{split}
(\bra{\psi^{\hat w}_{b_\mathfrak{l}, b_\mathfrak{e}}} \otimes
\bra{\psi^{\hat w}_{j_\mathfrak{l}, j_\mathfrak{e}}}) {\rm SWAP}_\mathfrak{ll}
& (\ket{\psi^{\hat w}_{a_\mathfrak{l}, a_\mathfrak{e}}} \otimes
\ket{\psi^{\hat w}_{i_\mathfrak{l}, i_\mathfrak{e}}}) =
\braket{\psi^{\hat w}_{b_\mathfrak{l}, b_\mathfrak{e}}}{\psi^{\hat w}_{i_\mathfrak{l}, a_\mathfrak{e}}}
\braket{\psi^{\hat w}_{j_\mathfrak{l}, j_\mathfrak{e}}}{\psi^{\hat w}_{a_\mathfrak{l}, i_\mathfrak{e}}}
\end{split}
\end{equation}
i.e. $
{\cal K}_1 \cap {\cal K}_2 = \{(a,b,i,j) | b_\mathfrak{l} = i_\mathfrak{l}, a_\mathfrak{l} = j_\mathfrak{l},
b_\mathfrak{e} = a_\mathfrak{e}, j_\mathfrak{e} = i_\mathfrak{e},
a_w = b_w = i_w = j_w = \hat w\}
$.

This allows us to write
\begin{equation}
\begin{split}
{\cal C} \cap {\cal K}_1 \cap {\cal K}_2 =& \{(a,b,i,j) |\
\lambda_{[j_{\mathfrak{l}}, \hat w, b_{\mathfrak{e}}]} - \lambda_{[i_{\mathfrak{l}}, \hat w, b_{\mathfrak{e}}]}
+ \lambda_{[i_{\mathfrak{l}}, \hat w, j_{\mathfrak{e}}]} - \lambda_{[j_{\mathfrak{l}}, \hat w, j_{\mathfrak{e}}]} \neq 0\} \\
=& \{(a,b,i,j) |\
\lambda_{[j_{\mathfrak{l}}, \hat w, b_{\mathfrak{e}}]} - \lambda_{[i_{\mathfrak{l}}, \hat w, b_{\mathfrak{e}}]} \neq
\lambda_{[j_{\mathfrak{l}}, \hat w, j_{\mathfrak{e}}]} - \lambda_{[i_{\mathfrak{l}}, \hat w, j_{\mathfrak{e}}]} \}
\end{split}
\end{equation}
\end{proof}

We finish this section by giving an example of a system where no perfect wall state exists, but the purity
may still be eternally bounded by means of strong Hamiltonian control on the wall.

\subsection{Analytical and numerical example}

Consider a system with the following three-qubit Hamiltonian, where we identify $\mathcal{H}_\mathfrak{l}$ as the space
of the first qubit, $\mathcal{H}_\mathfrak{w}$ of the second and $\mathcal{H}_\mathfrak{e}$ of the third.
\begin{equation}
H_0 = J_1^z + J_2^z + J_3^z + J_1^z J_2^z + J_2^x J_3^z.
\end{equation}
Then, let the following be the driving Hamiltonian
\begin{equation}
H_u = \sigma_z,
\end{equation}
such that the dynamics of the system are given by $H_\kappa = H_0 + \kappa \one_\mathfrak{l} \otimes H_u \otimes \one_\mathfrak{e}$.

Table \ref{tab:eternbound} shows the eigenstates of $H_\kappa$ with their corresponding eigenenergies and asymptotic eigenstates.
The eigenstates are ordered by the value of the eigenenergies $\lambda_j$ for large $\kappa$. Notice that the
eigenstates are not normalized, for legibility purposes.

\begin{table}[!htpb]
\centering
\begin{tabular}{c|c|c|c}
\hline
  $j$  & $\ket{\psi_j}$ & $\lambda_j$ & $\ket{\tilde \psi_j}$ \\
\hline
1 & $(-1 - 4 \kappa + \sqrt{2} \sqrt{1 + 4 \kappa + 8 \kappa^2}) \ket{101} + \ket{111}$ & $-1 - \frac{1}{2 \sqrt{2}} \sqrt{1 + 4 \kappa + 8 \kappa^2}$ & $\ket{111}$ \\
2 & $(-3 - 4 \kappa + \sqrt{2} \sqrt{5 + 12 \kappa + 8 \kappa^2}) \ket{001} + \ket{011}$ & $-\frac{1}{2 \sqrt{2}} (\sqrt{5 + 12 \kappa + 8 \kappa^2})$ & $\ket{011}$ \\
3 & $(1 + 4 \kappa - \sqrt{2} \sqrt{1 + 4 \kappa + 8 \kappa^2}) \ket{100} + \ket{110}$ & $-\frac{1}{2 \sqrt{2}} (\sqrt{1 + 4 \kappa + 8 \kappa^2})$ & $\ket{110}$ \\
4 & $(3 + 4 \kappa - \sqrt{2} \sqrt{5 + 12 \kappa + 8 \kappa^2}) \ket{000} + \ket{010}$ & $1 - \frac{1}{2 \sqrt{2}} \sqrt{5 + 12 \kappa + 8 \kappa^2}$ & $\ket{010}$ \\
5 & $(-1 - 4 \kappa - \sqrt{2} \sqrt{1 + 4 \kappa + 8 \kappa^2}) \ket{101} + \ket{111}$ & $-1 + \frac{1}{2 \sqrt{2}} \sqrt{1 + 4 \kappa + 8 \kappa^2}$ & $\ket{101}$ \\
6 & $(1 + 4 \kappa + \sqrt{2} \sqrt{1 + 4 \kappa + 8 \kappa^2}) \ket{100} + \ket{110}$ & $\frac{1}{2 \sqrt{2}} (\sqrt{1 + 4 \kappa + 8 \kappa^2})$ & $\ket{100}$ \\
7 & $(-3 - 4 \kappa - \sqrt{2} \sqrt{5 + 12 \kappa + 8 \kappa^2}) \ket{000} + \ket{010}$ & $\frac{1}{2 \sqrt{2}} (\sqrt{5 + 12 \kappa + 8 \kappa^2})$ & $\ket{001}$ \\
8 & $(3 + 4 \kappa + \sqrt{2} \sqrt{5 + 12 \kappa + 8 \kappa^2}) \ket{000} + \ket{010}$ & $1 + \frac{1}{2 \sqrt{2}} \sqrt{5 + 12 \kappa + 8 \kappa^2}$ & $\ket{000}$ \\
\hline
\end{tabular}
\caption{Eigenstates and eigenenergies of the driven Hamiltonian.}
\label{tab:eternbound}
\end{table}

If we choose $\ket{\hat w} = \ket{0}$, then we have that
\begin{equation*}
\mathcal{K}_1 = \{5,6,7,8\}^4.
\end{equation*}

Next, we need to find the indices $(a,b,i,j)$ contained in $\mathcal{K}_1$ for which
${\rm SWAP}_\mathfrak{ll} (\ket{\tilde \psi_a} \otimes \ket{\tilde \psi_i})$ is not orthogonal to
$\ket{\tilde \psi_b} \otimes \ket{\tilde \psi_j}$. First, we compute the action of ${\rm SWAP}_\mathfrak{ll}$
on the tensor products of the asymptotic eigenstates,

\begin{equation*}
\begin{split}
{\rm SWAP}_\mathfrak{ll} (\ket{\tilde \psi_i} \otimes \ket{\tilde \psi_i}) =& \ket{\tilde \psi_i} \otimes \ket{\tilde \psi_i}, \\
{\rm SWAP}_\mathfrak{ll} (\ket{\tilde \psi_i} \otimes \ket{\tilde \psi_j}) =& \ket{\tilde \psi_i} \otimes \ket{\tilde \psi_j}
\text{ if } (i,j) \in \{(5,6), (6,5), (7,8), (8,7)\}, \\
{\rm SWAP}_\mathfrak{ll} (\ket{\tilde \psi_i} \otimes \ket{\tilde \psi_j}) =& \ket{\tilde \psi_j} \otimes \ket{\tilde \psi_i}
 \text{ if } (i,j) \in \{(5,7), (7,5), (6,8), (8,6)\}, \\
{\rm SWAP}_\mathfrak{ll} (\ket{\tilde \psi_5} \otimes \ket{\tilde \psi_8}) =& \ket{\tilde \psi_7} \otimes \ket{\tilde \psi_6}, \\
{\rm SWAP}_\mathfrak{ll} (\ket{\tilde \psi_8} \otimes \ket{\tilde \psi_5}) =& \ket{\tilde \psi_6} \otimes \ket{\tilde \psi_7}, \\
{\rm SWAP}_\mathfrak{ll} (\ket{\tilde \psi_6} \otimes \ket{\tilde \psi_7}) =& \ket{\tilde \psi_8} \otimes \ket{\tilde \psi_5}, \\
{\rm SWAP}_\mathfrak{ll} (\ket{\tilde \psi_7} \otimes \ket{\tilde \psi_6}) =& \ket{\tilde \psi_5} \otimes \ket{\tilde \psi_8}. \\
\end{split}
\end{equation*}

Then, we use these results to find the intersection of $\mathcal{K}_1$ and $\mathcal{K}_2$,

\begin{equation*}
\begin{split}
\mathcal{K}_1 \cap \mathcal{K}_2 =& \{(i,i,i,i), 5 \leq i \leq 8\} \cup \\
& \{(i,i,j,j), (i,j) \in (i,j) \in \{(5,6), (6,5), (7,8), (8,7)\}\} \cup \\
& \{(i,j,j,i), (i,j) \in (i,j) \in \{(5,7), (7,5), (6,8), (8,6)\}\} \cup \\
& \{(5,7,8,6), (8,6,5,7), (6,8,7,5), (7,5,6,8)\}.
\end{split}
\end{equation*}

These are the indices of the terms that do not vanish as $\kappa$ is increased.
The only remaining step is to verify if these indices correspond to constant or oscillatory terms.
Clearly, $\lambda_{iiii} = \lambda_{iijj} = \lambda_{ijji} = 0$ for any quantum system. Therefore, we only to check the value
of $\lambda_{abij}$ for $(a,b,i,j) \in \{(5,7,8,6), (8,6,5,7), (6,8,7,5), (7,5,6,8)\}$.
\begin{equation*}
\begin{split}
\lambda_{5,7,8,6} =& -1 + \frac{1}{2 \sqrt{2}} \sqrt{1 + 4 \kappa + 8 \kappa^2} -
\frac{1}{2 \sqrt{2}} \sqrt{5 + 12 \kappa + 8 \kappa^2} + \\
&1 + \frac{1}{2 \sqrt{2}} \sqrt{5 + 12 \kappa + 8 \kappa^2} -
\frac{1}{2 \sqrt{2}} \sqrt{1 + 4 \kappa + 8 \kappa^2} = 0, \\
\lambda_{8,6,5,7} =& - \lambda_{6,8,7,5} = \lambda_{7,5,6,8} = - \lambda_{5,7,8,6} = 0.
\end{split}
\end{equation*}

Therefore,
\begin{equation*}
\mathcal{C}_1 \cap \mathcal{K}_1 \cap \mathcal{K}_2 = \varnothing,
\end{equation*}
implying that the purity $\gamma_\mathfrak{l}$ can be kept eternally bounded, arbitrarily close to $1$,
with the application of strong Hamiltonian driving on the quantum wall.

Alternatively, we could have applied the results of Corollary \ref{thm:boundcond_noent},
since the asymptotic eigenvectors $\ket{\tilde \psi}$ are of the form $\ket{i} \ket{j} \ket{k}$, where
$\ket{i}, \ket{j}, \ket{k} \in \{\ket{0}, \ket{1}\}$. In order to do so, we need to split the index
$j \in \{5, 6, 7, 8\}$ of $\lambda_j$ into $[j_\mathfrak{l}, \hat w, j_\mathfrak{e}]$, i.e. in
order to apply the results of Corollary \ref{thm:boundcond_noent}, we relabel the eigenvalues
according to the last column of Table \ref{tab:eternbound}, such that
\begin{equation*}
\lambda_{[0,0,0]} = \lambda_8, \quad
\lambda_{[0,0,1]} = \lambda_7, \quad
\lambda_{[1,0,0]} = \lambda_6, \quad
\lambda_{[1,0,1]} = \lambda_5.
\end{equation*}
Then, it suffices to show that $\lambda_{[0,0,0]} - \lambda_{[1,0,0]} = \lambda_{[0,0,1]} - \lambda_{[1,0,1]}$
in order to prove that $\mathcal{C}_1 \cap \mathcal{K}_1 \cap \mathcal{K}_2 = \varnothing$.

Finally, Figure \ref{fig:lowerbound} provides a numerical verification of these results. Figure \ref{fig:boundincrease}
shows a numerical computation of the value of the eternal lower bound $\bar \gamma - \|\vec \rho\|_1$
in function of the driving strength $\kappa$. As expected, it converges to $1$ as $\kappa$ is increased.
The bound was computed for $\rho_\mathfrak{l} = (\ket{0} + \iu \ket{1})(\bra{0} - \iu \bra{1})$, i.e. the $+1$
eigenstate of $\sigma_y$. This initial state was chosen because it is not an eigenstate of $\sigma_x$ nor $\sigma_z$.
As for the environment, we set its initial state $\rho_\mathfrak{e}$
as a thermal state with inverse temperature $\beta = 0.01$ as was done in all the previous simulations.
Then, Figure \ref{fig:boundeddynamics} shows the purity dynamics for three different values of the driving strength and
the same initial states, as well as the corresponding lower bounds. As expected, the amplitudes of
the oscillations shrink as $\kappa$ is increased and the purity stays bounded.

\begin{figure}[ht]
\centering
\subfloat[][Lower bound increase.]{
\includegraphics[width=0.45\linewidth]{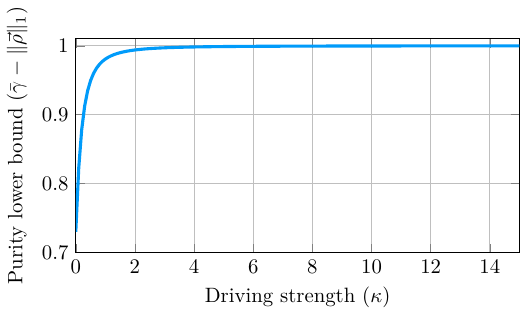}
\label{fig:boundincrease}
}\
\subfloat[][Bounded purity dynamics.]{
\includegraphics[width=0.45\linewidth]{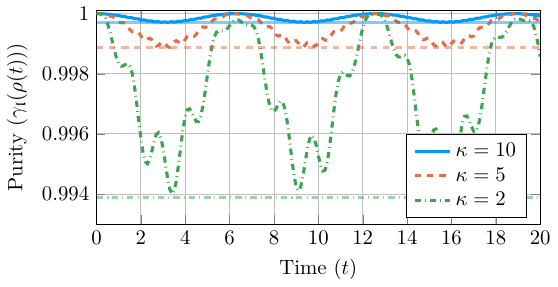}
\label{fig:boundeddynamics}
}
\caption{\subref{fig:boundincrease} Value of the lower bound in function of the driving strength $\kappa$ when
$\rho_\mathfrak{l} = (\ket{0} + \iu \ket{1})(\bra{0} - \iu \bra{1})$, $\ket{\hat w} = \ket{0}$ and $\rho_\mathfrak{e}$
is a thermal state with inverse temperature $\beta = 0.01$.
\subref{fig:boundeddynamics} Purity dynamics for three different values of $\kappa$ and the same initial states.
The horizontal lines of the corresponding color show the value of the time independent lower bound
$\bar \gamma - \|\vec \rho\|_1$ of the purity $\gamma_\mathfrak{l}(t)$.
}
\label{fig:lowerbound}
\end{figure}

\section{Conclusion}

The path towards the realization of robust, scalable devices for quantum information processing should arguably involve the use of multiple techniques aimed to  producing suitable quantum hardware as well as  mitigating the effect of noise in the device via protected encodings, correction protocols and noise suppression control. Being based on the system manipulation, rather than its physical building, multiple noise control methods can in principle be employed in synergy towards the needed noise suppression performance. However, the complexity of the resulting concatenated protocols and their implementation times can limit the resources available for the information processing task. We here focus on developing a method that can seamlessly used in combination with existing one, as the necessary control action does not directly affect the encoded information. 

To this aim, the present work introduces and analyzes the potential of state-stabilization techniques towards the decoupling of a subsystem from environmental interaction.
The proposed framework relies on both analytical and numerical tools, which allow one to search for an (approximate) DFS, whose passive noise isolation properties can be enhanced by actively performing state-stabilizing control on a subsystem that
mediates the dominant interactions with the environment, creating an effective quantum wall state.
One notable advantage with respect to direct error-correcting and DD techniques is that the proposed method employs controls that act only on the wall subsystem and leave the
logical subsystem untouched. This setup allows for simpler implementations of logic operations in the protected subsystem.
It is also suitable for integration with other quantum information protection techniques, such as dynamical decoupling: in this sense, we have demonstrated in one example how our method can be used to enhance the performance of (selective or complete) dynamical decoupling.

When the state-stabilizing control is performed using strong Hamiltonian driving, we found that, under suitable conditions, the purity of the system remains
above a certain threshold {\em for all times}, achieving eternal purity preservation. This remarkable behavior is analyzed  theoretically, and linked to the asymptotic spectrum of the Hamiltonian when the control gain grows unbounded, showing how this affects the oscillatory behavior of the density matrix elements. The sufficient conditions we obtain are tested on a simple three-qubit case, so that we can verify the existence of eternal purity preservation analytically and test the result and bounds numerically.

We believe that the selection and stabilization of wall states in interface subsystems can provide a simple yet effective tool for controlling decoherence. Its possibility of direct integration with other
noise mitigation protocols should make it a desirable technique in the design of new quantum devices. Future developments include refined heuristics in defining the interface subsystems and the dominant Hamiltonian interactions, and testing the approach with more realistic models, as well as with experimental systems.

\section*{Acknowledgments}
F. Ticozzi gratefully acknowledges L. Viola for many valuable discussions on the topics of this work. 
F. Ticozzi and M. Casanova were supported by European Union through
NextGenerationEU, within the National Center for HPC, Big Data and
Quantum Computing under Projects CN00000013, CN 1, and Spoke
10.

\appendix

\section{Riemannian optimization: Gradient descent on manifolds}
\label{sec:riemannianopt}

In this work we have proposed three optimization problems: \eqref{eq:optprob_sub},
\eqref{eq:optprob_state_full} and \eqref{eq:optprob_state}. The first has an optimization
variable that is constrained to belong to ${\rm SU}(n)$,  whereas the latter two
are constrained to $S^{n_w-1}(\mathbb{C})$. Both ${\rm SU}(n)$ and $S^{n_w-1}(\mathbb{C})$ are
smooth manifolds, and in the following we shall denote such a set as $\mathcal{M}$. By equipping them with a Riemannian metric $g_U(X, Y)$, we may
then define the gradient of a functional $J(U): \mathcal{M} \to \mathbb{R}$ as the unique
vector ${\rm grad} J(U) \in T_U \mathcal{M}$ that satisfies
${\rm D} J(U) [\xi] = g_U({\rm grad} J(U), \xi), \forall \xi \in T_U \mathcal{M}$, where ${\rm D} J(U) [\xi]$
is the directional derivative of $J(U)$ along $\xi$ \cite{satoRiemannianOptimizationIts2021}.

To solve an optimization problem $\min_U J(U)$, one could consider the gradient flow
\begin{equation}
\label{eq:gradflow}
\dot U = -{\rm grad} J(U).
\end{equation}
It is easy to see that this flow follows the direction of steepest descent of the functional $J(U)$ and that it
converges to its local minima. The idea behind the gradient descent algorithm is to approximate this flow by discretizing it.
At each point $U$ there is an exponential map ${\rm exp}_U(-{\rm grad}J(U) \epsilon)$ solving the gradient flow \eqref{eq:gradflow},
where ${\rm exp}_U: T_U \mathcal{M} \to \mathcal{M}$ and $\epsilon > 0$ is the step size of the discretization.

In the case of $\mathcal{M} = {\rm SU}(n)$ and $g_U(A,B) = \tr(A^\dagger B)$, we have that
\begin{equation}
{\rm exp}_U(X) = e^{X U^\dagger} U.
\end{equation}
Then, by taking $X = -{\rm grad J(U)} \epsilon$, we recover the expression given in \eqref{eq:expmap_SUn}.

As for $\mathcal{M} = S^{n_w-1}(\mathbb{C})$ and $g_U(a, b) = a^\dagger (\one - \frac{1}{2} U U^\dagger) b$,
we have that
\begin{equation}
{\rm exp}_U(X) = e^{Q X U^\dagger - U X^\dagger Q} U,
\end{equation}
where $Q = \one - \frac{1}{2} U U^\dagger$. Then, by taking $X = -{\rm grad} J(U) \epsilon$, we recover \eqref{eq:expmap_sph}.

The gradient descent algorithm in its simplest form is given by Algorithm \ref{alg:graddesc}. This algorithm is guaranteed
to converge if the step size $\epsilon$ is chosen small enough. A simple addition that can be made to speed up convergence
and to guarantee the monotonic decrease of the cost functional $J(U)$ at each iteration of the while loop is to use
the backtracking algorithm with Armijo-Wolfe conditions to update the step size at each iteration, instead of using
a fixed one. The explanation of these various techniques is beyond the scope of this paper, but further information
can be found in \cite{satoRiemannianOptimizationIts2021}.

\begin{algorithm}
    \caption{Gradient descent}
    \label{alg:graddesc}
    \begin{algorithmic}[1]
        \REQUIRE $U_0 \in \mathcal{M}$
        \ENSURE $U_f \in \mathcal{M}$
        
        \STATE $k \gets 0$
        \STATE $U_k \gets U_0$
        
        \WHILE{ $g_{U_k}({\rm grad} J(U_k), {\rm grad} J(U_k)) > g_{\rm min}$ }
            \STATE $U_{k+1} \gets {\rm exp}_{U_k}(- {\rm grad} J(U_k) \epsilon)$
            \STATE $k \gets k + 1$
        \ENDWHILE

        \STATE $U_f \gets U_k$
    \end{algorithmic}
\end{algorithm}

\section{Proof of Proposition \ref{thm:avg_2nd_deriv}}
\label{sec:avg_2nd_deriv}

We start this section by proving the following lemma on the properties of the uniform distribution of density matrices $\rho$ on the set
of pure states $\mathrm{extr}(\mathfrak{D}(\mathcal{H}_\mathfrak{u}))$.

\begin{lemma}
\label{thm:averages}
Let $\rho_{jk} = \bra{j} \rho \ket{k}$ be the entries of a pure density matrix $\rho \in \mathrm{extr}(\mathfrak{D}(\mathcal{H}_\mathfrak{u}))$, then
\begin{enumerate}
\item $\langle \rho_{jk} \rangle_\mathfrak{u} = \frac{1}{n_\mathfrak{u}} \delta_{jk}$.
\item $\langle \rho_{jk} \rho_{cb} \rangle_\mathfrak{u} = \frac{1}{n_\mathfrak{u} (n_\mathfrak{u}+1)} (\delta_{jk} \delta_{bc} + \delta_{jb} \delta_{kc})$
\end{enumerate}
Where the average is taken over $\rho$ uniformly distributed on the set of pure states $\mathrm{extr}(\mathfrak{D}(\mathcal{H}_\mathfrak{u}))$.
\end{lemma}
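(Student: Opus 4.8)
The plan is to realise the uniform distribution on $\mathrm{extr}(\mathfrak{D}(\mathcal{H}_\mathfrak{u}))$ as the push‑forward, under $\ket{\psi}\mapsto\ketbra{\psi}{\psi}$, of the unique unitarily invariant probability measure on the unit sphere $S^{n_\mathfrak{u}-1}(\mathbb{C})\subset\mathcal{H}_\mathfrak{u}$. Writing $\rho_{jk}=\psi_j\overline{\psi_k}$ for a Haar‑random unit vector $\ket{\psi}=\sum_j\psi_j\ket{j}$, the structural fact I will use is that for every $U\in\mathrm{U}(n_\mathfrak{u})$ the vector $U\ket{\psi}$ has the same law as $\ket{\psi}$; hence the multilinear averages in the statement are $\mathrm{U}(n_\mathfrak{u})$‑invariant tensors, and the whole proof reduces to identifying the span of such invariants and fixing the free coefficients by a couple of scalar identities. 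For item~1: the map $(j,k)\mapsto\langle\rho_{jk}\rangle_\mathfrak{u}=\langle\psi_j\overline{\psi_k}\rangle_\mathfrak{u}$ is an invariant rank‑$(1,1)$ tensor, hence a multiple of the identity, $\langle\rho_{jk}\rangle_\mathfrak{u}=c\,\delta_{jk}$; taking the trace, $1=\langle\tr\rho\rangle_\mathfrak{u}=\sum_j\langle\rho_{jj}\rangle_\mathfrak{u}=c\,n_\mathfrak{u}$, so $c=1/n_\mathfrak{u}$.

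For item~2, the map $(j,k,c,b)\mapsto\langle\rho_{jk}\rho_{cb}\rangle_\mathfrak{u}$ is an invariant rank‑$(2,2)$ tensor, and the space of such invariants is spanned by the two tensors obtained by pairing the two ``ket'' indices with the two ``bra'' indices, namely $\delta_{jk}\delta_{cb}$ and $\delta_{jb}\delta_{kc}$, so that
\begin{equation*}
\langle\rho_{jk}\rho_{cb}\rangle_\mathfrak{u}=A\,\delta_{jk}\delta_{cb}+B\,\delta_{jb}\delta_{kc}
\end{equation*}
for scalars $A,B$ depending only on $n_\mathfrak{u}$. Since $\rho_{jk}\rho_{cb}=\psi_j\overline{\psi_k}\psi_c\overline{\psi_b}$ is invariant under the exchange $j\leftrightarrow c$, so must be the right‑hand side, which forces $A=B$. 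To pin down the common value I contract $k=j$ and sum: using $\sum_j\rho_{jj}=\|\psi\|^2=1$ on the left gives $\sum_j\langle\rho_{jj}\rho_{cb}\rangle_\mathfrak{u}=\langle\rho_{cb}\rangle_\mathfrak{u}=\tfrac{1}{n_\mathfrak{u}}\delta_{cb}$, while the ansatz gives $(A n_\mathfrak{u}+B)\delta_{cb}$. Hence $A(n_\mathfrak{u}+1)=1/n_\mathfrak{u}$, i.e. $A=B=\tfrac{1}{n_\mathfrak{u}(n_\mathfrak{u}+1)}$, which is item~2. (As an independent check one may instead use purity $\rho^2=\rho$: summing over $c$ in $\sum_c\langle\rho_{jc}\rho_{cb}\rangle_\mathfrak{u}=\langle\rho_{jb}\rangle_\mathfrak{u}$ yields $(A+B n_\mathfrak{u})\delta_{jb}=\tfrac{1}{n_\mathfrak{u}}\delta_{jb}$, consistent with the above.)

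The only non‑routine point is the dimension count for the invariant rank‑$(2,2)$ tensors, i.e. that every $\mathrm{U}(n_\mathfrak{u})$‑invariant such tensor is a linear combination of the two ``pairing'' tensors; this is the first fundamental theorem of invariant theory for the unitary group (equivalently, Schur--Weyl duality: the commutant of the diagonal $\mathrm{U}(n_\mathfrak{u})$‑action on $\mathcal{H}_\mathfrak{u}^{\otimes2}$ is spanned by the identity and the swap). If one prefers to avoid invariant theory altogether, the same constants follow by a direct computation: write $\ket{\psi}=\ket{z}/\|z\|$ with $z$ a standard complex Gaussian vector, note that $\|z\|$ is independent of $\ket{\psi}$, and evaluate $\langle z_j\overline{z_k}\rangle=\delta_{jk}$ and $\langle z_j\overline{z_k}z_c\overline{z_b}\rangle=\delta_{jk}\delta_{cb}+\delta_{jb}\delta_{kc}$ by Wick's theorem, together with $\langle\|z\|^2\rangle=n_\mathfrak{u}$ and $\langle\|z\|^4\rangle=n_\mathfrak{u}(n_\mathfrak{u}+1)$; dividing recovers both formulas. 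Either way, the heart of the argument is purely the invariance/symmetry bookkeeping, and the rest is elementary.
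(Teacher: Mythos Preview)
Your proof is correct and takes a genuinely different route from the paper's. The paper parametrises $\ket{\psi}=\sum_i\alpha_i e^{\iu\phi_i}\ket{i}$ and exploits that, under the uniform measure on the sphere, the phases $\phi_i$ are i.i.d.\ uniform and independent of the amplitudes, while the squared amplitudes $(\alpha_i^2)$ follow the symmetric Dirichlet distribution $\mathrm{Dir}(1/2)$; it then reads off $\langle\alpha_i^2\rangle=1/n_\mathfrak{u}$ and $\langle\alpha_i^4\rangle=2/(n_\mathfrak{u}(n_\mathfrak{u}+1))$ from the known Dirichlet moments, obtains $\langle\alpha_i^2\alpha_j^2\rangle$ from the normalisation $\sum_j\alpha_j^2=1$, and assembles the various Kronecker‑delta cases by hand. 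You instead use only the $\mathrm{U}(n_\mathfrak{u})$‑invariance of the measure: the structure of the answer (a multiple of $\delta_{jk}$ for item~1, a combination of $\delta_{jk}\delta_{cb}$ and $\delta_{jb}\delta_{kc}$ for item~2) is forced by Schur--Weyl/invariant theory, the symmetry $j\leftrightarrow c$ equates the two coefficients, and a single trace fixes the constant. The paper's computation is entirely elementary and self‑contained, needing no representation theory, at the price of some case analysis; your argument is shorter, makes transparent \emph{why} the answer has the form it does, and generalises immediately to higher moments, while your Gaussian/Wick alternative provides a third self‑contained derivation that avoids both Dirichlet moments and invariant theory.
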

\begin{proof}
Let $\rho = \ketbra{\psi}{\psi}$ with $\ket{\psi} = \sum_{i = 0}^{n_\mathfrak{u} - 1} \alpha_i e^{\iu \phi_i} \ket{i}$,
where $\alpha_i \in [0, 1]$, $\phi_i \in [0, 2\pi)$ and $\sum_i \alpha_i^2 = 1$.
Then, let $\ket{\psi}$ be uniformly distributed on the unit complex-valued $(n_\mathfrak{u}-1)$-sphere. That implies that the
vector collecting square of the amplitudes $[\alpha_i^2]_{i=0}^{n_\mathfrak{u}-1}$ is given by the
uniform Dirichlet distribution $\mathrm{Dir}(1/2)$
\cite{guardiolaSphericalDirichletDistribution2020, lanFlexibleBayesianDynamic2020a},
whereas the complex phases $e^{\iu \phi_i}$ are independent from each other and
from the amplitudes, and (identically) uniformly distributed along the unit circle on the complex plane $\mathbb{C}$.

As a consequence of these facts, we know that
\begin{itemize}
\item $\langle \alpha_i^2 \rangle = \frac{1}{n_\mathfrak{u}}$, which is the first moment of $\mathrm{Dir}(1/2)$.
\item $\langle \alpha_i^4 \rangle = \frac{2}{n_\mathfrak{u} (n_\mathfrak{u} + 1)}$, which is the second moment of $\mathrm{Dir}(1/2)$.
\item $\langle e^{\iu \phi_i} \rangle = 0$, which is the center of the unit circle on $\mathbb{C}$.
\item $\langle e^{\iu (\phi_i - \phi_j)} \rangle = \delta_{ij}$, which comes from the independence between $e^{\iu \phi_i}$ and $e^{-\iu \phi_j}$.
\item $\langle \alpha_j \alpha_k \alpha_b \alpha_c e^{\iu (\phi_j - \phi_k - \phi_b + \phi_c)} \rangle = \langle \alpha_j \alpha_k \alpha_b \alpha_c \rangle (\delta_{jk} \delta_{bc} + \delta_{jb} \delta_{kc} - \delta_{jk} \delta_{bc} \delta_{jb} \delta_{kc})$, which comes from the previous point and the independence between amplitudes and phases.
\end{itemize}

Equipped with the previous points, we can know compute our quantities of interest $\langle \rho_{ij} \rangle_\mathfrak{u}$ and
$\langle \rho_{jk} \rho_{cb} \rangle_\mathfrak{u}$. By denoting $\rho_{ij} = \alpha_i \alpha_j e^{\iu (\phi_i - \phi_j)}$, we have that
\begin{equation}
\langle \rho_{ij} \rangle = \langle \alpha_i \alpha_j e^{\iu (\phi_i - \phi_j)} \rangle
= \langle \alpha_i \alpha_j \rangle \delta_{ij}
= \langle \alpha_i^2 \rangle \delta_{ij}
= \frac{1}{n_\mathfrak{u}} \delta_{ij}.
\end{equation}

In order to compute the second quantity of interest, we need to compute $\langle \alpha_i^2 \alpha_j^2\rangle$ for $i \neq j$ first.
This is done by taking advantage of the normalization of the amplitudes,
\begin{equation}
\begin{split}
\langle \alpha_i^4 \rangle =& \langle \alpha_i^2 (1 - \sum_{j \neq i} \alpha_j^2) \rangle \\
=& \langle \alpha_i^2 \rangle - (n_\mathfrak{u} - 1) \langle \alpha_i^2 \alpha_j^2\rangle \\
=& \frac{1}{n_\mathfrak{u}} - (n_\mathfrak{u} - 1) \langle \alpha_i^2 \alpha_j^2\rangle = \frac{2}{n_\mathfrak{u} (n_\mathfrak{u} + 1)}.
\end{split}
\end{equation}
Then, we have that
\begin{equation}
\langle \alpha_i^2 \alpha_j^2 \rangle = \frac{1}{n_\mathfrak{u} (n_\mathfrak{u} + 1)}, \ i \neq j.
\end{equation}

Finally, we are able to compute the second quantity
\begin{equation}
\begin{split}
\langle \rho_{jk} \rho_{cb} \rangle =&
\langle \alpha_j \alpha_k \alpha_b \alpha_c e^{\iu \phi_j - \iu \phi_k - \iu \phi_b + \iu \phi_c} \rangle \\
=& \langle \alpha_j \alpha_k \alpha_b \alpha_c \rangle (\delta_{jk} \delta_{bc} + \delta_{jb} \delta_{kc} - \delta_{jk} \delta_{bc} \delta_{jb} \delta_{kc}) \\
=& \langle \alpha_j^2 \alpha_c^2 \rangle (\delta_{jk} \delta_{bc} + \delta_{jb} \delta_{kc}) (1 - \delta_{jc}) + \langle \alpha_j^4 \rangle \delta_{jk} \delta_{bc} \delta_{jb} \delta_{kc} \\
=& \frac{1}{n_\mathfrak{u} (n_\mathfrak{u} + 1)} (\delta_{jk} \delta_{bc} + \delta_{jb} \delta_{kc}) (1 - \delta_{jc}) + \frac{2}{n_\mathfrak{u} (n_\mathfrak{u} + 1)} \delta_{jk} \delta_{bc} \delta_{jb} \delta_{kc} \\
=& \frac{1}{n_\mathfrak{u} (n_\mathfrak{u} + 1)} (\delta_{jk} \delta_{bc} + \delta_{jb} \delta_{kc}).
\end{split}
\end{equation}
\end{proof}

These results can now be used to compute the values of $\langle \tr(\sigma_i \sigma_a \rho) \rangle$ and
$\langle \tr(\sigma_i \rho \sigma_a \rho) \rangle$, which will be necessary to prove Proposition \ref{thm:avg_2nd_deriv}.
The computation is given in the following lemma.

\begin{lemma}
\label{thm:integral}
Let
$\{\sigma_i\}_{i=0}^{n_\mathfrak{u}^2-1} = \{\frac{1}{\sqrt{n_\mathfrak{u}}} \one\} \cup \{\lambda^s_{jk}\}_{0 \leq k < j \leq n_\mathfrak{u}-1} \cup \{\lambda^a_{jk}\}_{0 \leq j < k \leq {n_\mathfrak{u}}-1} \cup \{\lambda^d_k\}_{1 \leq k \leq n_\mathfrak{u}-1}$,
where $\lambda^s_{jk}$, $\lambda^a_{jk}$ and  $\lambda^d_{jk}$ are the symmetric,
antisymmetric and diagonal generalized $n_\mathfrak{u} \times n_\mathfrak{u}$ Gell-Mann matrices \cite{bertlmannBlochVectorsQudits2008} (normalized to $1$), respectively, i.e.
\begin{equation}
\begin{split}
\lambda^s_{jk} =& \frac{1}{\sqrt{2}} (\ketbra{j}{k} + \ketbra{k}{j}) \\
\lambda^a_{jk} =& \frac{1}{\sqrt{2}} (\iu \ketbra{j}{k} - \iu \ketbra{k}{j}) \\
\lambda^d_{k} =& \frac{1}{\sqrt{k (k+1)}} (\sum_{j=0}^{k-1} \ketbra{j}{j} - k \ketbra{k}{k}).
\end{split}
\end{equation}
Let in particular $\sigma_0 = \frac{1}{\sqrt{n_\mathfrak{u}}} \one$. Then for $i,j > 0$,
\begin{enumerate}
\item $\langle \tr(\sigma_i \sigma_a \rho) \rangle = \frac{1}{n_\mathfrak{u}} \delta_{ia}$ and
\item $\langle \tr(\sigma_i \rho \sigma_a \rho) \rangle = \frac{1}{n_\mathfrak{u} (n_\mathfrak{u} + 1)} \delta_{ia}$,
\end{enumerate}
where the average is taken over all pure states $\rho \in \mathrm{extr}(\mathfrak{D}(\mathcal{H}_\mathfrak{u}))$.
\end{lemma}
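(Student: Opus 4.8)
The plan is to reduce both identities to the moment computations of Lemma \ref{thm:averages} by writing everything in coordinates. First I would expand $\rho = \sum_{jk} \rho_{jk} \ketbra{j}{k}$ and substitute into $\tr(\sigma_i \sigma_a \rho)$ and $\tr(\sigma_i \rho \sigma_a \rho)$, obtaining
\[
\tr(\sigma_i \sigma_a \rho) = \sum_{jk} (\sigma_i \sigma_a)_{kj} \rho_{jk}, \qquad
\tr(\sigma_i \rho \sigma_a \rho) = \sum_{jkbc} (\sigma_i)_{cj} (\sigma_a)_{kb} \rho_{jk} \rho_{bc}.
\]
Taking the average over pure states and applying Lemma \ref{thm:averages}, the first becomes $\frac{1}{n_\mathfrak{u}} \sum_j (\sigma_i \sigma_a)_{jj} = \frac{1}{n_\mathfrak{u}} \tr(\sigma_i \sigma_a) = \frac{1}{n_\mathfrak{u}} \delta_{ia}$, using the orthonormality $\tr(\sigma_i \sigma_a) = \delta_{ia}$ of the (normalized Gell-Mann) basis. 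That disposes of item (i) immediately, with no case analysis needed.

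For item (ii), applying part 2 of Lemma \ref{thm:averages} gives
\[
\langle \tr(\sigma_i \rho \sigma_a \rho) \rangle = \frac{1}{n_\mathfrak{u}(n_\mathfrak{u}+1)} \sum_{jkbc} (\sigma_i)_{cj}(\sigma_a)_{kb} \big( \delta_{jk}\delta_{bc} + \delta_{jb}\delta_{kc} \big)
= \frac{1}{n_\mathfrak{u}(n_\mathfrak{u}+1)} \big( \tr(\sigma_i)\tr(\sigma_a) + \tr(\sigma_i \sigma_a) \big).
\]
Since $i,a > 0$, the matrices $\sigma_i, \sigma_a$ are traceless Gell-Mann matrices, so $\tr(\sigma_i) = \tr(\sigma_a) = 0$, and the first term vanishes; the second term is $\delta_{ia}$ by orthonormality. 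This yields the claimed $\frac{1}{n_\mathfrak{u}(n_\mathfrak{u}+1)} \delta_{ia}$.

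The computation is essentially bookkeeping once Lemma \ref{thm:averages} is in hand, so there is no serious obstacle; the one place to be careful is making sure the index contractions in the double-sum for (ii) are matched correctly to the two Kronecker-delta terms — in particular that the $\delta_{jk}\delta_{bc}$ term produces $\tr(\sigma_i)\tr(\sigma_a)$ (two independent traces) while the $\delta_{jb}\delta_{kc}$ term produces the single trace $\tr(\sigma_i\sigma_a)$ — and in invoking tracelessness of the $\sigma$'s with positive index, which is exactly the defining property of the Gell-Mann part of the basis and the reason the statement restricts to $i,j>0$. I would also remark that the explicit form of the Gell-Mann matrices given in the lemma statement is not actually needed beyond the two facts $\tr(\sigma_i)=0$ and $\tr(\sigma_i\sigma_a)=\delta_{ia}$ for $i,a>0$, so the proof applies verbatim to any orthonormal Hermitian basis with $\sigma_0 \propto \one$.
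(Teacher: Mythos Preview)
Your argument is correct and considerably cleaner than the paper's. For item (ii) the paper proceeds by an exhaustive case analysis over the three types of Gell-Mann matrices (symmetric, antisymmetric, diagonal) and all mixed pairings, computing each $\langle \tr(\lambda^{\bullet}_{jk}\,\rho\,\lambda^{\bullet}_{bc}\,\rho)\rangle$ explicitly from Lemma~\ref{thm:averages}. You instead contract the second-moment formula directly against the matrix entries of $\sigma_i,\sigma_a$ to obtain $\tr(\sigma_i)\tr(\sigma_a)+\tr(\sigma_i\sigma_a)$ in one stroke, and then finish with tracelessness and orthonormality. This is genuinely more general --- as you observe, it works for any orthonormal Hermitian basis with $\sigma_0\propto\one$ --- and avoids a page of case-by-case computation; the paper's route buys nothing extra here.

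One small slip to clean up: with your indexing $\tr(\sigma_i\rho\sigma_a\rho)=\sum_{jkbc}(\sigma_i)_{cj}(\sigma_a)_{kb}\,\rho_{jk}\rho_{bc}$, the relevant moment is $\langle\rho_{jk}\rho_{bc}\rangle=\frac{1}{n_\mathfrak{u}(n_\mathfrak{u}+1)}(\delta_{jk}\delta_{bc}+\delta_{jc}\delta_{kb})$ (Lemma~\ref{thm:averages} is stated for $\rho_{jk}\rho_{cb}$, so one index pair must be swapped). The contractions then work out as: $\delta_{jk}\delta_{bc}$ yields $\tr(\sigma_i\sigma_a)$, while $\delta_{jc}\delta_{kb}$ yields $\tr(\sigma_i)\tr(\sigma_a)$ --- the reverse of the attribution you wrote. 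The sum is unchanged, so the conclusion is unaffected, but since you flagged exactly this step as the place to be careful, it is worth getting the bookkeeping right.
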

\begin{proof}
The first point is obtained as a direct application of Lemma \ref{thm:averages} and of the orthonormality of $\{\sigma_i\}_i$,
\begin{equation}
\langle \tr(\sigma_i \sigma_a \rho) \rangle = \tr(\sigma_i \sigma_a \langle \rho \rangle) = \frac{1}{n_\mathfrak{u}} \tr(\sigma_i \sigma_a \one) = \frac{1}{n_\mathfrak{u}} \delta_{ia}.
\end{equation}

The second point is proven by cases. First, let
$\sigma_i = \lambda^s_{jk}$ and $\sigma_a = \lambda^s_{bc}$, i.e. $j > k$ and $b > c$, then
\begin{equation}
\begin{split}
\langle \tr(\lambda^s_{jk} \rho \lambda^s_{bc} \rho) \rangle =&
\frac{1}{2} \langle \tr((\ketbra{j}{k} + \ketbra{k}{j}) \rho (\ketbra{b}{c} + \ketbra{c}{b}) \rho) \rangle \\
=& \frac{1}{2} (\langle \rho_{kb} \rho_{cj} \rangle + \langle \rho_{jb} \rho_{ck} \rangle + \langle \rho_{kc} \rho_{bj} \rangle + \langle \rho_{jc} \rho_{bk} \rangle).
\end{split}
\end{equation}
By applying Lemma \ref{thm:averages}, we know that the first and last terms of this sum are zero, since we have that $k \neq j$ and $c \neq b$, whereas the middle two are equal, resulting in
\begin{equation}
\langle \tr(\lambda^s_{jk} \rho \lambda^s_{bc} \rho) \rangle = \frac{1}{n_\mathfrak{u} (n_\mathfrak{u} + 1)} \delta_{jb} \delta_{kc}.
\end{equation}

If $\sigma_i = \lambda^a_{jk}$ and $\sigma_a = \lambda^a_{bc}$, i.e. $j < k$ and $b < c$, then
\begin{equation}
\begin{split}
\langle \tr(\lambda^a_{jk} \rho \lambda^a_{bc} \rho) \rangle =&
\frac{1}{2} \langle \tr((\iu \ketbra{j}{k} - \iu \ketbra{k}{j}) \rho (\iu \ketbra{b}{c} - \iu \ketbra{c}{b}) \rho) \rangle \\
=& \frac{1}{2} (- \langle \rho_{kb} \rho_{cj} \rangle + \langle \rho_{jb} \rho_{ck} \rangle + \langle \rho_{kc} \rho_{bj} \rangle - \langle \rho_{jc} \rho_{bk} \rangle).
\end{split}
\end{equation}
The same argument as in the previous case applies here, resulting in
\begin{equation}
\langle \tr(\lambda^a_{jk} \rho \lambda^a_{bc} \rho) \rangle = \frac{1}{n_\mathfrak{u} (n_\mathfrak{u} + 1)} \delta_{jb} \delta_{kc}.
\end{equation}

If $\sigma_i = \lambda^d_{k}$ and $\sigma_a = \lambda^d_{c}$, then
\begin{equation}
\begin{split}
\langle \tr(\lambda^d_{k} \rho \lambda^d_{c} \rho) \rangle
=& \frac{1}{\sqrt{k (k+1)}} \frac{1}{\sqrt{c (c+1)}} \\
& \langle \tr((\sum_{j=0}^{k-1} \ketbra{j}{j} - k \ketbra{k}{k}) \rho (\sum_{b=0}^{c-1} \ketbra{b}{b} - c \ketbra{c}{c}) \rho) \rangle \\
=& \frac{1}{\sqrt{k (k+1)}} \frac{1}{\sqrt{c (c+1)}} \\
& (\sum_{j=0}^{k-1} \sum_{b=0}^{c-1} \langle \rho_{jb} \rho_{bj} \rangle
- k \sum_{b=0}^{c-1} \langle \rho_{kb} \rho_{bk} \rangle
- c \sum_{j=0}^{k-1} \langle \rho_{jc} \rho_{cj} \rangle
+ k c \langle \rho_{kc} \rho_{ck} \rangle) \\
=& \frac{1}{\sqrt{k (k+1)}} \frac{1}{\sqrt{c (c+1)}} \frac{1}{n_\mathfrak{u} (n_\mathfrak{u} + 1)} \\
& (\sum_{j=0}^{k-1} \sum_{b=0}^{c-1} \delta_{jb} - k \sum_{b=0}^{c-1} \delta_{kb} - c \sum_{j=0}^{k-1} \delta_{jc} + k c \delta_{kc}).
\end{split}
\end{equation}

The first term of the sum is equal to $\min\{k, c\}$, whereas the addition of the second and third terms is equal to $-\min\{k, c\} (1 - \delta_{kc})$.
Then, we have that
\begin{equation}
\begin{split}
\langle \tr(\lambda^d_{k} \rho \lambda^d_{c} \rho) \rangle
=& \frac{1}{\sqrt{k (k+1)}} \frac{1}{\sqrt{c (c+1)}} \frac{1}{n_\mathfrak{u} (n_\mathfrak{u} + 1)} \\
& (\min\{k, c\} - \min\{k, c\} (1 - \delta_{kc}) + k c \delta_{kc}) \\
=& \frac{1}{\sqrt{k (k+1)}} \frac{1}{\sqrt{c (c+1)}} \frac{1}{n_\mathfrak{u} (n_\mathfrak{u} + 1)} (\min\{k, c\} + k c) \delta_{kc} \\
=& \frac{1}{k (k+1)} \frac{1}{n_\mathfrak{u} (n_\mathfrak{u} + 1)} (k + k^2) \delta_{kc} \\
=& \frac{1}{n_\mathfrak{u} (n_\mathfrak{u} + 1)} \delta_{kc}
\end{split}
\end{equation}

Now we have to prove that whenever $\sigma_i$ and $\sigma_a$ are of different kinds, e.g. one is $\sigma_i$ symmetric
and $\sigma_a$ is diagonal, the average $\langle \tr(\sigma_i \rho \sigma_a \rho) \rangle$ is zero.
If $\sigma_i = \lambda^s_{jk}$ and $\sigma_a = \lambda^a_{bc}$, i.e. $j > k$ and $b < c$, we have that
\begin{equation}
\begin{split}
\langle \tr(\lambda^s_{jk} \rho \lambda^a_{bc} \rho) \rangle =& \frac{1}{2} \langle \tr((\ketbra{j}{k} + \ketbra{k}{j}) \rho (\iu \ketbra{b}{c} - \iu \ketbra{c}{b}) \rho) \rangle \\
=& \frac{1}{2} (\iu \langle \rho_{kb} \rho_{cj} \rangle + \iu \langle \rho_{jb} \rho_{ck} \rangle - \iu \langle \rho_{kc} \rho_{bj} \rangle - \iu \langle \rho_{jc} \rho_{bk} \rangle).
\end{split}
\end{equation}

From the first case, we know that the first and last terms are zero, whereas the two in the middle were equal. However, now they appear
with opposite signs, making the entire expression equal to zero. Therefore,
\begin{equation}
\langle \tr(\lambda^s_{jk} \rho \lambda^a_{bc} \rho) \rangle = 0.
\end{equation}

If $\sigma_i = \lambda^s_{jk}$ and $\sigma_a = \lambda^d_c$, i.e. $j > k$, we have that
\begin{equation}
\begin{split}
\langle \tr(\lambda^s_{jk} \rho \lambda^d_c \rho) \rangle =& \frac{1}{\sqrt{2 c (c+1)}} \langle \tr((\ketbra{j}{k} + \ketbra{k}{j}) \rho (\sum_{b=0}^{c-1} \ketbra{b}{b} - c \ketbra{c}{c}) \rho) \rangle \\
=& \frac{1}{\sqrt{2 c (c+1)}}
(\sum_{b=0}^{c-1} \langle \rho_{kb} \rho_{bj} \rangle + \sum_{b=0}^{c-1} \langle \rho_{jb} \rho_{bk} \rangle
- c \langle \rho_{kc} \rho_{cj} \rangle - c \langle \rho_{jc} \rho_{ck} \rangle).
\end{split}
\end{equation}

Since $j \neq k$, by Lemma \ref{thm:averages} we know that all four terms inside the parenthesis in the previous expression are equal to zero.
Therefore
\begin{equation}
\begin{split}
\langle \tr(\lambda^s_{jk} \rho \lambda^d_{bc} \rho) \rangle =& 0.
\end{split}
\end{equation}

Finally, if $\sigma_i = \lambda^a_{jk}$ and $\sigma_a = \lambda^d_{c}$, i.e. $j < k$, we have that
\begin{equation}
\begin{split}
\langle \tr(\lambda^a_{jk} \rho \lambda^d_c \rho) \rangle =& \frac{1}{\sqrt{2 c (c+1)}} \langle \tr((\iu \ketbra{j}{k} - \iu \ketbra{k}{j}) \rho (\sum_{b=0}^{c-1} \ketbra{b}{b} - c \ketbra{c}{c}) \rho) \rangle \\
=& \frac{1}{\sqrt{2 c (c+1)}}
(\iu \sum_{b=0}^{c-1} \langle \rho_{kb} \rho_{bj} \rangle - \iu \sum_{b=0}^{c-1} \langle \rho_{jb} \rho_{bk} \rangle
- \iu c \langle \rho_{kc} \rho_{cj} \rangle + \iu c \langle \rho_{jc} \rho_{ck} \rangle),
\end{split}
\end{equation}
which is is zero by the same argument as in the previous case. All other cases are automatically verified by the cyclicity of the trace.
\end{proof}

With these results in hand, we can now proceed to prove Proposition \ref{thm:avg_2nd_deriv}.

\begin{proposition*}
Given a tripartite system with Hilbert space $\cal H = H_\mathfrak{l} \otimes H_\mathfrak{w} \otimes H_\mathfrak{e}$
and factorized initial state $\rho_0 = \rho_\mathfrak{l} \otimes \rho_\mathfrak{w} \otimes \rho_\mathfrak{e}$,
which follows Lindbladian dynamics with arbitrary Hamiltonian and noise operators $L_m = \one_\mathfrak{l} \otimes \one_\mathfrak{w} \otimes L^\mathfrak{e}_m$,
the second derivative of the purity $\gamma_\mathfrak{l}(\rho)$ has the following average initial value over all the reduced states $\rho_\mathfrak{l}$,
\begin{equation}
\langle \ddot\gamma_\mathfrak{l}(\rho)|_{t=0} \rangle_\mathfrak{l} =
4 \nu_n \sum_{\substack{i > 0 \\ bcjk \geq 0}} g_{ibc} g_{ijk}
(\tau_{c0} \tau_{0k} \tr(\sigma_b^\mathfrak{w} \rho_\mathfrak{w}) \tr(\sigma_j^\mathfrak{w} \rho_\mathfrak{w})
- \tau_{ck} \tr(\sigma_b^\mathfrak{w} \sigma_j^\mathfrak{w} \rho_\mathfrak{w})),
\end{equation}
where $\nu_n = \frac{1}{n_\mathfrak{l}} - \frac{1}{n_\mathfrak{l} (n_\mathfrak{l} + 1)}$ and
$\tau_{ck} = \tr(\sigma^\mathfrak{e}_c \sigma^\mathfrak{e}_k \rho_\mathfrak{e})$.
\end{proposition*}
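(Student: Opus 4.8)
The plan is to evaluate \eqref{eq:2ndderiv} at the factorized state $\rho_0 = \rho_\mathfrak{l}\otimes\rho_\mathfrak{w}\otimes\rho_\mathfrak{e}$ and only at the very end average over $\rho_\mathfrak{l}$. Write $\mathcal{L} = \mathcal{L}_H + \mathcal{D}$ with $\mathcal{L}_H(\cdot) = -\iu[H,\cdot]$ and $\mathcal{D} = \sum_m \mathcal{D}_{L_m}$. The first step is to note that the dissipator is invisible to $\tr_\mathfrak{we}$: since $L_m = \one_\mathfrak{l}\otimes\one_\mathfrak{w}\otimes L_m^\mathfrak{e}$ acts trivially on $\mathcal{H}_\mathfrak{l}\otimes\mathcal{H}_\mathfrak{w}$, cyclicity of the partial trace over $\mathcal{H}_\mathfrak{e}$ gives, for \emph{any} operator $X$, $\tr_\mathfrak{we}(L_m X L_m^\dagger) = \tr_\mathfrak{we}\!\big((\one_\mathfrak{lw}\otimes L_m^{\mathfrak{e}\dagger}L_m^\mathfrak{e})X\big) = \tr_\mathfrak{we}\!\big(\tfrac12\{L_m^\dagger L_m, X\}\big)$, hence $\tr_\mathfrak{we}(\mathcal{D}_{L_m}(X)) = 0$. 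Consequently $\tr_\mathfrak{we}(\mathcal{D}\circ\mathcal{L}(\rho_0)) = 0$ and $\tr_\mathfrak{we}(\mathcal{L}(\rho_0)) = \tr_\mathfrak{we}(\mathcal{L}_H(\rho_0))$, so that only $\tr_\mathfrak{we}(\mathcal{L}_H\circ\mathcal{L}_H(\rho_0))$ and $\tr_\mathfrak{we}(\mathcal{L}_H\circ\mathcal{D}(\rho_0))$ can still carry the $L_m$.

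Second, I would discard the $\mathcal{L}_H\circ\mathcal{D}$ contribution: $\mathcal{D}(\rho_0) = \rho_\mathfrak{l}\otimes\rho_\mathfrak{w}\otimes\xi_\mathfrak{e}$ with $\xi_\mathfrak{e} = \sum_m\mathcal{D}_{L_m^\mathfrak{e}}(\rho_\mathfrak{e})$ traceless and Hermitian, so expanding $H = \sum_{ijk}g_{ijk}\,\sigma_i^\mathfrak{l}\otimes\sigma_j^\mathfrak{w}\otimes\sigma_k^\mathfrak{e}$ and factorizing the partial trace yields $\tr_\mathfrak{we}(\mathcal{L}_H(\mathcal{D}(\rho_0))) = -\iu\sum_{ijk}g_{ijk}\tr(\sigma_j^\mathfrak{w}\rho_\mathfrak{w})\tr(\sigma_k^\mathfrak{e}\xi_\mathfrak{e})[\sigma_i^\mathfrak{l},\rho_\mathfrak{l}]$; contracting against $\tr_\mathfrak{we}(\rho_0) = \rho_\mathfrak{l}$ produces $\tr([\sigma_i^\mathfrak{l},\rho_\mathfrak{l}]\rho_\mathfrak{l}) = 0$. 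This is exactly what removes the $L_m$ from the final formula, and from here the calculation is purely Hamiltonian: $\ddot\gamma_\mathfrak{l}(\rho_0) = -2\tr\!\big(\tr_\mathfrak{we}([H,[H,\rho_0]])\,\rho_\mathfrak{l}\big) - 2\tr\!\big(\tr_\mathfrak{we}([H,\rho_0])^2\big)$.

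Third --- the bulk --- substitute the expansion of $H$ into both terms and reduce to products of single-factor traces. In $\tr_\mathfrak{we}([H,\rho_0])$ the partial trace factorizes as $-\iu\sum_{ijk}g_{ijk}\tr(\sigma_j^\mathfrak{w}\rho_\mathfrak{w})\tr(\sigma_k^\mathfrak{e}\rho_\mathfrak{e})[\sigma_i^\mathfrak{l},\rho_\mathfrak{l}]$, so its square contributes $-2\sum g_{ijk}g_{abc}\tr(\sigma_j^\mathfrak{w}\rho_\mathfrak{w})\tr(\sigma_b^\mathfrak{w}\rho_\mathfrak{w})\tr(\sigma_k^\mathfrak{e}\rho_\mathfrak{e})\tr(\sigma_c^\mathfrak{e}\rho_\mathfrak{e})\,\tr([\sigma_i^\mathfrak{l},\rho_\mathfrak{l}][\sigma_a^\mathfrak{l},\rho_\mathfrak{l}])$; expanding $[H,[H,\rho_0]]$ into its four operator words and applying $\tr_\mathfrak{we}$ then $\tr(\cdot\,\rho_\mathfrak{l})$ gives four terms that collapse pairwise after relabeling $(i,j,k)\leftrightarrow(a,b,c)$, leaving $\mathfrak{l}$-pieces $\tr(\sigma_i^\mathfrak{l}\sigma_a^\mathfrak{l}\rho_\mathfrak{l}^2)$ and $\tr(\sigma_i^\mathfrak{l}\rho_\mathfrak{l}\sigma_a^\mathfrak{l}\rho_\mathfrak{l})$, each multiplied by $\tr(\sigma_j^\mathfrak{w}\sigma_b^\mathfrak{w}\rho_\mathfrak{w})\tr(\sigma_k^\mathfrak{e}\sigma_c^\mathfrak{e}\rho_\mathfrak{e})$. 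Finally, average over pure $\rho_\mathfrak{l}$ via Lemma \ref{thm:integral}: using $\rho_\mathfrak{l}^2 = \rho_\mathfrak{l}$ one has $\langle\tr(\sigma_i^\mathfrak{l}\sigma_a^\mathfrak{l}\rho_\mathfrak{l})\rangle_\mathfrak{l} = \tfrac{1}{n_\mathfrak{l}}\delta_{ia}$ and $\langle\tr(\sigma_i^\mathfrak{l}\rho_\mathfrak{l}\sigma_a^\mathfrak{l}\rho_\mathfrak{l})\rangle_\mathfrak{l} = \tfrac{1}{n_\mathfrak{l}(n_\mathfrak{l}+1)}\delta_{ia}$ for $i,a>0$, whence $\langle\tr([\sigma_i^\mathfrak{l},\rho_\mathfrak{l}][\sigma_a^\mathfrak{l},\rho_\mathfrak{l}])\rangle_\mathfrak{l} = -2\nu_n\delta_{ia}$ with $\nu_n = \tfrac{1}{n_\mathfrak{l}} - \tfrac{1}{n_\mathfrak{l}(n_\mathfrak{l}+1)}$. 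The Kronecker deltas collapse the double sum to a single sum over $i$; assembling the three contributions gives \eqref{eq:avg_2nd_deriv}.

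The hard part is the bookkeeping in the double commutator: keeping operator orderings straight across the four words, justifying the pairwise combinations (the coefficient array $g_{ijk}$ is symmetric only under simultaneously permuting all three index slots of the two copies, not under exchanging a single $\mathfrak{w}$ or $\mathfrak{e}$ index), and verifying that the $i=0$ sector cancels --- in particular that the exceptional value $\langle\tr(\sigma_0^\mathfrak{l}\rho_\mathfrak{l}\sigma_0^\mathfrak{l}\rho_\mathfrak{l})\rangle_\mathfrak{l} = \tfrac{1}{n_\mathfrak{l}}$ (rather than $\tfrac{1}{n_\mathfrak{l}(n_\mathfrak{l}+1)}$) does not obstruct it --- so that the answer can legitimately be written as a sum over $i>0$. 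Everything else is routine trace manipulation.
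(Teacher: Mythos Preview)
Your approach is correct and essentially identical to the paper's: both split off the dissipator (the paper verifies $\tr_\mathfrak{we}(\mathcal{D}_{L_n}\circ\mathcal{D}_{L_m}(\rho))=0$, $\tr_\mathfrak{we}(\mathcal{D}_{L_m}([\sigma_{ijk},\rho]))=0$, and the $\mathcal{L}_H\circ\mathcal{D}$ cross term separately, while you do it in one stroke via $\tr_\mathfrak{we}(\mathcal{D}_{L_m}(X))=0$ for all $X$ plus the $\tr([\sigma_i^\mathfrak{l},\rho_\mathfrak{l}]\rho_\mathfrak{l})=0$ argument), then expand the double commutator in the $\sigma^{\mathfrak{lwe}}$ basis, factorize the partial traces, and average with Lemma~\ref{thm:integral}. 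One simplification you can use: the $i=0$ (or $a=0$) sector already cancels \emph{pointwise} in $\rho_\mathfrak{l}$, since $\sigma_0^\mathfrak{l}\propto\one_\mathfrak{l}$ gives $\tr(\sigma_0^\mathfrak{l}\sigma_a^\mathfrak{l}\rho_\mathfrak{l}^2-\sigma_0^\mathfrak{l}\rho_\mathfrak{l}\sigma_a^\mathfrak{l}\rho_\mathfrak{l})=0$ and likewise $\tr([\sigma_0^\mathfrak{l},\rho_\mathfrak{l}][\sigma_a^\mathfrak{l},\rho_\mathfrak{l}])=0$ by cyclicity, so the exceptional averaging value at $i=0$ never enters.
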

\begin{proof}
If $\rho_0 = \rho_\mathfrak{l} \otimes \rho_\mathfrak{w} \otimes \rho_\mathfrak{e}$, we have that
\begin{equation}
\begin{split}
\tr(\tr_\mathfrak{we}([\sigma_{abc}, [\sigma_{ijk}, \rho_0]]) \rho_\mathfrak{l}) =&
\tr(\sigma_b^\mathfrak{w} \sigma_j^\mathfrak{w} \rho_\mathfrak{w}) \tr(\sigma_c^\mathfrak{e} \sigma_k^\mathfrak{e} \rho_\mathfrak{e})
\tr(\sigma_a^\mathfrak{l} \sigma_i^\mathfrak{l} \rho_\mathfrak{l}^2 - \sigma_a^\mathfrak{l} \rho_\mathfrak{l} \sigma_i^\mathfrak{l} \rho_\mathfrak{l}) + \\
& \tr(\sigma_j^\mathfrak{w} \sigma_b^\mathfrak{w} \rho_\mathfrak{w}) \tr(\sigma_k^\mathfrak{e} \sigma_c^\mathfrak{e} \rho_\mathfrak{e})
\tr(\sigma_i^\mathfrak{l} \sigma_a^\mathfrak{l} \rho_\mathfrak{l}^2 - \sigma_i^\mathfrak{l} \rho_\mathfrak{l} \sigma_a^\mathfrak{l} \rho_\mathfrak{l}),
\end{split}
\end{equation}
and that
\begin{equation}
\begin{split}
\tr(\tr_\mathfrak{we}([\sigma_{abc}, \rho_0]) \tr_\mathfrak{we}([\sigma_{ijk}, \rho_0])) =&
\tr(\sigma_b^\mathfrak{w} \rho_\mathfrak{w}) \tr(\sigma_j^\mathfrak{w} \rho_\mathfrak{w})
\tr(\sigma_c^\mathfrak{e} \rho_\mathfrak{e}) \tr(\sigma_k^\mathfrak{e} \rho_\mathfrak{e}) \\
&\qquad \tr(2 \sigma_a^\mathfrak{l} \rho_\mathfrak{l} \sigma_i^\mathfrak{l} \rho_\mathfrak{l} - \sigma_a^\mathfrak{l} \sigma_i^\mathfrak{l} \rho_\mathfrak{l}^2 - \sigma_i^\mathfrak{l} \sigma_a^\mathfrak{l} \rho_\mathfrak{l}^2).
\end{split}
\end{equation}
Furthermore, if we also have
$L_m = \one_\mathfrak{l} \otimes \one_\mathfrak{w} \otimes L_m^\mathfrak{e}$,
then
\begin{equation}
\tr_\mathfrak{we}({\cal D}_{L_n} \circ {\cal D}_{L_m}(\rho)) = 0,
\end{equation}
\begin{equation}
\tr_\mathfrak{we}({\cal D}_{L_m}([\sigma_{ijk}, \rho])) = 0,
\end{equation}
and
\begin{equation}
\tr(\tr_\mathfrak{we}([\sigma_{ijk}, {\cal D}_{L_m}(\rho)])) = 0.
\end{equation}

Notice that the dissipative terms on the environment have no effect on the second derivative
of the purity $\gamma_\mathfrak{l}$. Using the previous expressions we may write the initial
purity loss acceleration as
\begin{equation}
\label{eq:2nd_deriv}
\begin{split}
\ddot\gamma_\mathfrak{l}(\rho)|_{t=0} =&
4 \sum_{\substack{ai > 0 \\ bcjk \geq 0}} g_{abc} g_{ijk} \tr(\sigma_a^\mathfrak{l} \sigma_i^\mathfrak{l} \rho_\mathfrak{l}^2 - \sigma_a^\mathfrak{l} \rho_\mathfrak{l} \sigma_i^\mathfrak{l} \rho_\mathfrak{l}) \\
&\qquad (\tr(\sigma_b^\mathfrak{w} \rho_\mathfrak{w}) \tr(\sigma_j^\mathfrak{w} \rho_\mathfrak{w})
\tr(\sigma_c^\mathfrak{e} \rho_\mathfrak{e}) \tr(\sigma_k^\mathfrak{e} \rho_\mathfrak{e})
- \tr(\sigma_b^\mathfrak{w} \sigma_j^\mathfrak{w} \rho_\mathfrak{w}) \tr(\sigma_c^\mathfrak{e} \sigma_k^\mathfrak{e} \rho_\mathfrak{e})).
\end{split}
\end{equation}

Notice that the terms with $a = 0$ or $i = 0$ do not contribute to this sum,
since the trace $
\tr(\sigma_a^\mathfrak{l} \sigma_i^\mathfrak{l} \rho_\mathfrak{l}^2 - \sigma_a^\mathfrak{l} \rho_\mathfrak{l} \sigma_i^\mathfrak{l} \rho_\mathfrak{l})
$ is zero in such cases, i.e.
the local Hamiltonians $H_\mathfrak{w}$, $H_\mathfrak{e}$ and the
interaction Hamiltonian $H_\mathfrak{we}$ do not contribute to the initial
purity loss acceleration.
Therefore we can remove them from the sum, as done above.
This is also the
case when either $b = 0$ or $j = 0$, and at the same time either $c = 0$ or $k = 0$, i.e. when $b = c = 0$, $b = k = 0$, $j = c = 0$ or $j = k = 0$.
In order to find a function that depends only on $\rho_\mathfrak{w}$, we may consider the initial
state of the environment as part of the model of the system, i.e. we may fix an initial state
$\rho_\mathfrak{e}$ which is assumed to be known. Then, we may average the purity loss
acceleration over all the initial pure states of the information subsystem, i.e. we compute
\begin{equation}
\langle \ddot \gamma_\mathfrak{l}(\rho)|_{t=0} \rangle_\mathfrak{l} =
\frac{1}{\nu_\mathfrak{l}} \int_{{\rm extr} ({\cal D}({\cal H}_\mathfrak{l}))} \ddot \gamma_\mathfrak{l}(\rho)|_{t=0} d\rho_\mathfrak{l},
\end{equation}
where $\nu_\mathfrak{l} = \int_{{\rm extr}({\cal D}({\cal H}_\mathfrak{l}))} d\rho_\mathfrak{l}$.

\noindent The computation of the average consists in solving integrals of the following kind:
\begin{equation}
I_{a,i} = \int_{{\rm extr} ({\cal D}({\cal H}_\mathfrak{l}))}
\tr(\sigma_a^\mathfrak{l} \sigma_i^\mathfrak{l} \rho_\mathfrak{l}^2 -
\sigma_a^\mathfrak{l} \rho_\mathfrak{l} \sigma_i^\mathfrak{l} \rho_\mathfrak{l}) d\rho_\mathfrak{l}.
\end{equation}
By linearity, we can split $I_{a, i}$ into two integrals, such that
\begin{equation}
I_{a,i} = \nu_\mathfrak{l} (\langle \sigma_a^\mathfrak{l} \sigma_i^\mathfrak{l} \rho_\mathfrak{l} \rangle -
\langle \sigma_a^\mathfrak{l} \rho_\mathfrak{l} \sigma_i^\mathfrak{l} \rho_\mathfrak{l} \rangle),
\end{equation}
where we have also used the fact that for any pure state $\rho_\mathfrak{l}^2 = \rho_\mathfrak{l}$.
This integral is then solved by applying the results of the previous lemma.

Finally, let $\tau_{ck} = \tr(\sigma_c^\mathfrak{e} \sigma_k^\mathfrak{e} \rho_\mathfrak{e})$ and
$\nu_n = \frac{1}{n_\mathfrak{l}} - \frac{1}{n_\mathfrak{l} (n_\mathfrak{l} + 1)}$.
Putting all the terms together we obtain,
\begin{equation}
\langle \ddot\gamma_\mathfrak{l}(\rho)|_{t=0} \rangle_\mathfrak{l} =
4 \nu_n \sum_{\substack{i > 0 \\ bcjk \geq 0}} g_{ibc} g_{ijk}
(\tau_{c0} \tau_{0k} \tr(\sigma_b^\mathfrak{w} \rho_\mathfrak{w}) \tr(\sigma_j^\mathfrak{w} \rho_\mathfrak{w})
- \tau_{ck} \tr(\sigma_b^\mathfrak{w} \sigma_j^\mathfrak{w} \rho_\mathfrak{w})).
\end{equation}
\end{proof}

\section{Proof of Proposition \ref{thm:optimwall}}
\label{sec:qubitproof}

We begin by proving some useful properties of the operators $D_i$ in the special case of $n_\mathfrak{w} = 2$.
For larger wall subsystems, these properties do not apply,
but we can still use the eigenstates of $D_1$ as the initial points of the gradient descent algorithm
to solve \eqref{eq:optprob_state_full} or \eqref{eq:optprob_state}.

\begin{lemma}
\label{thm:qubitprops}
Given a set of Hermitian operators $\{D_i\}_i$ such that $\tr(D_i D_j) = \delta_{ij}$ and $\tr(D_i) = 0$,
let $\{\lambda_k^i\}_k$ and $\{\ket{\psi_k^i}\}_k$ be their eigenvalues and eigenvectors, respectively,
i.e. $D_i \ket{\psi_k^i} = \lambda_k^i \ket{\psi_k^i}$. If they act on qubits, the following properties are
satisfied:

\begin{enumerate}
\item $\lambda_1^i = - \lambda_2^i$
\item $\bra{\psi_k^i} D_j \ket{\psi_k^i} = \lambda_k^j \delta_{ij}$
\item $D_i^2 = \frac{1}{2} \one$
\item $(\lambda_k^i)^2 = \frac{1}{2}$
\end{enumerate}

\end{lemma}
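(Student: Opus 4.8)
The plan is to use that the wall subsystem is a qubit, so each $D_i$ is a $2\times 2$ traceless Hermitian matrix normalized by $\tr(D_i^2)=1$; hence $D_i=\tfrac{1}{\sqrt2}\,\hat n_i\cdot\vec\sigma$ for some real unit vector $\hat n_i$ (one can equivalently run the whole argument in the explicit parametrization $D_i=\left(\begin{smallmatrix}a & b\\ \bar b & -a\end{smallmatrix}\right)$ with $a\in\mathbb R$ and $2(a^2+|b|^2)=1$, avoiding any appeal to the Pauli basis). From this representation all four claims reduce to elementary algebra.

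First I would dispatch the three ``static'' properties. Property~1 is immediate, since $\lambda_1^i+\lambda_2^i=\tr(D_i)=0$. For Property~3, a direct computation gives $D_i^2=(a^2+|b|^2)\one$ (equivalently $\tfrac12(\hat n_i\cdot\vec\sigma)^2=\tfrac12|\hat n_i|^2\one$), and the normalization $\tr(D_i^2)=1$ forces the coefficient to be $\tfrac12$, so $D_i^2=\tfrac12\one$. Property~4 then follows at once: the eigenvalues of $D_i^2=\tfrac12\one$ are all $\tfrac12$, whence $(\lambda_k^i)^2=\tfrac12$; together with Property~1 this pins the spectrum of each $D_i$ to $\{+\tfrac1{\sqrt2},-\tfrac1{\sqrt2}\}$.

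The main obstacle is Property~2, specifically the off-diagonal case $i\ne j$ (for $i=j$ it is just the eigenvalue equation together with $\delta_{ii}=1$). The key intermediate fact is the anticommutation relation $\{D_i,D_j\}=\tr(D_iD_j)\,\one=\delta_{ij}\one$, which I would get from the identity $AB+BA=\tr(AB)\,\one$ valid for all $2\times2$ traceless Hermitian $A,B$ (checked by direct computation, or read off from $(\hat a\cdot\vec\sigma)(\hat b\cdot\vec\sigma)=(\hat a\cdot\hat b)\one+\iu(\hat a\times\hat b)\cdot\vec\sigma$). Given $\{D_i,D_j\}=0$ for $i\ne j$, applying $D_i$ to $D_j\ket{\psi_k^i}$ yields $D_iD_j\ket{\psi_k^i}=-D_jD_i\ket{\psi_k^i}=-\lambda_k^i\,D_j\ket{\psi_k^i}$, so $D_j\ket{\psi_k^i}$ lies in the $(-\lambda_k^i)$-eigenspace of $D_i$; by Property~4 we have $\lambda_k^i\ne0$, hence $-\lambda_k^i\ne\lambda_k^i$, and since ${\cal H}_\mathfrak{w}\simeq\mathbb C^2$ the two eigenspaces of the Hermitian $D_i$ are orthogonal lines. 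Therefore $D_j\ket{\psi_k^i}\perp\ket{\psi_k^i}$ and $\bra{\psi_k^i}D_j\ket{\psi_k^i}=0=\lambda_k^j\delta_{ij}$, which with the $i=j$ case closes the proof.

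No estimate beyond these algebraic identities is needed, so there is no genuine analytic difficulty; the only real care is bookkeeping the $i=j$ versus $i\ne j$ split in Property~2 and recording that $\lambda_k^i\ne0$ (from Property~4) is exactly what makes the two eigenspaces of $D_i$ distinct. These four properties are precisely the inputs used in the proof of Proposition~\ref{thm:optimwall} to collapse $\Gamma_2(\ket w)=\sum_i s_i^2\,\mathrm{Var}_{\ket w}(D_i)$ into a form from which the optimality statement can be read off.
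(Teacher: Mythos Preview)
Your proof is correct. The arguments for Properties~1, 3, and~4 are essentially the same as the paper's (the paper states Property~3 as ``a consequence of $D_i=D_i^\dagger$, property (1) and $\tr(D_i^2)=1$'' and leaves Property~4 implicit, which amounts to what you wrote).

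The genuine difference is in Property~2 for $i\ne j$. The paper does not invoke anticommutation at all: it simply expands $\tr(D_iD_j)=0$ in the eigenbasis $\{\ket{\psi_1^i},\ket{\psi_2^i}\}$ of $D_i$ to obtain $\lambda_1^i\bigl(\bra{\psi_1^i}D_j\ket{\psi_1^i}-\bra{\psi_2^i}D_j\ket{\psi_2^i}\bigr)=0$, and combines this with $\tr(D_j)=\bra{\psi_1^i}D_j\ket{\psi_1^i}+\bra{\psi_2^i}D_j\ket{\psi_2^i}=0$ to conclude both diagonal entries vanish. Your route instead establishes the stronger structural fact $\{D_i,D_j\}=\tr(D_iD_j)\,\one$ (equivalently the Pauli-algebra identity) and then reads off that $D_j$ maps each eigenspace of $D_i$ into the other. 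Both arguments need $\lambda_k^i\ne0$; you flag this explicitly via Property~4, while the paper uses it tacitly. The paper's computation is a bit shorter and uses only the given hypotheses $\tr(D_iD_j)=\delta_{ij}$, $\tr(D_j)=0$ directly; your approach buys the anticommutation relation as a byproduct, which is not needed elsewhere in the paper but is a cleaner explanation of \emph{why} the expectation vanishes.
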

\begin{proof}
Property (1) is a direct consequence of $\tr(D_i) = \lambda_1^i + \lambda_2^i = 0$.
Property (2) is also obtained by direct calculation. Assume $i \neq j$, then
\begin{equation*}
\begin{split}
\tr(D_i D_j) =&
\tr(D_j \lambda_1^i \ketbra{\psi_1^i}{\psi_1^i} + D_j \lambda_2^i \ketbra{\psi_2^i}{\psi_2^i}) \\
=& \lambda_1^i \bra{\psi_1^i} D_j \ket{\psi_1^i} + \lambda_2^i \bra{\psi_2^i} D_j \ket{\psi_2^i} \\
=& \lambda_1^i (\bra{\psi_1^i} D_j \ket{\psi_1^i} - \bra{\psi_2^i} D_j \ket{\psi_2^i}) = 0.
\end{split}
\end{equation*}
Therefore, $\bra{\psi_1^i} D_j \ket{\psi_1^i} = \bra{\psi_2^i} D_j \ket{\psi_2^i}$. But we also
have that $\tr(D_i) = \bra{\psi_1^i} D_j \ket{\psi_1^i} + \bra{\psi_2^i} D_j \ket{\psi_2^i} = 0$.
This implies that $\bra{\psi_1^i} D_j \ket{\psi_1^i} = \bra{\psi_2^i} D_j \ket{\psi_2^i} = 0$ whenever
$i \neq j$. On the other hand, if $i = j$, we have that $\bra{\psi_k^i} D_i \ket{\psi_k^i} = \lambda_k^i$.
Putting the two cases together proves property (2).

Finally, property (3) is a consequence of $D_i = D_i^\dagger$, property (1) and $\tr(D_i^2) = 1$.

\end{proof}

With these properties in hand, we have the tools necessary to prove Proposition \ref{thm:optimwall}.

\begin{proposition*}
    Given $H_\mathfrak{lw}$ in the OSD form as in (\ref{eq:osd_ham}), $n_\mathfrak{w} = 2$
    and $s_1 > s_i$ for $i > 1$, choosing $\ket{w} \in \mathrm{eigenvectors}(D_1)$
    minimizes $\Gamma_2(\ket{w})$ over $\bigcup_{i = 1}^3 \mathrm{eigenvectors}(D_i)$,
    and locally minimizes $\Gamma_2(\ket{w})$ over the set of pure states of ${\cal H}_\mathfrak{w}$
\end{proposition*}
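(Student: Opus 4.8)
The plan is to first simplify $\Gamma_2$ into a quadratic form on the Bloch sphere by means of Lemma~\ref{thm:qubitprops}, and then read off the minimizers by a direct eigenvalue comparison. By property~(3) of that lemma, $D_i^2=\tfrac12\one$ for every traceless $D_i$, so writing $\langle X\rangle_w=\bra w X\ket w$ we have
\begin{equation*}
{\rm Var}_{\ket w}(D_i)=\langle D_i^2\rangle_w-\langle D_i\rangle_w^2=\tfrac12-\langle D_i\rangle_w^2,
\end{equation*}
and therefore $\Gamma_2(\ket w)=\tfrac12\sum_i s_i^2-\sum_i s_i^2\langle D_i\rangle_w^2$. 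Since the first term is a constant, minimizing $\Gamma_2$ over any set of pure states is equivalent to maximizing $g(\ket w):=\sum_i s_i^2\langle D_i\rangle_w^2$ over that set.

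For the first claim, suppose $\ket w$ is an eigenvector of $D_k$ for some $k\in\{1,2,3\}$. Property~(2) of Lemma~\ref{thm:qubitprops} gives $\langle D_i\rangle_w=0$ for $i\neq k$, and property~(4) gives $\langle D_k\rangle_w^2=\tfrac12$, so $g(\ket w)=\tfrac12 s_k^2$. Because $s_1>s_i$ for $i>1$, this value is strictly maximal for $k=1$; hence every eigenvector of $D_1$ minimizes $\Gamma_2$ over $\bigcup_{i=1}^3\mathrm{eigenvectors}(D_i)$, with value strictly below that attained at any eigenvector of $D_2$ or $D_3$.

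For the local-minimality claim I would pass to the Bloch representation $\ketbra{w}{w}=\tfrac12(\one+\vec n\cdot\vec\sigma)$ with $\|\vec n\|=1$. Each traceless $D_i$ can be written $D_i=\vec a^{(i)}\cdot\vec\sigma$, and $\tr(D_iD_j)=\delta_{ij}$ forces $\vec a^{(i)}\cdot\vec a^{(j)}=\tfrac12\delta_{ij}$, so $R:=[\sqrt2\,\vec a^{(i)}]_i$ is an orthogonal matrix and $\langle D_i\rangle_w=\vec a^{(i)}\cdot\vec n=\tfrac1{\sqrt2}(R\vec n)_i$. Putting $\vec m:=R\vec n$, again a unit vector, we obtain $g(\ket w)=\tfrac12\sum_i s_i^2 m_i^2$, a quadratic form to be maximized over the unit sphere. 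The strict inequalities $s_1^2>s_i^2$ ($i>1$) make $\vec m=\pm e_1$ the unique maximizers, i.e. $\vec n=\pm R^{\top}e_1$, which is precisely the Bloch vector of an eigenvector of $D_1$; and the reduced Hessian of $\sum_i s_i^2 m_i^2$ on the tangent space at those points has eigenvalues $2(s_i^2-s_1^2)<0$, so $\Gamma_2$ attains a strict local minimum there (in fact a global one).

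The step I expect to require the most care is the bookkeeping around the operator-Schmidt rank. If $\hat H_\mathfrak{lw}$ has fewer than three nonzero singular values, some $D_i$ are absent and $\vec m$ is constrained to a coordinate subspace; one must check the argument still goes through, which it does since $s_1$ still strictly dominates. One must also keep in mind that the trivial term $D_{n_\mathfrak{u}^2}=\sigma_0^\mathfrak{w}$ is \emph{not} traceless, so Lemma~\ref{thm:qubitprops} applies only to $D_1,\dots,D_{n_\mathfrak{u}^2-1}$; this costs nothing since $s_{n_\mathfrak{u}^2}=0$. Finally, the hypothesis $s_1>s_i$ is exactly what makes the local minimizer isolated and forces it to be an eigenvector of $D_1$: with a degeneracy $s_1=s_2$ one would instead obtain a whole circle of minimizers and the eigenvector characterization would fail.
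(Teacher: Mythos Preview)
Your argument is correct and in fact establishes more than the proposition states: the Bloch-sphere reduction shows that the eigenvectors of $D_1$ are the \emph{global} minimizers of $\Gamma_2$ over all pure wall states, not merely local ones. The paper's own proof takes a different route. It works intrinsically with the Riemannian structure on the complex sphere: it first shows, using Lemma~\ref{thm:qubitprops}, that the Riemannian gradient \eqref{eq:osd_gamma_grad} vanishes at every eigenvector of every $D_j$, then computes the Riemannian Hessian at an eigenvector of $D_1$ and bounds it below by $s_1^2>0$ to certify a strict local minimum, and finally evaluates $\Gamma_2(\ketbra{\psi_k^j}{\psi_k^j})=1-s_j^2/2$ directly to compare across the $D_j$'s. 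Your approach replaces the gradient/Hessian machinery by the elementary observation that $\Gamma_2=\tfrac12\sum_i s_i^2-\tfrac12\vec m^{\,\top}\mathrm{diag}(s_1^2,s_2^2,s_3^2)\vec m$ with $\vec m$ on $S^2$, which makes both claims immediate via the Rayleigh-quotient characterization of quadratic-form extrema. The paper's method has the advantage of tying directly into the numerical optimization framework it develops; yours is shorter, more transparent, and yields the stronger global statement. Your closing remarks on the OSD bookkeeping (the non-traceless $D_{n_\mathfrak u^2}$ carrying $s_{n_\mathfrak u^2}=0$, and the role of the strict gap $s_1>s_2$) are well placed.
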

\begin{proof}

The Riemannian gradient \eqref{eq:osd_gamma_grad} of $\Gamma_2(\ket{w})$ is equal to zero if and only if
\begin{equation}
\begin{split}
\sum_i s^2_i (D_i^2
- 2 \bra{w} D_i \ket{w} D_i) \ket{w} =
\sum_i s^2_i (\bra{w} D_i^2 \ket{w}
- 2 (\bra{w} D_i \ket{w})^2) \ket{w},
\end{split}
\end{equation}
i.e. if $\ket{w}$ is an eigenstate of $\sum_i s^2_i (D_i^2 - 2 \bra{w} D_i \ket{w} D_i)$.

Since $n_\mathfrak{w} = 2$, i.e. ${\cal H}_\mathfrak{w}$ is a qubit system, we can use the properties given in Lemma \ref{thm:qubitprops}.
By applying property 3 of lemma \ref{thm:qubitprops}, the previous expression simplifies to
\begin{equation}
\sum_{i=1}^3 s^2_i \bra{w} D_i \ket{w} D_i \ket{w} =
\sum_{i=1}^3 s^2_i (\bra{w} D_i \ket{w})^2 \ket{w},
\end{equation}
i.e. $\ket{w}$ must be an eigenstate of $\sum_i s^2_i \bra{w} D_i \ket{w} D_i$. Consider now property
2 of lemma \ref{thm:qubitprops} and $\ket{w} = \ket{\psi_k^j}$. We get that
$\sum_i s^2_i \bra{w} D_i \ket{w} D_i = s^2_j \lambda_k^j D_j$, i.e. the eigenvectors
of each of the operators $D_i$ are critical points of the function $\Gamma_2(\rho_\mathfrak{w})$ for qubit walls.

In order to test if these critical points are minima of $\Gamma_2(\rho_\mathfrak{w})$, we can
compute the Riemannian Hessian at these points and check that it is positive definite.
The Euclidean Hessian is given by
\begin{equation}
\begin{split}
\nabla^2 \Gamma_2(\rho_\mathfrak{w}) =& \sum_i s^2_i (4 D_i^2 - 8 \bra{w} D_i \ket{w} D_i
- 8 D_i \ketbra{w}{w} D_i).
\end{split}
\end{equation}
The Riemannian Hessian can be computed as
\[
{\rm Hess } \,\Gamma_2(\rho_\mathfrak{w}) [\vec{\epsilon}] =
(\one - \ketbra{w}{w}) \nabla^2 \Gamma_2(\rho_\mathfrak{w}) \vec{\epsilon} -
\vec{\epsilon} \bra{w} (\ketbra{w}{w} \nabla \Gamma_2(\rho_\mathfrak{w})),
\] where $\vec{\epsilon} \in {\cal T}_{\ket{w}} {\cal M}$, resulting in
\begin{equation}
\begin{split}
{\rm Hess}\, \Gamma_2(\rho_\mathfrak{w}) [\vec{\epsilon}] =&
(\one - \ketbra{w}{w}) \\
& (\sum_i s^2_i (4 D_i^2 - 8 \bra{w} D_i \ket{w} D_i
- 8 D_i \ketbra{w}{w} D_i)) \vec{\epsilon} \\
& - \sum_i s^2_i (2 \bra{w} D_i^2 \ket{w} - 4 (\bra{w} D_i \ket{w})^2) \vec{\epsilon}.
\end{split}
\end{equation}

By imposing that the wall be a qubit system and $\ket{w}$ be an eigenstate
of $D_1$, the Hessian simplifies to
\begin{equation}
{\rm Hess} \Gamma_2(\rho_\mathfrak{w}) [\vec{\epsilon}] = (10 s^2_1 - 3 \sum_{i=1}^3 s^2_i) \vec{\epsilon}
\geq s^2_1 \vec{\epsilon}.
\end{equation}
Since $s^2_1 > 0$, we have that taking $\ket{w}$ as an eigenstate of $D_1$ gives a local
minimum of $\Gamma_2(\rho_\mathfrak{w})$.

Consider now equation (\ref{eq:osd_gamma}), applying properties 2-4 of lemma \ref{thm:qubitprops} and
taking $\ket{w} = \ket{\psi_k^j}$, we get obtain
\begin{equation}
\Gamma_2(\ketbra{\psi_k^j}{\psi_k^j}) = 1 - \frac{s^2_j}{2},
\end{equation}
which is minimized over all $j$ when $j = 1$.
\end{proof}

\clearpage

\bibliographystyle{iopart-num}
\bibliography{ref}

\providecommand{\newblock}{}
\begin{thebibliography}{10}
\expandafter\ifx\csname url\endcsname\relax
  \def\url#1{{\tt #1}}\fi
\expandafter\ifx\csname urlprefix\endcsname\relax\def\urlprefix{URL }\fi
\providecommand{\eprint}[2][]{\url{#2}}

\bibitem{preskillQuantumComputingNISQ2018}
Preskill J 2018 {\em Quantum\/} {\bf 2} 79 (\textit{Preprint} \eprint{1801.00862})

\bibitem{lidarDecoherenceFreeSubspacesQuantum1998}
Lidar D~A, Chuang I~L and Whaley K~B 1998 {\em Physical Review Letters\/} {\bf 81} 2594--2597

\bibitem{zanardiNoiselessQuantumCodes1997}
Zanardi P and Rasetti M 1997 {\em Physical Review Letters\/} {\bf 79} 3306--3309

\bibitem{benattiIrreversibleQuantumDynamics2003}
Benatti F and Floreanini R (eds) 2003 {\em Irreversible Quantum Dynamics\/} ({\em Lecture Notes in Physics\/} no 622) (Springer)

\bibitem{violaDynamicalDecouplingOpen1999}
Viola L, Knill E and Lloyd S 1999 {\em Physical Review Letters\/} {\bf 82} 2417--2421

\bibitem{knillTheoryQuantumErrorcorrecting1997}
Knill E and Laflamme R 1997 {\em Physical Review A\/} {\bf 55} 900--911

\bibitem{knillTheoryQuantumError2000}
Knill E, Laflamme R and Viola L 2000 {\em Physical Review Letters\/} {\bf 84} 2525--2528

\bibitem{brunQuantumErrorCorrection2020}
Brun T~A 2020 {\em Quantum {{Error Correction}}\/} (Oxford University Press)

\bibitem{KodjastehDynamically2009}
Khodjasteh K and Viola L 2009 {\em Phys. Rev. Lett.\/} {\bf 102}(8) 080501

\bibitem{ticozziQuantumInformationEncoding2010}
Ticozzi F and Viola L 2010 {\em Physical Review A\/} {\bf 81} 032313

\bibitem{wangNumericalMethodFinding2013}
Wang X, Byrd M and Jacobs K 2013 {\em Physical Review A\/} {\bf 87} 012338 (\textit{Preprint} \eprint{1212.3839})

\bibitem{hamannApproximateDecoherenceFree2022}
Hamann A, Sekatski P and D\"ur W 2022 {\em Quantum Science and Technology\/} {\bf 7} 025003 (\textit{Preprint} \eprint{2106.13828})

\bibitem{ticozziDynamicalDecouplingQuantum2006}
Ticozzi F and Ferrante A 2006 {\em Systems \& Control Letters\/} {\bf 55} 578--584

\bibitem{maGeneralMethodologyDecoupling2011}
Ma F and Morzfeld M 2011 {\em Procedia Engineering\/} {\bf 14} 2498--2502

\bibitem{khodjastehPointerStatesEngineered2011}
Khodjasteh K, Dobrovitski V~V and Viola L 2011 {\em Physical Review A\/} {\bf 84} 022336

\bibitem{casanovaStabilizingWallStates2024}
Casanova M, Cortese M and Ticozzi F 2024 {\em IEEE Control Systems Letters\/} {\bf 8} 712--717

\bibitem{goriniCompletelyPositiveDynamical1976}
Gorini V, Kossakowski A and Sudarshan E~C~G 1976 {\em Journal of Mathematical Physics\/} {\bf 17} 821--825

\bibitem{lindbladGeneratorsQuantumDynamical1976}
Lindblad G 1976 {\em Communications in Mathematical Physics\/} {\bf 48} 119--130

\bibitem{lidarQuantumErrorCorrection2013}
Lidar D~A and Brun T~A (eds) 2013 {\em Quantum {{Error Correction}}\/} 1st ed (Cambridge University Press)

\bibitem{ticozziQuantumMarkovianSubsystems2008}
Ticozzi F and Viola L 2008 {\em IEEE Transactions on Automatic Control\/} {\bf 53} 2048--2063

\bibitem{casanovaFindingQuantumCodes2025}
Casanova M, Ohki K and Ticozzi F 2025 {\em Quantum Science and Technology\/} {\bf 10} 025027

\bibitem{mansurogluVariationalHamiltonianSimulation2023}
Mansuroglu R, Eckstein T, N\"utzel L, Wilkinson S~A and Hartmann M~J 2023 {\em Quantum Science and Technology\/} {\bf 8} 025006 ISSN 2058-9565

\bibitem{luchnikovRiemannianGeometryAutomatic2021}
Luchnikov I~A, Krechetov M~E and Filippov S~N 2021 {\em New Journal of Physics\/} {\bf 23} 073006

\bibitem{kimuraBlochVectorSpaceNLevel2005}
Kimura G and Kossakowski A 2005 {\em Open Systems \& Information Dynamics\/} {\bf 12} 207--229

\bibitem{satoRiemannianOptimizationIts2021}
Sato H 2021 {\em Riemannian {{Optimization}} and {{Its Applications}}\/} {{SpringerBriefs}} in {{Electrical}} and {{Computer Engineering}} (Springer International Publishing)

\bibitem{wiersemaOptimizingQuantumCircuits2023}
Wiersema R and Killoran N 2023 {\em Physical Review A\/} {\bf 107} 062421 (\textit{Preprint} \eprint{2202.06976})

\bibitem{schulte-herbruggenGRADIENTFLOWSOPTIMIZATION2010}
Schulte-Herbr\"uggen T, Glaser S~J, Dirr G and Helmke U 2010 {\em Reviews in Mathematical Physics\/} {\bf 22} 597--667

\bibitem{misraZenoParadoxQuantum1977}
Misra B and Sudarshan E~C~G 1977 {\em Journal of Mathematical Physics\/} {\bf 18} 756--763

\bibitem{facchiThreeDifferentManifestations2003}
Facchi P and Pascazio S 2003 Three {{Different Manifestations}} of the {{Quantum Zeno Effect}} {\em Irreversible {{Quantum Dynamics}}\/} vol 622 ed Benatti F and Floreanini R (Springer Berlin Heidelberg) pp 141--156

\bibitem{altafiniModelingControlQuantum2012}
Altafini C and Ticozzi F 2012 {\em IEEE Transactions on Automatic Control\/} {\bf 57} 1898--1917

\bibitem{ticozziAnalysisSynthesisAttractive2009}
Ticozzi F and Viola L 2009 {\em Automatica\/} {\bf 45} 2002--2009

\bibitem{ticozziHamiltonianControlQuantum2012}
Ticozzi F, Lucchese R, Cappellaro P and Viola L 2012 {\em IEEE Transactions on Automatic Control\/} {\bf 57} 1931--1944

\bibitem{pachosThreeSpinInteractionsOptical2004}
Pachos J~K and Plenio M~B 2004 {\em Physical Review Letters\/} {\bf 93} 056402

\bibitem{zhangCriterionQuantumZeno2018}
Zhang J~M, Jing J, Wang L~G and Zhu S~Y 2018 {\em Physical Review A\/} {\bf 98} 012135

\bibitem{guardiolaSphericalDirichletDistribution2020}
Guardiola J~H 2020 {\em Journal of Statistical Distributions and Applications\/} {\bf 7} 6 (\textit{Preprint} \eprint{2506.04441})

\bibitem{lanFlexibleBayesianDynamic2020a}
Lan S, Holbrook A, Elias G~A, Fortin N~J, Ombao H and Shahbaba B 2020 {\em Bayesian Analysis\/} {\bf 15}

\bibitem{bertlmannBlochVectorsQudits2008}
Bertlmann R~A and Krammer P 2008 {\em Journal of Physics A: Mathematical and Theoretical\/} {\bf 41} 235303

\end{thebibliography}

\end{document}